\newif\ifstoc
\stoctrue %last one wins
\stocfalse

\documentclass[letterpaper,11pt]{article}
\usepackage{typearea}
\paperwidth 8.5in \paperheight 11in
\typearea{14}
\usepackage[compact]{titlesec}
\usepackage{theorem}

\ifstoc
  \usepackage{times}
  \usepackage{wrapfig}
\else
\fi

\newcommand{\stocoption}[2]{{\ifstoc%
#1%
\else%
#2%
\fi}}
\newcommand{\full}[1]{\ifstoc\else#1\fi}
\newcommand{\short}[1]{\ifstoc#1\fi}

\usepackage{latexsym,graphicx}
\usepackage{amsmath,amssymb,enumerate}
\usepackage{boxedminipage,wrapfig}
\usepackage{xspace}
\usepackage{bm}
\usepackage{ifpdf}
\usepackage{color}
\usepackage[compact]{titlesec}

%\usepackage{algorithm}
%\usepackage[noend]{algpseudocode}
%\numberwithin{algorithm}{section}

%\usepackage{subfigure}
%\usepackage{verbatim}
\usepackage{mathrsfs}

\allowdisplaybreaks

\definecolor{Darkblue}{rgb}{0,0,0.4}
\definecolor{Brown}{cmyk}{0,0.81,1.,0.60}
\definecolor{Purple}{cmyk}{0.45,0.86,0,0}
\newcommand{\mydriver}{hypertex}
\ifpdf
 \renewcommand{\mydriver}{pdftex}
\fi
\usepackage[pdfusetitle,breaklinks,\mydriver]{hyperref}
\hypersetup{colorlinks=true,%pdfborder={1 1 1 [3]},%
            citebordercolor={.6 .6 .6},linkbordercolor={.6 .6 .6},%
citecolor=Darkblue,urlcolor=black,linkcolor=Darkblue,pagecolor=black}
\newcommand{\lref}[2][]{\hyperref[#2]{#1~\ref*{#2}}}

\makeatletter
 \setlength{\parindent}{0pt}
 \addtolength{\partopsep}{-2mm}
 \setlength{\parskip}{5pt plus 1pt}
 \addtolength{\theorempreskipamount}{-1mm}
 \addtolength{\theorempostskipamount}{-1mm}
 \addtolength{\abovedisplayskip}{-3mm}
 \addtolength{\textheight}{35pt}
 \addtolength{\footskip}{-20pt}
\makeatother

\newtheorem{theorem}{Theorem}[section]
\newtheorem{definition}[theorem]{Definition}

\newtheorem{lemma}[theorem]{Lemma}
\newtheorem{fact}[theorem]{Fact}

\newtheorem{claim}[theorem]{Claim}

\newtheorem{corollary}[theorem]{Corollary}

\newcommand{\Fs}{\mathscr{F}^{\star}}
\newcommand{\Fss}{\mathscr{F}^{\star\star}}
\newcommand{\Tss}{\mathscr{T}^{\star\star}}
\newcommand{\graph}[1]{{G^{(#1)}}}
\newcommand{\Kss}{\mathscr{K}^{\star\star}}
\newcommand{\fcl}[1]{{\mathscr{F}^{(#1)}}}
\newcommand{\tcl}[1]{{T^{(#1)}}}
\newcommand{\act}{{\mathsf{A}}}

%\newenvironment{proof}{{\bf Proof:  }}{\hfill\rule{2mm}{2mm}}
%\newenvironment{proofof}[1]{{\bf Proof of #1:  }}{\hfill\rule{2mm}{2mm}}
% \newenvironment{proof}{

% \noindent{\bf Proof:}}
% {\hfill$\blacksquare$

% }

\newenvironment{proofof}[1]{

\noindent{\bf Proof of {#1}:}}
{\hfill$\blacksquare$

}

% Modify text
% --------------------------------------------
\newcommand{\junk}[1]{}
\newcommand{\ignore}[1]{}

%%\newcommand{\R}[0]{{\ensuremath{\mathbb{R}}}}

   % use instead of $|x|$
% use instead of $\|x\|$
 %index for constraints

\newcommand{\sse}{\subseteq}

\newcommand{\C}{{\mathscr{C}}}
\newcommand{\E}{{\mathscr{E}}}
\newcommand{\A}{{\mathscr{A}}}
\newcommand{\B}{{\mathscr{B}}}
\newcommand{\D}{{\mathscr{D}}}
\newcommand{\ti}{j}
\newcommand{\F}{{\mathscr{F}}}
\newcommand{\G}{{\mathscr{G}}}

\newcommand{\I}{{\mathscr{I}}}
\newcommand{\M}{{\mathcal{M}}}
\newcommand{\R}{{\mathcal{R}}}

\newcommand{\ts}{\textstyle}
\renewcommand{\sp}{{\hspace*{0.1 in}}}

\newcommand{\rom}[1]{\uppercase\expandafter{\romannumeral #1}}

\newcommand{\OPT}{\ensuremath{\mathsf{opt}\xspace}}

\newcommand{\cost}{\ensuremath{\mathsf{cost}\xspace}}
\newcommand{\del}{\ensuremath{\mathsf{Del}\xspace}}
\newcommand{\length}{\ensuremath{\mathsf{length}\xspace}}
\newcommand{\pl}{\ensuremath{\mathsf{\psi}\xspace}}
\newcommand{\clus}[1]{{\C}^{(#1)}}
\newcommand{\stagecl}[1]{{\C}^{#1}}
\newcommand{\clusr}[2]{{\C}^{(#1)}_{#2}}

\newcommand{\activecl}[1]{\smash{{\C}^{(#1)}_{{\small active}}}}

\newcommand{\Ts}{T^\star}
\newcommand{\alive}{\ensuremath{\mathsf{alive}\xspace}}
\renewcommand{\time}{\ensuremath{\mathsf{time}\xspace}}
\newcommand{\charge}{\ensuremath{\mathsf{charge}\xspace}}
\newcommand{\head}{\ensuremath{\mathsf{head}\xspace}}

\newcommand{\one}[2]{{\bf 1}[#1,#2]}

\newcommand{\timedglut}{\ensuremath{\texttt{TimedGlut}}\xspace}
\newcommand{\level}{\ensuremath{\mathsf{level}}\xspace}
\newcommand{\leader}{\ensuremath{\mathsf{leader}}\xspace}

\newcommand{\forestone}[1]{\smash{F_1^{(#1)}}}
\newcommand{\foresttwo}[1]{\smash{F_2^{(#1)}}}
\newcommand{\moatsone}[1]{{\mathfrak{M}_1^{(#1)}}}
\newcommand{\stageg}[1]{{H^{(#1)}}}
\newcommand{\pairs}[1]{{{\cal P}^{(#1)}}}
\newcommand{\pairsgood}[1]{{{\cal P}_G^{(#1)}}}

\newcommand{\countbad}[1]{\smash{p_B^{(#1)}}}
\newcommand{\pairsbad}[1]{\smash{{\cal P}_B^{(#1)}}}
\newcommand{\pairsdrop}[1]{\smash{{\cal P}_X^{(#1)}}}

\newcommand{\cee}{2}
\newcommand{\ceesq}{4}
\newcommand{\ceesqplusone}{5}

\newcommand{\TPD}{\texttt{TimedPD}\xspace}
\newcommand{\updateforest}{\texttt{UpdateForest}\xspace}
\newcommand{\gammatg}{\gamma_{{\scriptscriptstyle TG}}}

\newcounter{note}[section]

%\renewcommand{\thenote}{\arabic{note}}

%Ravi's Additions
\newcommand{\qedsymb}{\hfill{\rule{2mm}{2mm}}}
\newenvironment{proof}{\begin{trivlist} \item[\hspace{\labelsep}{\bf
\noindent Proof.\/}] }{\qedsymb\end{trivlist}}%

\newcommand{\initOneLiners}{%
    \setlength{\itemsep}{0pt}
    \setlength{\parsep }{0pt}
    \setlength{\topsep }{0pt}
%      \usecounter{myLISTctr}
}
\newenvironment{OneLiners}[1][\ensuremath{\bullet}]
    {\begin{list}
        {#1}
        {\initOneLiners}}
    {\end{list}}

\newcommand{\squishlist}{
 \begin{list}{$\bullet$}
  { \setlength{\itemsep}{0pt}
     \setlength{\parsep}{3pt}
     \setlength{\topsep}{3pt}
     \setlength{\partopsep}{0pt}
     \setlength{\leftmargin}{1.5em}
     \setlength{\labelwidth}{1em}
     \setlength{\labelsep}{0.5em} } }

\newcommand{\squishend}{
  \end{list}  }

\newcommand{\stf}{Steiner forest\xspace}
\newcommand{\width}{\mathsf{width}\xspace}
\newcommand{\gs}{{strict}\xspace}
\newcommand{\us}{{uni-strict}\xspace}
%%%%%%%%%%%%%%%%%%%%%%%%%%%%%%%%%%%%%%%%%%%%%%%%%%%%%%%%%%%%%%%%%%%%%%%%%%

\begin{document}
\title{Greedy Algorithms for Steiner Forest}

\author{
Anupam Gupta\thanks{Computer Science Department, Carnegie Mellon
    University, Pittsburgh, PA 15213, USA. Research partly supported by
    NSF awards CCF-1016799 and CCF-1319811.}
\and Amit Kumar\thanks{Dept. of Computer Science and Engg., IIT Delhi,
  India 110016.}
}
\date{}
\maketitle
\thispagestyle{empty}
\stocoption{}{
\thispagestyle{empty}
\setcounter{page}{0}
%\vspace{-7mm}
}

\begin{abstract}
  In the Steiner Forest problem, we are given terminal pairs $\{s_i,
  t_i\}$, and need to find the cheapest subgraph which connects each of
  the terminal pairs together. In 1991, Agrawal, Klein, and Ravi, and
  Goemans and Williamson gave primal-dual constant-factor approximation
  algorithms for this problem; until now the only constant-factor
  approximations we know are via linear programming relaxations.

  \medskip\noindent
  In this paper, we consider the following greedy algorithm:
  \begin{quote}
    \emph{Given terminal pairs in a metric space, a terminal is active
      if its distance to its partner is non-zero. Pick the two closest
      active terminals (say $s_i, t_j$), set the distance between
      them to zero, and buy a path connecting them.  Recompute the
      metric, and repeat.}
  \end{quote}
  It has long been open to analyze this greedy algorithm. Our main
  result: this algorithm is a constant-factor approximation.

  \medskip\noindent
  We use this algorithm to give new, simpler constructions of
  cost-sharing schemes for Steiner forest. In particular, the first
  ``\gs'' cost-shares for this problem implies a very simple
  combinatorial sampling-based algorithm for stochastic Steiner forest.
\end{abstract}

\newpage

\setcounter{page}{1}
\section{Introduction}

In the Steiner forest problem, given a metric space and a set of
source-sink pairs $\{s_i, t_i\}_{i = 1}^K$, a feasible solution is a
forest such that each source-sink pair lies in the same tree in this
forest. The goal is to minimize the cost, i.e., the total length of
edges in the forest. This problem is a generalization of the Steiner
tree problem, and hence APX-hard. The constant-factor approximation
algorithms currently known for it are all based on linear programming
techniques. The first such result was an influential primal-dual
$2$-approximation due to Agrawal, Klein, and Ravi~\cite{AKR95}; this was
simplified by Goemans and Williamson~\cite{GW95} and extended to many
``constrained forest'' network design problems. Other works have since
analyzed the integrality gaps of the natural linear program, and for some
stronger LPs; see \autoref{sec:related-work}.

However, no constant-factor approximations are known based on ``purely
combinatorial'' techniques. Some natural algorithms have been proposed,
but these have defied analysis for the most part. The simplest is the
\emph{paired greedy algorithm} that repeatedly connects the
yet-unconnected $s_i$-$t_i$ pair at minimum mutual distance; this is no
better than $\Omega(\log n)$ (see Chan, Roughgarden, and
Valiant~\cite{CRV10} or \stocoption{the full version}{Appendix~\ref{sec:girth-lbd}}).
 Even greedier is the
so-called \emph{gluttonous algorithm} that connects the closest two
yet-unsatisfied terminals regardless of whether they were ``mates''.
The performance of this algorithm has been a long-standing open
question.  Our main result settles this question.
\begin{theorem}
  \label{thm:main-intro}
  The gluttonous algorithm is a constant-factor approximation for
  Steiner Forest.
\end{theorem}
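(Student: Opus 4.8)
The plan is to analyze the gluttonous algorithm through a primal--dual lens: I would charge its cost against a \emph{feasible} dual solution of the standard Steiner forest LP, which by weak duality has value at most $\OPT$, so it suffices to exhibit a feasible dual whose value is $\Omega(1)$ times the cost incurred. As a first step I would replace the ``pick the closest active pair'' rule by a \emph{timed} variant -- call it $\timedglut$ -- that proceeds in geometric scales $R_0 < R_1 < \cdots$ with $R_{i+1} = 2 R_i$: at scale $i$ it transitively merges all currently active components lying within distance $R_i$ of each other, buying the connecting paths. The easy structural facts one needs here are that the distances at which gluttonous makes its connections are essentially non-decreasing, that contractions only shrink the metric, and that rounding to powers of two distorts distances by a constant; together these let one show that the cost of $\timedglut$ dominates, up to a constant, the cost of gluttonous, so it is enough to bound $\timedglut$.

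For the dual, picture every active component $C$ carrying a moat $B(C,t)$ that grows at unit rate in the time parameter $t$ until $C$ is merged into a larger component; when a connection of length $\approx d$ fuses two active components their moats have just reached radius $\approx d/2$. Assigning $y$-values to this laminar family of component-moats in the standard way gives $\sum_S y_S = \Theta(\cost(\timedglut))$. Dual feasibility requires that for every edge $e$ of a fixed optimal forest $T^\star$ we have $\sum_{S : e \in \delta(S)} y_S = O(c_e)$; summing over $e \in T^\star$ then yields $\sum_S y_S = O(\cost(T^\star)) = O(\OPT)$, and (after discarding redundant bought edges, if necessary) $\cost(\timedglut) = O(\OPT)$. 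For the classical process that grows one moat per active \emph{component} this is precisely the Agrawal--Klein--Ravi/Goemans--Williamson analysis; the whole difficulty is that a single component may contain many simultaneously active terminals, each independently triggering a connection, so the number of paths bought ``out of $C$ at scale $R_i$'' -- and hence the growth one must pay for -- can vastly exceed the single unit the component-moat accounts for.

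The crux is therefore a \emph{consolidation} statement: no component can profit, by more than a constant factor, from holding many active terminals at once. I would fix an active component $C$ at time $t$, consider its active terminals $u_1, \dots, u_k$ with partners $v_1, \dots, v_k \notin C$, and group the indices according to where the $T^\star$-path from $u_i$ to $v_i$ first leaves the moat $B(C,t)$. Using the greedy/minimality choice and the triangle inequality one argues that distinct groups force $T^\star$ to contain near-disjoint sub-paths crossing $\partial B(C,t)$, so that $\OPT$ already pays $\Omega(t)$ per group, while the number of groups controls, up to a constant factor, the component-moat growth one charges against $C$ in that time window. Equivalently, one maintains a potential equal to the number of ``excess'' active terminals within each region and shows every connection of length $d$ either eliminates a constant fraction of that excess -- so its cost telescopes -- or else reflects honest component growth that $T^\star$ covers. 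Pinning down the right notions of ``region'' and ``group'', and coupling this combinatorial potential to geometric crossings of $T^\star$, is where essentially all the work lies, and I expect it to be the main obstacle; the surrounding primal--dual scaffolding and the final pruning step should be routine.
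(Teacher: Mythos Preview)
Your plan diverges from the paper's. The paper uses no LP duality; it gives a purely combinatorial charging argument in two steps. First, at a factor-$2$ loss one may assume the optimal forest $\Fs$ is \emph{faithful} to gluttonous' final clustering---whenever gluttonous merges two supernodes they already lie in a common tree of $\Fs$---which reduces the analysis to a single tree $\Ts$. Second, one maintains alongside gluttonous a shrinking copy of $\Ts$ on the current supernodes: each merge collapses two nodes, creates a cycle, deletes the highest-potential edge on that cycle, and short-cuts degree-$2$ Steiner nodes. The key structural lemma is that at any moment a constant fraction of the current tree's edges have length at least a constant fraction of the current merging distance; a Hall-type matching then charges each merge to a deleted edge, giving total merging cost $O(\cost(\Ts))$.

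Your proposal has a genuine gap precisely at the step you yourself flag as ``the main obstacle.'' The consolidation sketch fixes an active component $C$, takes its active terminals $u_1,\dots,u_k$, and traces their $T^\star$-paths to their partners $v_j$. But gluttonous does not connect $C$ to the supernodes containing the $v_j$; it connects $C$ to whichever active supernodes happen to be closest in the current punctured metric, and those supernodes' unsatisfied pairs may sit in parts of $T^\star$ nowhere near $C$. So grouping by ``where the $u_j$--$v_j$ path exits $B(C,t)$'' does not control the number of connections gluttonous buys out of $C$, and the $\Omega(t)$-per-group charge to $\OPT$ need not cover them. A second issue you do not address: your moats live in the punctured metric $\M/\C$, which changes at every iteration, so turning them into a feasible dual for the LP---whose edge capacities are original lengths in $\M$---needs an argument you have not supplied. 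Finally, the reduction ``gluttonous $\le O(1)\cdot \timedglut$'' is asserted but not obvious: the two algorithms produce different clusterings (in $\timedglut$ a supernode stays active on a clock, regardless of whether its terminals have met their mates), so neither's set of bought paths dominates the other's; the paper in fact analyzes $\timedglut$ by re-running the same combinatorial charging, not by comparing it to gluttonous.
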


We then apply this result to obtain a simple combinatorial approximation
algorithm for the \emph{two-stage stochastic version} of the Steiner
forest problem. In this problem, we are given a probability distribution
$\pi$ defined over subsets of demands. In the first stage, we can buy
some set $E_1$ of edges. Then in the second stage, the demand set is
revealed (drawn from $\pi$), and we can extend the set $E_1$ to a
feasible solution for this demand set. However, these edges now cost
$\sigma > 1$ times more than in the first stage. The goal is to minimize
the total expected cost. It suffices to specify the set $E_1$---once the
actual demands are known, we can augment using our favorite
approximation algorithm for Steiner forest. Our simple algorithm is the
following: sample $\lceil\sigma\rceil$ times from the distribution
$\pi$, and let $E_1$ be the Steiner forest constructed by (a slight
variant of) the gluttonous algorithm on union of these $\lceil
\sigma\rceil$ demand sets sampled from $\pi$.
\begin{theorem}
  \label{thm:stoc-main}
  There is a combinatorial (greedy) constant-factor approximation
  algorithm for the stochastic Steiner forest problem.
\end{theorem}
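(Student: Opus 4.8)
The plan is to run the gluttonous algorithm inside the \emph{boosted sampling} framework of Gupta, P\'{a}l, Ravi, and Sinha, using Theorem~\ref{thm:main-intro} to control the first-stage cost and a \emph{strict} cost-sharing scheme derived from the gluttonous algorithm to control the second-stage cost. Fix the two-stage instance; let $Z^\star$ denote the optimal cost and $F^\star$ its first-stage edge set, so that $Z^\star = \cost(F^\star) + \sigma\cdot\ex{\text{Aug}(F^\star,D)}$, where $D\sim\pi$ and $\text{Aug}(F,S)$ is the minimum cost of edges whose addition to $F$ yields a feasible forest for demand set $S$; let $\OPT(S)$ be the optimal Steiner forest cost on $S$. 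The algorithm draws $k:=\lceil\sigma\rceil$ independent samples $D_1,\dots,D_k\sim\pi$, lets $E_1$ be the output of (a timed variant of) the gluttonous algorithm on $D_1\cup\cdots\cup D_k$, and, once the real demand $D_0\sim\pi$ is revealed, augments $E_1$ to a feasible forest for $D_0$ via any constant-factor approximation. I bound $\ex{\cost(E_1)}$ and the expected second-stage cost $\sigma\cdot\ex{\text{Aug}(E_1,D_0)}$ separately.

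For the first stage, Theorem~\ref{thm:main-intro} is a deterministic guarantee, so pointwise $\cost(E_1)\le\alpha\cdot\OPT(D_1\cup\cdots\cup D_k)$ for an absolute constant $\alpha$. Since $F^\star$ together with a cheapest augmentation for each $D_i$ is feasible for $\bigcup_{i=1}^k D_i$, we get $\OPT(\bigcup_i D_i)\le\cost(F^\star)+\sum_{i=1}^k\text{Aug}(F^\star,D_i)$; taking expectations and using $k\le 2\sigma$ gives $\ex{\cost(E_1)}\le\alpha\bigl(\cost(F^\star)+2\sigma\cdot\ex{\text{Aug}(F^\star,D)}\bigr)\le 2\alpha\cdot Z^\star$.

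For the second stage I would extract from the timed-gluttonous run a nonnegative cost share $\xi(S,j)$ for every demand set $S$ and pair $j\in S$ --- roughly, the length of the path bought when pair $j$ first becomes inactive, split between its two terminals --- and prove two properties: \emph{(i) competitiveness}, $\sum_{j\in S}\xi(S,j)\le\beta_1\cdot\OPT(S)$, which should follow from (a mild strengthening of) the charging argument behind Theorem~\ref{thm:main-intro}, or from the associated timed primal-dual process; and \emph{(ii) strictness}, $\text{Aug}\bigl(\text{Glut}(S),\,S\cup T\bigr)\le\beta_2\sum_{j\in T}\xi(S\cup T,j)$ for all $S,T$. Given (i) and (ii), the standard exchangeability argument finishes the proof: put $U:=D_0\cup\cdots\cup D_k$ and (without loss of generality) treat the $k+1$ samples as disjoint; then (ii) gives $\text{Aug}(E_1,D_0)\le\beta_2\sum_{j\in D_0}\xi(U,j)$, and since $(D_0,\dots,D_k)$ is exchangeable while $U$ is a symmetric function of them, $\ex{\sum_{j\in D_0}\xi(U,j)}=\tfrac1{k+1}\ex{\sum_{j\in U}\xi(U,j)}\le\tfrac{\beta_1}{k+1}\ex{\OPT(U)}$; bounding $\ex{\OPT(U)}\le 3Z^\star$ exactly as in the first-stage step and multiplying by $\sigma\le k+1$ yields an expected second-stage cost of at most $3\beta_1\beta_2\cdot Z^\star$. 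Summing the two bounds proves Theorem~\ref{thm:stoc-main}.

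The crux is property (ii). A constant-strict cost-sharing scheme for Steiner forest has resisted analysis precisely because the trajectory of a greedy or primal-dual process on $S\cup T$ can look very different from its trajectory on $S$: the demands in $T$ change which terminals are active and reorder the merges, so one must show that the edges needed to retrofit $\text{Glut}(S)$ for the $T$-pairs are paid for, up to a constant, by the shares of $T$ in the \emph{combined} instance $S\cup T$. Coupling the gluttonous algorithm to a continuous moat-growing schedule --- the timed variant --- is what makes this comparison tractable, since it aligns the greedy merges with growing dual moats whose radii supply the cost shares. By contrast, property (i) and the bookkeeping comparing $\lceil\sigma\rceil$ with $\sigma$ and handling repeated demands across samples are routine.
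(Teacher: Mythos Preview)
Your high-level plan---boosted sampling, a constant-factor first-stage bound via Theorem~\ref{thm:main-intro}, and an exchangeability argument driven by strict cost shares---is exactly the framework the paper uses, and your first-stage computation and the symmetrization step are correct. The substantive divergence is in how the strictness property is realized.

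You propose proving $\mathrm{Aug}(\mathrm{Glut}(S),T)\le\beta_2\sum_{j\in T}\xi(S\cup T,j)$ with (timed) gluttonous as the first-stage algorithm $\A$. The paper does \emph{not} do this: for full strictness it takes $\A$ to be the \emph{timed primal-dual} algorithm \TPD\ (with activity times inflated by a constant), while the cost shares $\chi$ are defined via \timedglut. The reason is stated explicitly: gluttonous tends to buy fewer edges than primal-dual (no dead-moat connections), which is bad for strictness---we want $\A$ to lay down as much infrastructure as possible so that the augmentation cost for $T$ is small. The proof then establishes a stage-by-stage correspondence between the moats of \TPD$(\D_1)$ and the supernodes of \timedglut$(\D_1\cup\D_2)$, classifies the \timedglut\ merges into ``good'' (leader in $\D_2$), ``bad'' (both leaders in $\D_1$), and ``dropped'' pairs, and shows that the bad edges are at most a constant fraction of the whole while the good edges are covered by the $\D_2$ shares. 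Your sketch (``coupling gluttonous to a continuous moat-growing schedule'') gestures at this, but if you literally keep gluttonous as $\A$, the paper's own discussion suggests the strictness inequality may fail, and in any case the argument you would need is not the one you describe.

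Two smaller discrepancies: the paper's cost shares are assigned to \emph{leaders} at each merge (a terminal $s$ collects $\cee^{i+1}/(2\gammatg)$ whenever it leads a supernode that merges in stage $i$), not ``the path length when the pair first becomes inactive''; and budget balance follows directly from the \timedglut\ approximation bound rather than from a separate charging argument. These are details, but the hybrid $\A=\TPD$, $\chi=\timedglut$ choice is the key idea you are missing.
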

Showing that such a ``boosted sampling'' algorithm obtained a constant
factor approximation had proved elusive for several years now; the only
constant-factor approximation for stochastic Steiner forest was a
complicated primal-dual algorithm with a worse approximation
factor~\cite{GuptaK09}. Our result is based on the first cost sharing
scheme for the Steiner forest problem which is constant \gs with respect
to a constant factor approximation algorithm; see
\autoref{sec:cost-shares} for the formal definition.  Such a cost
sharing scheme can be used for designing approximation algorithms for
several stochastic network design problems for the Steiner forest
problem. In particular, we obtain the following results:
\begin{itemize}
\item For multi-stage stochastic optimization problem for Steiner
  forest, our \gs-cost sharing scheme along with the fact that it is
  also cross-monotone implies the first $O(1)^k$-approximation
  algorithm, where $k$ denotes the number of stages
  (see~\cite{GuptaPRS04} for formal definitions and the relation with
  cost sharing).
\item Consider the online stochastic problem, where given a set of
  source-sink pairs $\D$ in a metric $\M$, and a probability
  distribution $\pi$ over subsets of $\D$ (i.e., over $2^{\D}$), an
  adversary chooses a parameter $k$, and draws $k$ times independently
  from $\pi$.  The on-line algorithm, which can sample from $\pi$, needs
  to maintain a feasible solution over the set of demand pairs produced
  by the adversary at all time. The goal is to minimize the expected
  cost of the solution produced by the algorithm, where the expectation
  is over $\pi$ and random coin tosses of the algorithm.  Our cost
  sharing framework gives the first constant competitive algorithm for
  this problem, generalizing the result of Garg et al.~\cite{GargGLS08}
  which works for the special case when $\pi$ is a distribution over
  $\D$ (i.e., singleton subsets of $\D$).
\end{itemize}

\subsection{Ideas and  Techniques}
\label{sec:our-techniques}

We first describe the gluttonous algorithm. Call a terminal {\em active} if
it is not yet connected to its mate. Recall: our algorithm merges the
two active terminals that are closest in the current metric (and hence
zeroes out their distance). At any point of time, we have a collection of
{\em supernodes}, each supernode corresponding to the set of terminals
which have been merged together. A supernode is \emph{active} if it
contains at least one active terminal. Hence the algorithm can be
alternatively described thus: merge the two active supernodes that are
closest (in the current metric) into a new supernode. (A formal
  description of the algorithm appears in \S\ref{sec:algo}.)

The analysis has two conceptual steps. In the first step, we reduce
the problem to the special case when the optimal solution can be
(morally) assumed to be a single tree (formally, we reduce to the case
where the gluttonous' solution is a refinement of the optimal solution). The
proof for this part is simple: we take an optimal forest, and show that we
can connect two trees in the forest if the gluttonous algorithm connects
two terminals lying in these two trees, incurring only a factor-of-two
loss.

\begin{wrapfigure}{R}{0.3\textwidth}
  \centering
  \includegraphics[width=0.2\textwidth]{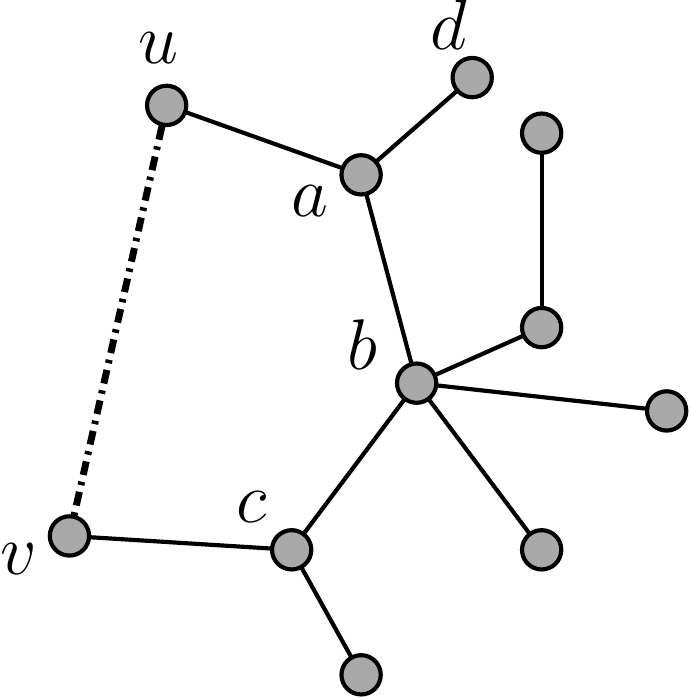}
\caption{\label{fig:tree} \footnotesize Example showing the construction of tree $T'$ from $T$, which is shown in solid lines. If we merge $u$ and $u'$,
we can remove the edge $(a,b)$ to get the tree $T'$. Assuming $a$ is not active, we can also  short-cut the degree 2 vertex $a$ in $T'$ by replacing the
edges $(s,a)$ and $(a,d)$ with the edge $(s,d)$.}

\end{wrapfigure}

The second step of the analysis starts with the tree solution $T$
promised by the first step of the analysis. As the gluttonous algorithm
proceeds, the analysis alters $T$ to maintain a candidate solution to
the current set of supernodes. E.g., if we merge two active supernodes
$u$ and $v$ to get a new supernode $uv$. We want to alter the solution $T$
on the original supernodes to get a new solution $T'$, say by removing
an edge from the (unique) $u$-$v$ path in $T$, and then short-cutting
any degree two inactive supernode in $T'$ (see Figure~\ref{fig:tree} for
an example). The hope is to argue that the distance between $u$ and
$v$---which is the cost incurred by gluttonous---is commensurate to the
cost of the edge of $T$ which gets removed during this process. This
would be easy if there were a long edge on $u$-$v$ path in the tree $T$.
The problem: this may not hold for every pair of supernodes we merge.
Despite this, our analysis shows that the bad cases cannot happen too
often, and so we can perform this charging argument in an amortized
sense.

Our analysis is flexible and extends to other variants of the gluttonous
algorithm.
A natural variant is one where, instead of merging the two
closest active supernodes, we contract the edges on a shortest path
between the two closest active supernodes.
\full{The first step of the above
analysis does not hold any more.  However, we show that it is enough to
account for the merging cost of supernodes when the active terminals in
them lie in the same tree of the optimal solution, and consequently the
arguments in the second step of the analysis are sufficient.}
Yet another
variant is a {\em timed} version of the algorithm, which is inspired by
a timed version of the primal-dual algorithm~\cite{KLSZ08}, and
is crucial for obtaining the \gs cost-shares described next.

%%For the stochastic algorithm, we follow the analysis framework from the
%%work of~\cite{GKPR},
%%for the \gs cost-sharing method for
%%Steiner forest.
 Loosely speaking, a cost-sharing method takes an
algorithm $\A$ and divides the cost incurred by the algorithm on an
instance among the terminals $\D$ in that instance. The ``strictness''
property ensures that if we partition $\D$ arbitrarily into $\D_1 \cup
\D_2$, and build a solution $\A(\D_1)$ on $\D_1$, then the cost-shares
of the terminals in $\D_2$ would suffice to augment the solution
$\A(\D_1)$ to one for $\D_2$ as well.

A natural candidate for $\A$ is the GW primal-dual algorithm, and the
cost-shares are equally natural: we divide up the cost of growing moats
among the active terminals in the moat. However, the example in
Figure~\ref{fig:ex} shows why this fails when $\D_2$ consists of just
the demand pair $\{s, \bar{s}\}$. When run on all the terminals, the
primal-dual algorithm stops at time~1, with all terminals getting a
cost-share of~1. On the other hand, if we run $\A$ on $\D_1$, it finds a
solution which has $N$ connected components, each connecting $s_i$ and
${\bar s}_i$ for $i=1, \ldots, N$. Then connecting $s$ and $\bar s$
costs $2N$, which is much more than their total cost share.

\begin{figure}[h]
\begin{center}
  \includegraphics[height=15mm]{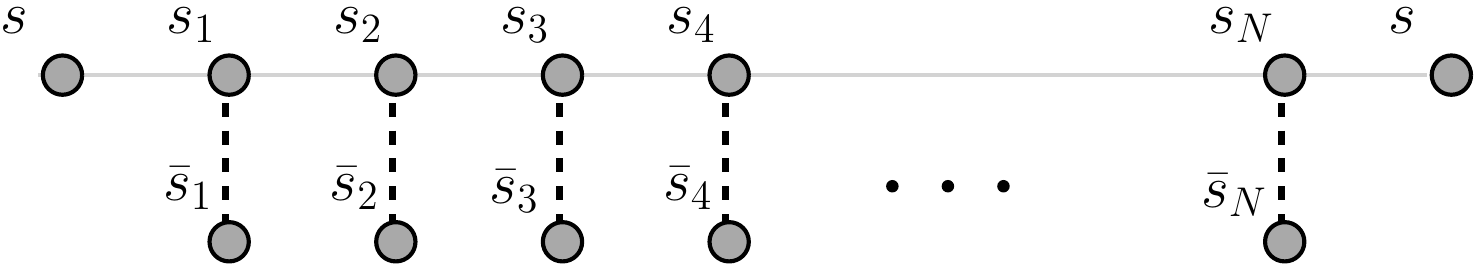}
\caption{\footnotesize Distances $d(s_i, {\bar s}_i)$ are 2 for all $i$. Further, $d(s_i, s_{i+1})=d({\bar s}_i,
{\bar s}_{i+1}) = 2$ for $i=1, \ldots, N-1$. The distances $d(s,s_1)$ and $d(s_N, {\bar s})$ are slightly larger
than 2. The dotted lines indicate the forest returned by the GW primal-dual algorithm when run on the demand set
$\{(s_i, {\bar s}_i): i=1, \ldots, N\}$}
\label{fig:ex}
\end{center}
\end{figure}

To avoid this problem,~\cite{GKPR, FKLS10} run the primal-dual algorithm
for longer than required, and give results for the case when $\D_2$
contains a single demand pair. However, the arguments become much more
involved than those in the analysis of GW algorithm~\cite{GW95}---the main reason is the presence of
``dead'' moats which cause some edges to become tight, and the cost
shares of active terminals cannot account for such edges. In our case,
the combinatorial (greedy) nature of our algorithm/analysis means  we
do not face such issues. As a result, we can obtain such strict cost sharing
methods (when $\D_2$ is a singleton set)
with much simpler analysis, albeit with worse constants than
those in~\cite{GKPR,FKLS10}). We refer to this special case of strictness property as {\em uni-strictness}.

Our analysis for the general  case where $\D_2$ contains multiple
demand pairs requires considerably more work; but note that these are
the first known strict cost shares for this case, the previous
primal-dual techniques could not handle the complexity of this general
case. Here, we want $\A$ to build as many edges as possible, and the
cost share $\chi$ to be as large as possible. Since the gluttonous
algorithm tends to build fewer edges than primal-dual (the dead moats
causing extra connections and more edges), we end up using the
primal-dual algorithm as the algorithm $\A$. However, to define the
cost-shares, we use the (timed) gluttonous algorithm in order to avoid
the issues with dead moats. The analysis then proceeds via showing a
close correspondence between the primal-dual and gluttonous
algorithms. Although this is not involved, it needs to carefully match
the two runs.

\subsubsection{Outline of Paper}

We first describe some related work in \autoref{sec:related-work}, and
give some important definitions in \autoref{sec:preliminaries}. Then we
describe the gluttonous algorithm formally in \autoref{sec:algo}, and
then analyze this algorithm in \autoref{sec:analysis}. \full{We then show that
our analysis is flexible enough to analyze several variants of the gluttonous
algorithm. We study the the timed version in
\autoref{sec:timed-version}, which gets used in subsequent sections on cost sharing.
We also consider the variant of gluttonous based on path-contraction in the appendix
(see \autoref{sec:projected}).
The cost-sharing method for the \us case is in \autoref{sec:cost-shares}, and the general case is
in \autoref{sec:group-strict}. }
\short{
We briefly describe the timed version in \autoref{sec:timed-version}, which gets
used in the subsequent section on cost sharing. The cost-sharing scheme and its
properties are described in~\autoref{sec:cost-shares}. The details of the analysis
and the simpler analysis in the \us case are deferred to the full version.
We also analyse the variant of the gluttonous which relies on contract shortest path
between supernodes in the full version. }

\subsection{Related Work}
\label{sec:related-work}

The first constant-factor approximation algorithm for the Steiner forest
problem was due to Agrawal, Klein, and Ravi~\cite{AKR95} using a
primal-dual approach; it was refined and generalized by Goemans and
Williamson~\cite{GW95} to a wider class of network design problems. The
primal-dual analysis also bounds integrality gap of the the natural LP
relaxation (based on covering cuts) by a factor of $2$. Different
approximation algorithms for Steiner forest based off the same LP, and
achieving the same factor of $2$, are obtained using the iterative
rounding technique of Jain~\cite{Jain98}, or the integer decomposition
techniques of Chekuri and Shepherd~\cite{CS08}. A stronger LP relaxation
was proposed by K\"onemann, Leonardi, and Sch\"afer~\cite{KLS05}, but it
also has an integrality gap of~$2$~\cite{KLSZ08}.

The special case of the Steiner tree problem, where all the demands
share a common (source) terminal, has been well-studied in the network
design community. There is a simple 2-approximation algorithm for this
problem: iteratively find the closest terminal to the source vertex, and
merge these two terminals. There have been several changes to this
simple greedy algorithm leading to improved approximation ratios~(see
e.g.~\cite{RobinsZ05}).  Byrka et al.~\cite{ByrkaGRS13} improved these
results to a $\ln 4 + \varepsilon \approx 1.46$-approximation algorithm,
which is based on rounding a stronger LP relaxation for this problem.

The stochastic Steiner tree/forest problem was defined by Immorlica,
Karger, Minkoff, and Mirrokni~\cite{IKMM04}, and further studied
by~\cite{GuptaPRS04}, who proposed the boosted-sampling framework of
algorithms. The analysis of these algorithms is via ``strict'' cost
sharing methods, which were studied by~\cite{GKPR,FKLS10}.  A
constant-factor approximation algorithm (with a large constant) was
given for stochastic Steiner forest by~\cite{GuptaK09} based on
primal-dual techniques; it is much more complicated than the algorithm
and analysis based on the greedy techniques in this paper.

\subsection{Preliminaries}
\label{sec:preliminaries}

Let $\M = (V,d)$ be a metric space on $n$ points; assume all distances
are either $0$ or at least~$1$. Let the \emph{demands} $\D \subseteq
\binom{V}{2}$ be a collection of source-sink pairs that need to be
connected. By splitting vertices, we may assume that the pairs in $\D$
are disjoint. A node is a \emph{terminal} if it belongs to some pair in
$\D$. Let $K$ denote the number of terminals pairs, and hence there are
 $2K$ terminals. For a terminal $u$, let $\bar{u}$ be the unique
vertex such that $\{u, \bar{u}\} \in \D$; we call $\bar{u}$ the
\emph{mate} of $u$.

For a \stf instance $\I = (\M, \D)$, a solution $\F$ to the instance $\I$
is a forest such that each pair $\{u, \bar{u}\} \in \D$ is contained
within the vertex set $V(T)$ for some tree $T \in \F$. For a tree $T =
(V, E_T)$, let $\cost(T) := \sum_{e \in E_T} d(e)$ be the sum of lengths
of edges in $T$. Let $\cost(\F) := \sum_{T \in \F} \cost(T)$ be the cost
of the forest $\F$. Our goal is to find  a solution of minimum cost.

\section{The Gluttonous Algorithm}
\label{sec:algo}

To describe the gluttonous algorithm, we need some definitions. Given a
\stf instance $\I = (\M, \D)$, a \emph{supernode} is a subset of
terminals. A \emph{clustering} $\C = \{S_1, S_2, \ldots, S_q\}$ is a
partition of the terminal set into supernodes. The \emph{trivial
  clustering} places each terminal in its own singleton supernode.  Our
algorithm maintains a clustering at all points in time.  Given a
clustering, a terminal $u$ is \emph{active} if it belongs to a supernode
$S$ that does not contain its mate $\bar{u}$. A supernode $S$ is
\emph{active} if it contains some active terminal. In the trivial
clustering, all the terminals and supernodes are active.

Given a clustering $\C = (S_1, S_2, \ldots, S_q)$, define a new metric
$\M/\C$ called the \emph{$\C$-puncturing} of metric $\M$. To get this,
take a complete graph on $V$; for an edge $\{u,v\}$, set its length to
be $d(u,v)$ if $u,v$ lie in different supernodes in $\C$, and to zero
if $u,v$ lie in the same supernode in $\C$.  Call this graph $G_\C$, and
defined the $\C$-punctured distance to be the shortest-path distance in
this graph, denoted by $d_{\M/\C}(\cdot, \cdot)$. One can think of this
as modifying the metric $\M$ by collapsing the terminals in
each of the supernodes in $\C$ to
a single node.  Given clustering $\C$
and two supernodes $S_1$ and $S_2$, the distance between them is
naturally defined as
\[ d_{\M/\C}(S_1, S_2) = \min_{u \in S_1, v \in S_2} d_{\M/\C}(u, v). \]
The gluttonous algorithm is as follows:
\begin{quote}
  Start  with $\C$ being the trivial clustering, and $E'$ being the
  empty set. While there exist active supernodes in $\C$, do the
  following:
  \begin{OneLiners}
  \item[(i)] Find active supernodes $S_1, S_2$ in $\C$ with minimum
    $\C$-punctured distance. (Break ties arbitrarily but consistently,
    say choosing the lexicographically smallest pair.)
  \item[(ii)] Update the clustering to
    \[ \C \gets (\C \setminus \{S_1, S_2\}) \cup \{ S_1 \cup S_2 \}, \]
  \item[(iii)] Add to $E'$ the edges corresponding to the inter-supernode
    edges on the shortest path between $S_1, S_2$ in the graph $G_\C$.
  \end{OneLiners}
  Finally, output a maximal acyclic subgraph $F$ of $E'$.
\end{quote}
Above, we say we \emph{merge} $S_1, S_2$ to get the new supernode $S_1 \cup
S_2$. The \emph{merging distance} for the merge of $S_1, S_2$ is the
$\C$-punctured distance $d_{\M/\C}(S_1, S_2)$, where  $\C$ is the clustering
just before the merge. Since each active supernode contains an active
terminal, if $u \in S_1$ and $v \in S_2$ are both active, then
when we talk about merging $u, v$, we mean merging $S_1,S_2$.

Note that the length of the edges added in step~(iii) is equal to
$d_{\M/\C}(S_1, S_2)$. The algorithm maintains the following invariant:
if $S$ is a supernode, then the terminals in $S$ lie in the same
connected component of $F$.\footnote{The converse is not necessarily
  true: if we connect $S_1$ and $S_2$ by buying edges connecting them
  both to some inactive supernode $S_3$, then $F$ has a tree connecting
  all three, but the clustering has $S_1 \cup S_2$ separate from
  $S_3$. Indeed, inactive supernodes never get merged again, whereas
  inactive trees may.} The algorithm terminates when there are no more
active terminals, so each terminal shares a supernode with its mate, and
hence the final forest $F$ connects all demand pairs. Since the edges
added to $E'$ have total length at most the sum of the merging
distances, and we output a maximal sub-forest of $E'$, we get:

\begin{fact}
  \label{fct:can-make-forest}
  The cost of the \stf solution output is  at most the sum of all
  the merging distances.
\end{fact}

We emphasize that the edges added in Step~(iii) are often overkill: the
metric $\M/E'$ (where the edges in $E'$ have been contracted) has no
greater distances than the metric $\M/\C$ that we focus on. The
advantage of the latter over the former is that distances in $\M/\C$ are
well-controlled (and distances between active terminals only increase
over time), whereas those in $\M/E'$ change drastically over time (with
distances between active terminals changing unpredictably).

\begin{wrapfigure}{R}{0.5\textwidth}
  \centering
  \includegraphics[height=15mm]{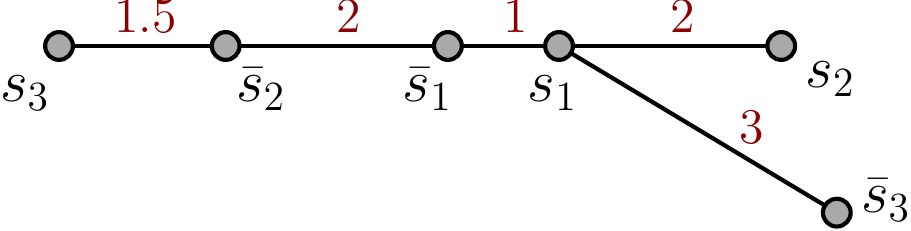}
  \caption{\footnotesize Figure for the gluttonous algorithm.}
  \label{fig:glut}
\end{wrapfigure}

Consider the example in Figure~\ref{fig:glut}, where the distances for
missing edges are inferred by computing shortest-path
distances. 

Here, we first merge $\{s_1, {\bar s}_1\}$ to form a supernode, say $A$,
which is inactive. Next we merge $s_3$ and ${\bar s}_2$ to form another
supernode, say $B$. The active supernodes are $B, s_2$, and $\bar{s}_3$,
so we next merge $s_2$ with $B$ to form supernode $C$, and finally merge
${\bar s}_3$ with $C$. When the algorithm ends, there are two (inactive)
supernodes corresponding to the sets $\{s_1, {\bar s}_1\}$ and $\{s_2,
s_3, {\bar s}_2, {\bar s}_3 \}$. However, the forest produced will have
only a single tree, which consists of the set of edges drawn in the figure.

\section{The Analysis for Gluttonous}
\label{sec:analysis}

We analyze the algorithm in two steps. One conceptual problem is in
controlling what happens when gluttonous connects two nodes in different
trees of the optimal forest. To handle this, we show in
\autoref{sec:near-optimal} how to preprocess the optimal forest $\Fs$ to
get a near-optimal forest $\Fss$ such that the final clustering of the
gluttonous algorithm is a refinement of this near-optimal forest. (I.e.,
if $u$ and $v$ are in the same supernode in the gluttonous clustering,
then they lie in the same tree in $\Fss$.)  This makes it easier to then
account for the total merging distance, which we do in
\autoref{sec:charging}. \short{A crucial observation, which makes the
  analysis much cleaner, is the following.}

\full{
\subsection{Monotonicity Properties}
\label{sec:mono}

To begin, some simple claims about monotonicity. The first one is by
definition.

\begin{fact}[Distance Functions are Monotone]
  \label{fct:dist-mono}
  Let the clustering $\C'$ correspond to a later time than the clustering
  $\C$. Then $\C$ is a refinement of $\C'$. Moreover, $d_{\M/\C'}(u, v)
  \leq d_{\M/\C}(u,v)$ for all $u, v \in V$.
\end{fact}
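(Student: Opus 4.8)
The plan is to get both claims directly from the mechanics of the gluttonous algorithm, in which the only operation that ever changes the clustering is step~(ii), the merge of two supernodes into their union.

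For the refinement claim, I would induct on the number of merge steps separating $\C$ from $\C'$. A single merge replaces $\{S_1,S_2\}$ by $\{S_1\cup S_2\}$ and leaves every other supernode untouched, so each block of the old clustering sits inside a block of the new one; composing these containments over all intervening steps shows that every supernode of $\C$ is contained in a supernode of $\C'$, i.e.\ $\C$ refines $\C'$. (As the statement notes, this part is essentially immediate from the definition of the algorithm.)

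For the distance comparison, recall that $d_{\M/\C}$ is the shortest-path metric of the complete graph $G_\C$ on $V$ whose edge $\{u,v\}$ carries weight $d(u,v)$ if $u,v$ lie in different supernodes of $\C$ and weight $0$ otherwise, and likewise for $G_{\C'}$. I would first show that $G_{\C'}$ dominates $G_\C$ edge by edge. Fix an edge $\{u,v\}$. If $u$ and $v$ lie in the same supernode of $\C'$, its $G_{\C'}$-weight is $0$, hence at most its $G_\C$-weight. Otherwise $u,v$ lie in distinct supernodes of $\C'$; since $\C$ refines $\C'$, they then also lie in distinct supernodes of $\C$, so the two weights agree (both equal $d(u,v)$). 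Thus every $u$-$v$ walk has weight in $G_{\C'}$ no larger than in $G_\C$; applying this to a shortest $u$-$v$ path of $G_\C$ yields $d_{\M/\C'}(u,v)\le d_{\M/\C}(u,v)$, as required.

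There is essentially no obstacle here. The only point requiring a little care is the direction of the relationship: the \emph{earlier}, finer clustering has the \emph{larger} punctured distances, since it has fewer zero-weight edges; using the refinement ($\C$ refines $\C'$) to pass to the edgewise domination of $G_{\C'}$ by $G_\C$ is the one place where a sign could slip.
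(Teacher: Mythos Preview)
Your proof is correct. The paper itself does not give a proof beyond remarking that the fact is ``by definition,'' and your argument is precisely the unpacking of that remark: merges only coarsen the partition, and coarsening can only zero out more edge weights in $G_\C$, hence shortest-path distances can only drop.
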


\begin{claim}
  \label{clm:close-mono}
  Consider clustering $\C$ and let any two active supernodes $S,T$ be
  merged, resulting in clustering $\C'$. Then for any active $U \in \C$
  that is not $S$ or $T$, the distance to its closest active supernode
  does not decrease. Also, if $S \cup T$ is active in $\C'$ then the
  distance to its closest supernode in $\C'$ is at least as large as the
  minimum of $S$ and $T$'s distances to their closest supernodes in $\C$.
\end{claim}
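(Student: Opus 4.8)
The plan is to deduce both halves of the claim from one elementary pull-back property of shortest paths under the single merge that produces $\C'$. There is no tension with \autoref{fct:dist-mono}: although all pairwise punctured distances can only drop when we merge, we are measuring distance to the nearest \emph{active} supernode, and a merge removes $S$ and $T$ from the clustering and inserts $S\cup T$ --- whose activity may go either way --- while changing no other supernode's membership, so the pool of active supernodes only shrinks, and in particular a supernode $W\neq S\cup T$ is active in $\C'$ if and only if it is active in $\C$. The two pull-back facts are: (a) if $P$ is a path in $G_{\C'}$ starting at a vertex outside $S\cup T$, and $x$ denotes the first vertex of $P$ lying in $S\cup T$ (take $x$ to be the far endpoint of $P$ if $P$ never enters $S\cup T$), then the prefix of $P$ up to $x$ is a path in $G_\C$ of exactly the same length, since all of its vertices but possibly $x$ lie in supernodes common to $\C$ and $\C'$ while $x$ lies in $S$ or in $T$, so every edge of this prefix joins two distinct supernodes of $\C$ and hence has unchanged length; and (b) if $P$ is a shortest path of $G_{\C'}$ with an endpoint in $S\cup T$, then --- $S\cup T$ having zero internal diameter --- we may shortcut $P$ so that it meets $S\cup T$ only at that endpoint $x\in S\cup T$, after which all of $P$ is a path in $G_\C$ of the same length starting at $x$.

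For the first part, fix an active supernode $U\neq S,T$ and let $W$ be the active supernode of $\C'$ nearest to $U$, so that $d_{\M/\C'}(U,W)$ is the quantity in question for $\C'$; then $W$ is either an active supernode of $\C$ distinct from $U$, or else $W=S\cup T$. Apply fact (a) to a shortest $U$--$W$ path $P$ of $G_{\C'}$. If $P$ avoids $S\cup T$, it is a $U$--$W$ path of the same length in $G_\C$ with $W$ active in $\C$ and distinct from $U$. Otherwise its prefix up to the first vertex $x\in S\cup T$ is a walk in $G_\C$, of length at most $\mathrm{len}(P)$, from $U$ to whichever of $S,T$ contains $x$ --- and both $S$ and $T$ are active in $\C$ and distinct from $U$. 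In either case $U$'s distance to its nearest active supernode in $\C$ is at most $d_{\M/\C'}(U,W)$, which is precisely the asserted monotonicity.

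For the second part, assume $S\cup T$ is active in $\C'$ and let $W$ be the active supernode of $\C'$ nearest to $S\cup T$; then $W$ is a supernode of $\C$, distinct from $S$ and from $T$, and active in $\C$. Apply fact (b) to a shortest path $P$ realizing $d_{\M/\C'}(S\cup T,W)$: after shortcutting, $P$ starts at some $x\in S\cup T$ and is a path of the same length in $G_\C$, so if $x\in S$ then $d_{\M/\C}(S,W)\le d_{\M/\C'}(S\cup T,W)$ and therefore $S$'s distance to its nearest active supernode in $\C$ is at most $d_{\M/\C'}(S\cup T,W)$; the case $x\in T$ is identical with $T$ in place of $S$. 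In either case $d_{\M/\C'}(S\cup T,W)$ is at least the smaller of $S$'s and $T$'s distances to their nearest active supernodes in $\C$, which is the claim. Since the argument never uses that $S,T$ were the closest active pair, it applies to any merge of two active supernodes, and reading ``supernode'' for ``active supernode'' throughout changes nothing.

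I expect the only delicate point to be the pull-back step itself --- checking that truncating a $G_{\C'}$-path at its first visit to $S\cup T$, and shortcutting a path within the zero-diameter cluster $S\cup T$, really do produce paths every edge of which retains its $G_\C$-length. This is immediate from the definition of the punctured graph, but it is where I would take the most care; the rest is routine bookkeeping.
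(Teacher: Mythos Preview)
Your proof is correct and follows essentially the same approach as the paper's. Both arguments rest on the same pull-back idea: a shortest path in $G_{\C'}$ either avoids $S\cup T$ entirely (and hence has identical length in $G_\C$) or enters it, in which case the prefix up to the first entry gives a path in $G_\C$ to one of the active supernodes $S$ or $T$. The paper phrases the first part as a contradiction (assume $L'<L$ and derive an active supernode in $\C$ closer than $L$) and for the second part simply asserts the identity $d_{\M/\C'}(S\cup T,W)=\min\{d_{\M/\C}(S,W),d_{\M/\C}(T,W)\}$ without spelling out the path surgery; you argue both parts directly and make the shortcutting explicit, but the content is the same.
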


\begin{proof}
  First observe that when two active supernodes are merged, they may
  stay active or become inactive. An inactive supernode never merges with
  any other supernode, and hence, it cannot become active later.

  For active supernode $U \neq S, T$, suppose its closest supernode in
  $\C$ was $W$ at $\C$-punctured distance $L$, and in $\C'$ it is $W'$
  at $\C'$-punctured distance $L'$. If $L' < L$, there must now be a path
  through supernode $S \cup T$ that is of length $L'$. But this means
  the $\C$-punctured distance of $U$ from either $S$ or $T$ was at most
  $L'$, and both were active in $\C$---a contradiction. This proves the
  first part of the claim.

  Now, suppose supernode $U := S \cup T$ is active in $\C'$. Observe
  that for any other supernode $W \in \C'$, the punctured distance
  $d_{\M/\C'}(U, W) = \min\{ d_{\M/\C}(S, W), d_{\M/\C}(T, W)\}$. This
  proves the second part of the claim.
\end{proof}
}

\begin{claim}[Gluttonous Merging Distances are Monotone]
  \label{clm:merge-mono}
  If $S,T$ are merged before $S', T'$ in gluttonous, then the merging
  distance for $S,T$ is no greater than the merging distance for $S',
  T'$.
\end{claim}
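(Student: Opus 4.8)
The plan is to prove this by contradiction, using Claim \ref{clm:close-mono} (the monotonicity claim just established) as the main tool. Suppose for contradiction that there are two consecutive merges in gluttonous — the merge of $S, T$ followed immediately by the merge of $S', T'$ — with merging distance $d_{\M/\C}(S,T) =: L$ strictly greater than $d_{\M/\C'}(S', T') =: L'$, where $\C$ is the clustering just before the first merge and $\C'$ the clustering just after it (equivalently, just before the second merge). It suffices to handle consecutive merges, since monotonicity then follows by transitivity along the whole sequence. The idea is to trace where the short distance $L'$ comes from and argue that it must have been available already in $\C$, contradicting the minimality of $L$.

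First I would observe that $S', T'$ are both active in $\C'$. There are two cases depending on how $S'$ and $T'$ relate to the newly formed supernode $S \cup T$. Case 1: neither $S'$ nor $T'$ equals $S \cup T$, so both $S'$ and $T'$ were already supernodes in $\C$ (distinct from $S, T$), and since activity can only be lost by merging, both were active in $\C$. By \autoref{fct:dist-mono}, $d_{\M/\C}(S', T') \le \ldots$ — wait, the inequality goes the wrong way; rather, I use the argument of Claim \ref{clm:close-mono}: any $\C'$-punctured path of length $L'$ between $S'$ and $T'$ either avoids $S \cup T$, in which case it is already a $\C$-punctured path of the same length (so $d_{\M/\C}(S',T') \le L' < L$, contradicting that the closest active pair in $\C$ was at distance $L$), or it passes through $S \cup T$, in which case it decomposes into a subpath from $S'$ to $S \cup T$ and a subpath from $S \cup T$ to $T'$, each of $\C'$-punctured length at most $L'$; since $S \cup T$ in $\C'$ corresponds to $S$ or $T$ in $\C$, one of $S, T$ is within $\C$-punctured distance $L'$ of an active supernode $S'$ or $T'$, again contradicting minimality of $L$ in $\C$. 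Case 2: one of them, say $T' = S \cup T$, is the merged supernode (and it is active in $\C'$). Then $S'$ is an active supernode in $\C'$ different from $T'$, hence an active supernode in $\C$. By the second part of Claim \ref{clm:close-mono}, $d_{\M/\C'}(S', S\cup T) = \min\{d_{\M/\C}(S', S), d_{\M/\C}(S', T)\}$, and both $S', S$ and $S', T$ are pairs of active supernodes in $\C$, so this minimum is at least $L$; thus $L' \ge L$, a contradiction.

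The main obstacle I anticipate is being careful about what ``merging distance'' means and how punctured distances interact with the contracted edges in $E'$: the algorithm adds actual edges to $E'$ in step (iii), and one might worry that contracting them changes the metric in a way that lets later merging distances shrink. But the key point, emphasized in the paper right before \autoref{fig:glut}, is that the merging distance is defined via $d_{\M/\C}$ — the puncturing of the clustering, not of $E'$ — and \autoref{fct:dist-mono} tells us $d_{\M/\C}$ is monotone as the clustering coarsens. So the only subtlety is the bookkeeping in Case 1 when the short path routes through the freshly merged supernode, which is exactly what the path-splitting argument of Claim \ref{clm:close-mono} handles. A minor point to get right is the tie-breaking in step (i): since ties are broken consistently, "merged before" is a genuine total order on merges, and the contradiction at equality ($L' = L$) is not needed — we only contradict the strict inequality $L' < L$, so the statement "no greater than" is what we prove.
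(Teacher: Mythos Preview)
Your proposal is correct and follows essentially the same approach as the paper: both proofs rest on Claim~\ref{clm:close-mono}. The paper's own proof is a two-sentence invocation of that claim (``gluttonous picks the minimum pair; by Claim~\ref{clm:close-mono} distances to the closest active supernode do not decrease''), whereas you have unpacked the case analysis and in Case~1 partially re-derived the path-splitting argument that already lives inside the proof of Claim~\ref{clm:close-mono}---so your version is more verbose than needed, but not wrong.
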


\full{
\begin{proof}
  Gluttonous merges two active supernodes with the smallest current
  distance.  By Claim~\ref{clm:close-mono} distances between the
  remaining active supernodes do not decrease. This proves this claim.
\end{proof}
}

\subsection{A Near-Optimal Solution with Good Properties}
\label{sec:near-optimal}

% We now show how to reduce our analysis to the case when the optimal
% solution has better properties with respect to the gluttonous algorithm.
Since gluttonous is deterministic and we break ties consistently, given
an instance \stf instance $\I = (\M, \D)$ there is a unique final
clustering $\C^f$  produced by the algorithm.

\begin{definition}[Faithful]
  \label{def:faithful}
  A forest $\F$ is \emph{faithful} to a clustering $\C$ if each
  supernode $S \in \C$ is contained within a single tree in~$\F$. (I.e.,
  for all $S \in \C$, there exists $T \in \F$ such that $S \sse V(T)$.)
\end{definition}
Note that every forest is faithful to the trivial clustering consisting
of singletons.
\full{
\begin{definition}[Width]
  \label{def:width}
  For a forest $\F$ that is a solution to instance $\I$, and for any
  tree $T \in \F$, let $\width(T)$ denote the largest tree distance
  between any pair connected by $T$. Let the width of forest $\F$ be the
  sum of the widths of the trees in $\F$. I.e.,
  \begin{align}
    \width(T) &:= \max \{ d_T(u, \bar{u}) \mid \{u, \bar{u}\} \in \D,
    \{u, \bar{u}\} \sse V(T) \}, \\
    \width(\F) &:= \textstyle \sum_{T \in F} \width(T),
  \end{align}
  where $d_T$ refers to the tree metric induced by $T$.
\end{definition}

We now show there exist near-optimal solutions
which are faithful to gluttonous' final clustering.
}
\short{The following theorem, whose proof is deferred to the full version,
shows that there exist near-optimal solutions
which are faithful to gluttonous' final clustering.}
\begin{theorem}[Low-Cost and Faithful]
  \label{thm:one-tree}
  Let $\Fs = \{T_1^\star, T_2^\star, \ldots, T_p^\star\}$ be an optimal
  solution to the \stf instance $\I = (\M, \D)$. There exists another
  solution $\Fss$ for instance $\I$ such that
  \begin{OneLiners}
  \item[(a)] $\cost(\Fss) \full{ \leq \cost(\Fs) + \width(\Fs)}  \leq
    2\cost(\Fs)$, and
  \item[(b)] $\Fss$ is faithful to the final clustering $\C^f$ produced
    by the gluttonous algorithm. % I.e., if the gluttonous algorithm
%     terminates with the clustering $\C^f$, then for each supernode $S$
%     in $\C^f$, there exists a tree in $\Fss$ that contains all the
%     terminals in $S$.
  \end{OneLiners}
\end{theorem}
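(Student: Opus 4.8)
The plan is to build $\Fss$ from $\Fs$ by processing the merges of the gluttonous algorithm one at a time, maintaining a forest that stays faithful to the current clustering while never increasing in cost by more than the width budget. Concretely, I would run gluttonous and watch the sequence of clusterings $\C^{(0)} = \text{trivial}, \C^{(1)}, \dots, \C^{(m)} = \C^f$, and inductively construct forests $\F^{(0)} = \Fs, \F^{(1)}, \dots, \F^{(m)} = \Fss$ such that $\F^{(t)}$ is faithful to $\C^{(t)}$. The base case is immediate since every forest is faithful to the trivial clustering. For the inductive step, suppose gluttonous merges active supernodes $S_1, S_2$ into $S_1 \cup S_2$, and $\F^{(t)}$ is faithful to $\C^{(t)}$. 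If $S_1$ and $S_2$ already lie in the same tree $T$ of $\F^{(t)}$, we set $\F^{(t+1)} = \F^{(t)}$ and faithfulness is preserved trivially. The interesting case is when $S_1 \subseteq V(T_a)$ and $S_2 \subseteq V(T_b)$ for distinct trees $T_a \ne T_b$ of $\F^{(t)}$: here we must splice $T_a$ and $T_b$ together into one tree, as suggested by Figure~\ref{fig:tree}.

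For the splicing step, pick active terminals $u \in S_1$, $v \in S_2$ (these exist since both supernodes are active). Since both $S_1$ and $S_2$ are active in $\C^{(t)}$, there is a demand pair $\{u, \bar u\}$ with $\bar u \notin S_1$; because $\F^{(t)}$ is a valid \stf solution, $u$ and $\bar u$ lie in the same tree, so $\bar u \in V(T_a)$, and likewise $\bar v \in V(T_b)$. We form $T'$ by adding a shortest-path edge (in $\M/\C^{(t)}$, or more simply a single metric edge realizing $d(u,v)$, since $\cost$ obeys the triangle inequality) between $u$ and $v$, creating a single tree on $V(T_a) \cup V(T_b)$ plus possibly a cycle; then we delete one edge from the resulting cycle — specifically an edge $(a,b)$ on the $\bar u$–$\bar v$ path that is \emph{not} needed to keep every demand pair inside $T_a \cup T_b$ connected. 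The key invariant to maintain here is that the total ``slack'' we have spent is at most $\width(\Fs)$: each time we add an edge of length $d(u,v)$, this length is bounded by $d_{T_a}(u,\bar u)$ or $d_{T_b}(v,\bar v)$ (the merging distance at this step is the minimum active-pair distance, and $u$–$\bar u$ is one such pair whose tree-distance upper bounds it via monotonicity, Claim~\ref{clm:merge-mono} and the fact that distances in $\M/\C$ only grow), so we can charge it against the width of a tree of $\Fs$ — and we must argue each such width term is charged only $O(1)$ times. Finally, short-cutting degree-$2$ inactive supernodes (as in the figure) only decreases cost and preserves faithfulness, so it is a safe cleanup operation.

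The main obstacle I anticipate is the charging/amortization: showing that the total length of all added edges is at most $\width(\Fs)$, i.e., that each tree $T^\star_i$ of $\Fs$ is ``used up'' by at most one charge and that the edge we choose to delete from the cycle genuinely frees up room. The subtlety is that after a splice, the new combined tree may itself participate in later merges, and we need a clean accounting — perhaps by associating each merge that joins two distinct current trees with a distinct \emph{tree of $\Fs$} that gets permanently ``consumed,'' using the fact that the number of tree-merges is at most $p - 1$ where $p = |\Fs|$, and that the deleted edge each time lies inside one of the original optimal trees. One has to be careful that the edge we delete actually exists (the cycle passes through an edge of one of the $T^\star_i$'s and not only through previously-added edges); handling this may require choosing the deleted edge to be the one maximizing tree-distance savings, or invoking that $\bar u, \bar v$ are separated appropriately. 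Once this bookkeeping is set up, conclusion (a) follows as $\cost(\Fss) \le \cost(\Fs) + \width(\Fs) \le 2\cost(\Fs)$ (since the width of any tree is at most its cost), and conclusion (b) is exactly the inductive invariant at $t = m$.
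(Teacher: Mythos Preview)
Your inductive framework---process the gluttonous merges one by one, maintain a forest faithful to the current clustering, and bound the added cost by $\width(\Fs)$---is exactly the paper's approach, and your bound $d_{\M/\C}(S_1,S_2)\le d_{T_a}(u,\bar u)\le \width(T_a)$ is the correct key inequality. But two things go wrong in the execution.

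First, the edge-deletion and short-cutting steps you borrow from Figure~\ref{fig:tree} do not belong here. That figure illustrates the \emph{second} stage of the analysis (the \texttt{UpdateForest} procedure of \autoref{sec:charging}), where two supernodes lying in the \emph{same} tree of the already-faithful forest are merged, creating a cycle. In the present theorem we are in the opposite situation: $S_1\subseteq V(T_a)$ and $S_2\subseteq V(T_b)$ with $T_a\neq T_b$, so adding a path between them produces no cycle at all, and there is nothing to delete. Your sentence ``the deleted edge each time lies inside one of the original optimal trees'' is therefore vacuous, and the whole deletion-based accounting collapses.

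Second, and this is what you correctly flag as the main obstacle, your charging sketch (``associate each cross-tree merge with a distinct tree of $\Fs$ that gets consumed'') is not yet a proof: the merged tree may participate in later merges, and you have not said which original $T_i^\star$ gets charged nor why no $T_i^\star$ is charged twice. The clean device the paper uses is a potential function: set $\Phi^{(t)}:=\width(\F^{(t)})=\sum_{T\in\F^{(t)}}\width(T)$. When trees $T_a,T_b$ (and possibly others touched by the connecting path) merge into a single tree $T'$, one has $\width(T')=\max\{\width(T_a),\width(T_b),\dots\}$, so $\Phi$ drops by at least $\min\{\width(T_a),\width(T_b)\}$, which by your own inequality dominates the length of the path added. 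Hence the total edge length added over all steps is at most $\Phi^{(0)}-\Phi^{(m)}\le \width(\Fs)$, giving $\cost(\Fss)\le\cost(\Fs)+\width(\Fs)\le 2\cost(\Fs)$. Once you replace the cycle-deletion idea with this width-potential argument, the proof goes through exactly as you outlined.
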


\full{
\begin{proof}
  Start with $\Fss = \Fs$ which clearly satisfies the first (cost)
  guarantee but perhaps not the second (faithfulness) one. To fix this,
  run the gluttonous algorithm on $\I$, and whenever it connects two
  terminals $(u,v)$ that violate the condition~(b), connect up some
  trees in the current $\Fss$ to prevent this violation. In particular,
  we show how to do this while maintaining two invariants:
  \begin{itemize}
  \item[(A)] The
    cost of  edges in $\Fss \setminus \Fs$ is at most $
    \width(\Fs) - \width(\Fss)$, and
  \item[(B)] at any point in time, the forest $\Fss$ is faithful to the
    current clustering $\C$ (during the run of the gluttonous algorithm).
% At any point of time, the clustering $\C$ produced by the
%     algorithm is a refinement of the current $\Fss$; i.e., if $S$ is a
%     supernode in $\C$, there exists a tree in $\Fss$ connecting all
%     the terminals in $S$.
    %% Suppose until the current point in time, gluttonous has
    %% created a forest $\F_G$. Then the clustering produced by $\F_G$ is a
    %% refinement of the current $\Fss$, i.e.,
  \end{itemize}
  At the beginning, the clustering $\C$ is the trivial clustering
  consisting of singleton sets containing terminals, and $\Fss =
  \Fs$; both invariants~(A) and~(B) are vacuously true.

  Now consider some step of gluttonous which starts with the clustering
  $\C$ and connects two active supernodes $S$ and $S'$ which are closest
  to each other to get the clustering $\C'$. By the invariant~(B), we
  know all terminals in supernode $S$ lie within the same tree in
  $\Fss$, and the same for terminals in $S'$. Let $u \in S$ and $v \in
  S'$ be some active terminals within these supernodes; hence $\bar u
  \not\in S$ and $\bar v \not\in S'$.
  % connects (active) terminals $u$ and $v$ which are closest to
  % each other in the current metric to get the clustering $\C'$. Let $S$
  % and $T$ denote the supernodes containing $u$ and $v$ respectively (in
  % the clustering $\C$).
  Two cases arise:

  \begin{itemize}
  \item \textbf{Case I:} $u$ and $v$ belong to the same tree in $\Fss$:
    Clearly, $\Fss$ satisfies the invariant~(B) with respect to $\C'$ as
    well. Hence, we keep $\Fss$ unchanged and it satisfies invariant~(A)
    trivially.

  \item \textbf{Case II:} $u$ and $v$ belong to different trees
    $T_1^{\star\star}, T_2^{\star\star} \in \Fss$: Suppose the shortest
    path between $u$ and $v$ in $\M/\C$ is
    \[ P = \{ u = x_0, x_0', x_1, x_1', x_2, x_2', x_3, \ldots,
    x_{k-1}', x_k, x_k' = v\} \] such that each $x_i, x_i'$ belong to
    the same supernode $S_i$ in $\C$ (see e.g., figure~\ref{fig:single}).  By the
    greedy behavior of gluttonous, $\cost(P)$ is at most the cost to
    connect $u$ to $\bar{u}$, or to connect $v$ to $\bar{v}$ in
    $\M/\C$. In fact, we can bound these costs by the cost of the edges
    between $u, \bar{u}$ in $T_1^{\star\star}$, etc.  Hence,
    \begin{eqnarray}
      \label{eq:cost}
      \cost(P) \leq \min \big\{ d_{T_1^{\star\star}}(u, \bar{u}),
      d_{T_2^{\star\star}}(v, \bar{v}) \big\} \leq \min \big\{
      \width(T_1^{\star\star}), \width(T_2^{\star\star}) \big\} .
    \end{eqnarray}
    Since each of the supernodes $S_i$ is contained within some tree in
    $\Fss$ (by invariant~(B) applied to clustering $\C$), we need only
    add (a subset of edges from) the path $P$ to the forest $\Fss$ in
    order to merge $T_1^{\star\star}$ and $T_2^{\star\star}$ (and
    perhaps other trees in $\Fss$) into one single tree---thus ensuring
    invariant~(B) for the new clustering $\C'$.

    \begin{figure}[h]
      \begin{center}
        \includegraphics[height=30mm]{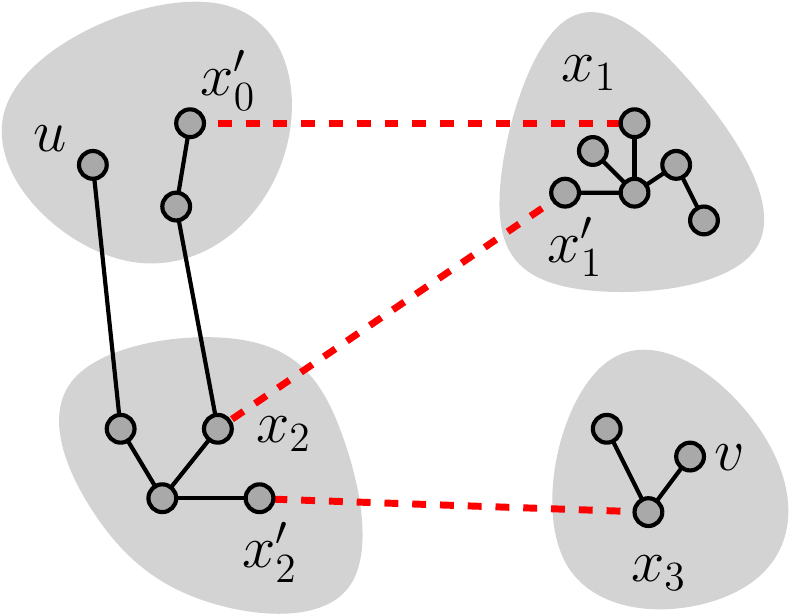}
        \caption{\footnotesize Case II of the proof of
          Theorem~\ref{thm:one-tree}. The grey blobs are supernodes in
          $\C$, the solid lines denote the forest $\Fss$. The dotted
          lines are the path $P$. Observe we do not need to add the
          second edge of $P$ as it will create a cycle. }
        \label{fig:single}
      \end{center}
    \end{figure}

    How does the width of the trees in $\Fss$ change? Each tree that we
    merge is inactive (since it is a coarsening of the original solution
    $\Fs$). Connecting up $T_1^{\star\star}, \ldots,
    T_k^{\star\star}$ causes the width of the resulting tree to be $\max
    \{ \width(T_1^{\star\star}), \cdots,
    \width(T_k^{\star\star})\}$. The decrease in $\width(\Fss)$ due to
    the merge is at least $\min \{ \width(T_1^{\star\star}),
    \width(T_2^{\star\star})\}$, which ensures invariant~(A) (using
    inequality~(\ref{eq:cost})).
  \end{itemize}

  Hence, at the end of the run of gluttonous, both invariants hold.
  Since the initial potential is $\width(\Fs) \leq \cost(\Fs)$, and the
  final potential is non-negative, the total cost of edges in $\Fss
  \setminus \Fs$ is at most $\cost(\Fs)$. This completes the proof.
\end{proof}
}

%%% Local Variables:
%%% mode: latex
%%% TeX-master: "gluttonous"
%%% End:

\subsection{Charging to this Near-optimal Solution}
\label{sec:charging}

\renewcommand{\Kss}{\ensuremath{K^{\star\star}}}
\renewcommand{\del}{\mathsf{del}}

Let $\Fs = \{T^\star_1, \ldots, T^\star_p\}$ be a solution to the \stf instance.
The main result of this section is:

\begin{theorem}
  \label{thm:one-tree-is-cheap}
  If the forest $\Fs$ is faithful to the final clustering $\C^f$ of the
  gluttonous algorithm, then the cost of the gluttonous algorithm is
  $O(1) \cdot \cost(\Fs)$.
\end{theorem}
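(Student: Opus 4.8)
The plan is to set up an amortized charging scheme in which every merge performed by the gluttonous algorithm is charged either to a distinct edge of $\Fs$ (more precisely, to the edge of $\Fs$ that a ``tree-surgery'' operation removes), or, when no good edge is available, to other merges via a potential/bookkeeping argument. I would process the merges in the order gluttonous performs them, and maintain a candidate forest $T$ (initialized to $\Fs$) that is faithful to the current clustering $\C$; when gluttonous merges active supernodes $S_1,S_2$, both lie in one tree of $T$ by faithfulness, so the unique $S_1$–$S_2$ path in that tree contains at least one inter-supernode edge, and I would delete one such edge $e$, then short-cut any inactive degree-two supernodes that result (as in Figure~\ref{fig:tree}). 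The key quantity to track is the relation between the merging distance $d_{\M/\C}(S_1,S_2)$ that gluttonous pays and $d(e)$ for the deleted edge. If the $S_1$–$S_2$ path in $T$ has an edge $e$ with $d(e) \geq \Omega(1)\cdot d_{\M/\C}(S_1,S_2)$, I charge this merge to $e$ and am done; since each $\Fs$-edge is deleted at most once, this accounts for those merges at cost $O(1)\cdot\cost(\Fs)$.

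The main obstacle — and the heart of the argument — is the ``bad'' case: the merge $S_1,S_2$ has a $T$-path all of whose edges are much shorter than the merging distance $L := d_{\M/\C}(S_1,S_2)$. Here deleting a short edge $e$ recovers almost nothing, so I cannot charge directly. The idea is to exploit monotonicity of merging distances (Claim~\ref{clm:merge-mono}): since all earlier merges cost at most $L$, and the endpoints $u\in S_1$, $v\in S_2$ are joined in $T$ by a path whose total length is at least $d_{\M/\C}(u,v)\geq L$ but each of whose edges is short, this path must consist of many ($\geq c$, for a constant $c$ depending on the target approximation ratio) inter-supernode edges, i.e. it passes through many distinct supernodes $S_0,S_1,\dots,S_k$ of $\C$ along a long tree-path. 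I would argue that each such supernode other than $S_1, S_2$ was formed by earlier merges, and that the accumulated ``short'' edges on this path can be associated with — and paid for by — those earlier merges. Concretely, I expect to define a potential function (roughly: the total length of $\Fs$-edges currently lying on ``active'' tree-paths, or a count of inter-supernode edges on relevant paths) and show that each bad merge is paid for by a commensurate drop in this potential, while good merges and the surgery operations only increase it by amounts already charged to $\Fs$.

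To make this precise I would (i) fix a constant threshold $\alpha$ and declare a merge \emph{good} if some edge on its $T$-path has length $\geq \alpha L$, \emph{bad} otherwise; (ii) for good merges, charge $O(L/\alpha) = O(d(e))$ to the deleted edge $e\in\Fs$, using that $\cost(\Fs)$ bounds the sum over all deleted $\Fs$-edges; (iii) for bad merges, show the $T$-path between the merged supernodes has $\geq 1/\alpha$ inter-supernode edges, hence the merge ``consumes'' at least $1/\alpha$ supernode-boundaries that were created by distinct earlier operations, and amortize $L$ over those using monotonicity to bound each earlier operation's contribution; (iv) sum up. The delicate points are ensuring the short-cutting of inactive degree-two supernodes does not destroy the accounting (it only removes vertices, never lengthens paths, so the potential is well-behaved), and handling the fact that the deleted edge for a bad merge still must be removed to keep $T$ faithful — but since it is short ($< \alpha L$), its removal is negligible and can be folded into the $O(1)$ slack. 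I expect step~(iii), correctly quantifying how bad merges are amortized against the history of merges without double-counting, to be where essentially all the work lies.
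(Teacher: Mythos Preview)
Your setup (maintain a candidate tree faithful to the current clustering, delete one edge per merge, short-cut inactive degree-$2$ nodes) matches the paper's, and you have correctly located the entire difficulty in step~(iii). But step~(iii) as you describe it does not work. When a bad merge of $S_1,S_2$ occurs and the $T$-path between them has $\geq 1/\alpha$ short inter-supernode edges, the merge does \emph{not} ``consume'' those $1/\alpha$ supernode boundaries: you identify $S_1$ with $S_2$, delete a single cycle edge, and short-cut at most a constant number of degree-$2$ inactive nodes. All the other intermediate supernodes on that long path survive intact into the next iteration. Nothing proportional to $1/\alpha$ is being used up, and the same chain of short edges can serve as the witness path for many subsequent bad merges. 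Your potential-function sketch does not supply the missing mechanism either: you would need a potential that drops by $\Omega(L)$ at every bad merge yet starts bounded by $O(\cost(\Fs))$, and neither ``number of inter-supernode edges on active paths'' nor ``total length of active paths'' has this property (the former drops by $O(1)$ per merge, the latter by at most the deleted edge's length, which is $<\alpha L$).

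The paper's argument is structurally different from a per-merge good/bad split. The key missing ingredient is the invariant that all Steiner (inactive) nodes in the maintained tree have degree $\geq 3$; this forces the tree at iteration $t$ to have fewer than $2N_t$ edges when there are $N_t$ active supernodes. Combined with the fact that any two active supernodes are at punctured distance $\geq \Delta_t$, an Euler-tour counting argument then shows that the current tree always contains at least $N_t/2$ edges of length $\geq \Delta_t/6$. The paper tracks edge \emph{potentials} (when two edges are short-cut into one, the new edge inherits the sum of their potentials, and one always deletes the maximum-potential edge on the cycle) and shows these many long edges force at least $N_t/8$ of the edges ever deleted to have potential $\geq \Delta_t/6$. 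Since at most $N_t$ merges remain from iteration $t$ onward, a Hall-type matching assigns each merge to a high-potential deleted edge with load at most~$8$. The charging is therefore global and \emph{a posteriori}, not a per-merge amortization; you are missing the degree-$3$ invariant, the edge-counting lemma it enables, the potential mechanism on short-cut edges, and the Hall-matching endgame.
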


Since by Theorem~\ref{thm:one-tree} there is a forest $\Fs$ with cost
at most twice the optimum that is faithful to gluttonous' final
clustering $\C^f$, applying Theorem~\ref{thm:one-tree-is-cheap} to this
forest proves Theorem~\ref{thm:main-intro}.

We now prove Theorem~\ref{thm:one-tree-is-cheap}. At a high level, the
proof proceeds thus: we consider the run of the gluttonous algorithm,
and maintain for each iteration $t$ a ``candidate'' forest $\F_t$ that
is a solution to the remaining instance.  We show that in an amortized
sense, at each step the cost of forest $\F_t$ decreases by an amount
which is a constant fraction of the cost incurred by gluttonous. Since
the starting cost of this forest is at most a constant times the optimal cost,
so is the total merging
cost of the gluttonous, proving the result.

For Steiner forest instance $\I$, assume that $\C^f$ is gluttonous'
final clustering, and $\Fs$ is faithful to $\C^f$.
% Suppose we are working with a (whose final clustering is $\C^f$).
Let $\clus{t}$ be the gluttonous clustering at the beginning of the
iteration $t$, with $\activecl{t}$ being the active supernodes. It will
be useful to view this clustering as giving us an induced Steiner forest
instance $\I_t$ on the metric whose points are the supernodes in
$\clus{t}$ and where distances are given by the punctured metric
$d_{\M/\clus{t}}$, where the terminals in the instance $\I_t$ are
supernodes in $\activecl{t}$, and where active supernodes $\{S_1, S_2\}$
are mates if there is a pair $\{u, \bar{u}\}$ such that $u \in S_1$ and
$\bar{u} \in S_2$. (Supernodes no longer have unique mates, but this
property was only used for convenience in
Theorem~\ref{thm:one-tree}). For any iteration $t$, the subsequent run
of gluttonous is just a function of this induced instance
$\I_t$. Indeed, given the instance $\I_t$, gluttonous outputs a final
clustering which is same as $\C^f$ except the inactive supernodes in
$\clus{t}$ are absent.  I.e., the inactive supernodes in $\clus{t}$ will
not play a role, but all the active supernodes will continue to combine
in the same way in $\I_t$ as in $\I$. We now inductively maintain a
forest $\fcl{t}$ such that
\begin{OneLiners}
\item[(I1)] $\fcl{t}$ is a feasible solution to this \stf instance $\I_t$,
  and

\item[(I2)] $\fcl{t}$ maintains the connectivity structure of $\Fs$,
  i.e., if $u$ and $v$ are two active terminals which are in the same tree in
  $\Fs$, then the supernodes containing $u$ and $v$ lie in the same
  tree in $\fcl{t}$.
\end{OneLiners}
And we will charge the
cost of gluttonous to reductions in the cost of this forest $\fcl{t}$.

\paragraph{The ``candidate'' forest $\fcl{t}$.}
The initial clustering $\clus{1}$ is the trivial clustering consisting
of singleton terminals; we set $\fcl{1}$ to $\Fs$. Since $\I_1$ is the
original instance, $\fcl{1}$ is feasible for it; invariant~(I2) is
satisfied trivially.

For an iteration $t$, let $E(\fcl{t})$ denote the edges in
$\fcl{t}$. Note that an edge $e \in E(\fcl{t})$ between two supernodes
$S_1, S_2 \in \clus{t}$ corresponds to an edge between two terminals
$u,v$ in the original metric $\M$, where $u \in S_1, v \in S_2$. Define
$\length(e)$ as $d_\M(u,v)$, the length of the edge $e$ in the original
metric.  Note that the length of $e$ in the metric $\M/\clus{t}$ may be
smaller than $\length(e)$. For every edge $e \in \fcl{t}$, we shall also
maintain a {\em potential} of $e$, denoted $\pl(e)$.  Initially, for
$t=1$, the potential $\pl(e)=\length(e)$ for all $e \in E(\fcl{1})$.
During the course of the algorithm, the potential $\pl(e) \geq
\length(e)$; we describe the rule for maintaining potentials
below. Intuitively, an edge $e \in \fcl{t}$ would have been obtained by
{\em short-cutting} several edges of $\Fs$, and $\pl(e)$ is equal to
the total length of these edges.

Suppose we have a clustering $\clus{t-1}$ and a forest $\fcl{t-1}$ which
satisfies invariants (I1) and (I2). If we now merge two supernodes $S_1,
S_2 \in \clus{t-1}$ to get clustering $\clus{t}$, we have to update the
forest $\fcl{t-1}$ to get to $\fcl{t}$ using procedure \updateforest
given in Figure~\ref{fig:update}. The main idea is simple: when we merge
the nodes corresponding to $S_1$ and $S_2$ in $\fcl{t-1}$ into a single
node, this creates a cycle. Removing any edge from the cycle maintains
the invariants, and reduces the cost of the new forest: we remove the
edge with the highest potential from the cycle. We further reduce the
cost by getting rid of Steiner vertices, which correspond to inactive
supernodes in $\fcl{t}$ with degree~2. More formally, given two edges
$e'=\{u',v\}, e''=\{u'',v\}$ with a common end-point $v$, the operation
{\em short-cut} on $e',e''$ replaces them by a single edge $\{u',u''\}$.
Whenever we see a Steiner vertex of degree~2 in $\fcl{t}$, we shortcut
the two incident edges.

\begin{figure}[h]
  \centering
  % \begin{center}
  \begin{boxedminipage}{0.9\textwidth}
    {\bf Algorithm \updateforest($\clus{t-1}, S_1, S_2$) :} \medskip\\
    \sp \sp \sp 1. Let $T$ be the tree in $\fcl{t-1}$ containing the terminals in $S_1$ and $S_2$. \\
    \sp \sp \sp 2. Merge $S_1$ and $S_2$ to a single node $S$ in the tree $T$. \\
    \sp \sp \sp 3. If the new supernode $S$ becomes inactive, and has degree 2 in the tree $T$, then  \\
    \sp \sp \sp \sp \sp \sp \sp short-cut the two edges incident to $S$. \\
    \sp \sp \sp 4. Let $C$ denote the unique cycle formed in the tree $T$. \\
    \sp \sp \sp 5. Delete  the edge in the cycle $C$ which has the highest potential. \\
    \sp \sp \sp 6. While there is an inactive supernode in $T$ which is a degree-2 vertex, \\
    \sp \sp \sp \sp \sp \sp \sp short-cut the two incident edges to this vertex.
  \end{boxedminipage}
  \label{fig:update}
  \caption{\footnotesize The procedure for updating $\fcl{t-1}$ to $\fcl{t}$.}
  % \end{center}
\end{figure}

Some more comments about the procedure~\updateforest. In Step 1, the
existence of the tree $T$ follows from the invariant property (I2) and
the faithfulness of $\Fs$ to $\C^f$. Since the terminals in $S_1 \cup
S_2$ are in the same tree in $\Fs$, the invariant means they belong to
the same tree in $\fcl{t-1}$, and the construction ensures they remain
in the same tree in $\fcl{t}$. When we short-cut edges $e', e''$ to get
a new edge $e$, we define the potential of the new edge $e$ to be
$\pl(e) := \pl(e')+\pl(e'')$.  It is also easy to check that $\fcl{t}$
is a feasible solution to the instance $\I_t$. Indeed, the only
difference between $\I_{t-1}$ and $\I_t$ is the replacement of $S_1, S_2$
by $S$.  If  $S$ becomes inactive, there is nothing to prove. If
$S$ remains active, then the tree containing $S$ must will also have
also have the supernodes which were paired with $S_1$ and $S_2$
in the instance $\I_{t-1}$. It is also easy to check that the invariant
property (I2) continues to hold.
The following claim\short{, whose proof is deferred to the full version,}
 proves some more crucial
properties of the forest $\fcl{t}$.
\begin{claim}
  \label{cl:forest-prop}
  For all iterations $t$, the Steiner nodes in $\fcl{t}$ have degree at
  least 3. Therefore, there are at most 2 iterations of the while loop
  in Step 6 of the \updateforest algorithm.
\end{claim}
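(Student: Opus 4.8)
The plan is to prove the statement by induction on the iteration $t$, and it is cleanest to prove the slightly stronger invariant that \emph{every inactive supernode appearing in $\fcl{t}$ has degree at least $3$}. (It is convenient to also agree that any inactive supernode that ends up with degree $\le 1$ is simply discarded together with its incident edge, if any: such a supernode carries no demand of $\I_t$ and lies on no path between two terminals of $\I_t$, so discarding it is harmless and preserves (I1) and (I2). With this convention the real content is ruling out inactive supernodes of degree exactly $2$.) For the base case $t=1$ we have $\fcl{1}=\Fs$, and we may assume without loss of generality — since it affects neither faithfulness to $\C^f$ nor the cost bound we use — that each tree of $\Fs$ has been preprocessed to remove Steiner leaves (which only lowers $\cost(\Fs)$) and to short-cut degree-$2$ Steiner vertices (which, by the triangle inequality in $\M$, does not increase the cost); then $\fcl{1}$ satisfies the invariant.

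For the inductive step, suppose the invariant holds for $\fcl{t-1}$, and consider the merge of two active supernodes $S_1,S_2$ that produces $\fcl{t}$ via \updateforest, writing $S:=S_1\cup S_2$. The crucial observation is that identifying $S_1$ and $S_2$ inside the tree $T$ changes the degree of exactly one vertex, namely the new supernode $S$, since every other vertex simply has an edge re-attached from $S_1$ or $S_2$ to $S$ and keeps the same degree. Moreover, the only edge actually removed by \updateforest is the single highest-potential edge $e^\star$ deleted from the cycle $C$ in Step~5: the short-cut operations in Steps~3 and~6 delete a degree-$2$ vertex but leave the degrees of all surviving vertices unchanged (and, being performed on a tree, never create parallel edges). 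Consequently, relative to $\fcl{t-1}$, the only vertices whose degree can have decreased in $\fcl{t}$ are $S$ itself and the (at most two) endpoints of $e^\star$.

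It remains to dispose of these. An active supernode among them is a terminal of $\I_t$ and needs no degree bound, so suppose $v$ is inactive. If $v$ is an endpoint of $e^\star$ other than $S$, then by the induction hypothesis $v$ had degree $\ge 3$ in $\fcl{t-1}$, hence degree $\ge 2$ after $e^\star$ is deleted, and if it equals $2$ the while-loop of Step~6 short-cuts it. If $v=S$, then either $\deg(S)=2$ immediately after the merge, in which case Step~3 short-cuts it, or $\deg(S)\ge 3$ after the merge, in which case deleting $e^\star$ costs $S$ at most one unit (it has degree $2$ on the cycle $C$), leaving degree $\ge 2$ and triggering Step~6 if it equals $2$; the residual degenerate possibilities in which $S$ is inactive with degree $\le 1$ are covered by the discarding convention. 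Hence $\fcl{t}$ satisfies the invariant. The bound on Step~6 then follows from the same accounting: after Steps~2--5 the only inactive supernodes that can have degree $2$ lie among the at most two endpoints of $e^\star$, and short-cutting never creates a new degree-$2$ vertex, so the while-loop executes at most twice.

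The main work — and the point to be careful about — is the case analysis hidden in the last paragraph: whether $S_1$ and $S_2$ were adjacent in $T$ (so that $C$ degenerates to a self-loop that Step~5 merely deletes), whether $S$ stays active or becomes inactive, whether $e^\star$ is one of the two cycle-edges incident to $S$, and the verification that the degenerate low-degree cases for an inactive $S$ really do occur only when $S$ sits at the end of its (small) tree and carries no demand, so that discarding it is sound. Once these are checked, the degree-$\ge 3$ property, and with it the promised bound of $2$ on the loop in Step~6, follow exactly as sketched.
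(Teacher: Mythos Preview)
Your proof is correct and follows essentially the same approach as the paper: induction on $t$, with the base case handled by preprocessing $\Fs$ to remove degree-$\le 2$ Steiner vertices, and the inductive step by observing that only the merged node $S$ and the two endpoints of the deleted edge $e^\star$ can have their degree drop, with Steps~3 and~6 handling these respectively. You are a bit more careful than the paper about corner cases (explicitly treating $S$ as a possible endpoint of $e^\star$, and your ``discarding convention'' for inactive supernodes of degree $\le 1$), but these are refinements of the same argument rather than a different route.
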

\full{
\begin{proof}
  We prove the first statement of the lemma by induction on $t$. For
  $t=1$, it holds by construction: we can assume that $\Fs$ has no
  Steiner vertex of degree at most 2: any leaf Steiner node can be
  deleted, and a degree 2 can be removed by short-cutting the incident
  edges. Suppose this property is true for $\fcl{t-1}$. We merge $S'$
  and $S''$, and if the new supernode $S$ becomes an inactive supernode,
  then its degree will be at least 2 (both $S'$ and $S''$ must have had
  degree at least 1).  If the degree is equal to 2, we remove this
  vertex in Step 3.

  When we remove an edge in Step 5, the two end-points could have been
  Steiner vertices. By the induction hypothesis, their degree will be at
  least 2 (after the edge removal). If their degree is 2, we will again
  remove them by short-cutting edges. Note that this will not affect the
  degree of other nodes in the forest.  This also shows that Step 6 will
  be carried out at most twice.
\end{proof}
}

Here's the plan for rest of the analysis. Let's fix a tree $\Ts$ of
$\Fs$, and account for only those merging costs which merge two
supernodes with terminals in $\Ts$. (Summing over all trees in $\Fs$
and using the faithfulness of $\Fs$ to $\C^f$ will ensure all merging
costs are accounted for.) Since $\fcl{t}$ is obtained by repeatedly
contracting nodes and removing unnecessary edges, in each iteration $t$
there is a unique tree $\tcl{t}$ in the forest $\fcl{t}$ corresponding
to the tree $\Ts$, namely the tree containing the active supernodes
with terminals belonging to $\Ts$. Call an iteration of the gluttonous
algorithm a {\em relevant iteration} (with respect to $\Ts$) if
gluttonous merges two supernodes from the tree $\tcl{t}$ in this
iteration. For brevity, we drop the phrase ``\emph{w.r.t.\ $\Ts$}'' in
the sequel.
% Since we are fixing the tree $\Ts$, we
% shall simply refer to such iterations as {\em relevant iterations}.

Next we show that the total potential of the edges does not change over
time. Let $\del(t)$ denote the set of edges % from $\cup_{t' \leq t}
% \tcl{t'}$
which are deleted (from a cycle in Step 5) during the (relevant)
iterations among $1, \ldots, t-1$. (Observe that $\del(t)$ does not include
  edges that are short-cut.)
\begin{lemma}
  \label{lem:forest-inv}
  For iteration $t$, the sum of potentials of edges $\del(t)$ and
  $E(\tcl{t})$ equals $\cost(\Ts)$. Further, $\pl(e) \geq
  \length(e)$ for all edges $e \in E(\tcl{t}) \cup \del(t)$.
\end{lemma}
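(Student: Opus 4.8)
The plan is to prove both statements simultaneously by induction on the (relevant) iterations $t$, tracking the potential total as an invariant. The base case $t=1$ is immediate: $\tcl{1}$ is the tree $\Ts$ itself (viewed on singleton supernodes), $\del(1)=\emptyset$, and by construction $\pl(e)=\length(e)$ for every $e$, so the sum of potentials of $E(\tcl{1})$ equals $\sum_{e\in E(\Ts)}\length(e)=\cost(\Ts)$ and the inequality $\pl(e)\ge\length(e)$ holds with equality. This also uses the preprocessing remark from the proof of Claim~\ref{cl:forest-prop} that $\Ts$ may be assumed to have no Steiner vertex of degree $\le 2$.

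For the inductive step, I would consider a relevant iteration $t-1$ in which \updateforest merges two supernodes $S_1,S_2$ of $\tcl{t-1}$, and carefully trace how the edge set and potentials change through Steps~2--6 of the procedure. The only step that removes an edge is Step~5, which deletes exactly one edge (the highest-potential edge of the newly created cycle $C$); this edge moves from $E(\tcl{t-1})$ into $\del(t)$, so the sum of potentials over $E(\tcl{t})\cup\del(t)$ is unchanged by Step~5. The short-cutting operations in Steps~3 and~6 replace two edges $e',e''$ by a single edge $e$ with $\pl(e):=\pl(e')+\pl(e'')$, so the total potential of the edge set is also preserved by those steps; and merging two nodes into one in Step~2 does not touch any edge's potential. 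Hence the total potential of $E(\tcl{t})\cup\del(t)$ equals that of $E(\tcl{t-1})\cup\del(t-1)$, which by the inductive hypothesis is $\cost(\Ts)$. For the second statement, $\pl(e)\ge\length(e)$ is preserved because (i) edges carried over unchanged still satisfy it, (ii) a deleted edge retains its potential while sitting in $\del(t)$, and (iii) for a short-cut edge $e=\{u',u''\}$ we have $\pl(e)=\pl(e')+\pl(e'')\ge\length(e')+\length(e'')\ge d_\M(u',u'')=\length(e)$ by the triangle inequality in $\M$ (the short-cut edge connects the two far endpoints through the degree-2 Steiner vertex).

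The bookkeeping obstacle to get right is that an iteration of gluttonous may fail to be relevant, or \updateforest may touch trees of $\fcl{t}$ other than $\tcl{t}$: I would note that the total-potential accounting is purely local to the tree $\tcl{t}$ — a non-relevant iteration merges supernodes not in $\tcl{t}$ and, by invariant~(I2) and faithfulness of $\Fs$ to $\C^f$, cannot alter $\tcl{t}$ or add to $\del(t)$ — so nothing changes for $\Ts$ in such steps, and the induction is only over relevant iterations as defined. A second point to verify is that the cycle $C$ in Step~4 genuinely lies inside $\tcl{t-1}$ (so that the deleted edge is one we were counting), which follows because both $S_1,S_2$ lie in the tree $T=\tcl{t-1}$ by invariant~(I2). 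I expect this locality argument — arguing that everything outside $\tcl{t}$ is irrelevant and that $\del(t)$ only ever receives edges formerly in some $\tcl{s}$ — to be the main thing needing care, while the potential-conservation computation itself is routine.
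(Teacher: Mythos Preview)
Your proposal is correct and follows essentially the same argument as the paper: induction on $t$, with the base case immediate from the initialization $\pl(e)=\length(e)$, and the inductive step observing that Step~5 moves an edge (with unchanged potential) from $E(\tcl{t-1})$ to $\del(t)$ while each short-cut in Steps~3 and~6 replaces $e',e''$ by $e$ with $\pl(e)=\pl(e')+\pl(e'')\ge \length(e')+\length(e'')\ge \length(e)$. Your added discussion of locality (non-relevant iterations leave $\tcl{t}$ and $\del(t)$ untouched) and the remark about degree-$\le 2$ Steiner vertices are harmless elaborations the paper omits, but the core approach is identical.
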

\full{
\begin{proof}
  By induction on $t$. The base case $t=1$ follows by construction. For
  the IH, assume the statement holds for $t-1$. Assume that $t$ is a
  relevant iteration (else $\tcl{t} = \tcl{t-1}$): if we remove edge $e$
  from $\tcl{t-1}$ during Step 5, we do not change $\pl(e)$.  If we
  short-cut two edges $e', e''$ to an edge $e$,
  $\pl(e)=\pl(e')+\pl(e'')$. Therefore the total potential of the edges
  in the tree plus that of the edges in $\del(t)$ does not change.
  Further, $\length(e) \leq \length(e') + \length(e'') \leq
  \pl(e')+\pl(e'') = \pl(e)$.
\end{proof}
}

Eventually $\tcl{t}$ has no active supernodes (for large $t$) and hence
all its edges are deleted. Hence if $\del(\infty)$ denotes the edges
deleted during all the relevant iterations in gluttonous,
Lemma~\ref{lem:forest-inv} implies $\sum_{e \in \del(\infty)} \pl(e) =
\cost(\Ts)$.
Let $\Delta_t$ denote the merging cost of some relevant iteration $t$:
we now show how to charge this cost to the potential of some deleted
edge in $\del(\infty)$. Formally, let $N_t$ denote the number of active
supernodes in $\tcl{t}$, at the \emph{beginning} of iteration $t$.
\begin{theorem}
  \label{thm:charge}
  If $t_0$ is relevant, there are at least $N_{t_0}/8$ edges in
  $\del(\infty)$ of potential at least $\Delta_{t_0}/6$.
\end{theorem}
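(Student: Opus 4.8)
The plan is to fix a relevant iteration $t_0$ and track the $N_{t_0}$ active supernodes of $\tcl{t_0}$ as gluttonous proceeds. The key structural fact is that the candidate tree $\tcl{t}$ only shrinks (by edge deletions in Step~5 and short-cuts in Step~6), so every one of these $N_{t_0}$ active supernodes is, at the end, absorbed into a single supernode, and all edges of $\tcl{t_0}$ are eventually deleted. I would argue that a constant fraction of the $N_{t_0}$ active supernodes present at time $t_0$ must witness the deletion of an edge of potential at least $\Delta_{t_0}/6$: intuitively, each such supernode has to ``pay its way'' out of the tree, and by the monotonicity of merging distances (Claim~3.5) every merge from time $t_0$ onward costs at least $\Delta_{t_0}$, so the pieces of $\tcl{t_0}$ that get connected up correspond to reasonably long portions of the tree, hence to deleted edges of large potential (since $\pl(e)\ge\length(e)$ and potentials sum to $\cost(\Ts)$ along the relevant history by Lemma~3.8).

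Concretely, I would set up the following charging. Consider the subtree structure of $\tcl{t_0}$ with its $N_{t_0}$ active supernodes as ``marked'' nodes (Steiner nodes have degree $\ge 3$ by Claim~3.7, so the number of Steiner nodes is $O(N_{t_0})$ and the tree has $O(N_{t_0})$ edges total). Each subsequent relevant merge of two supernodes $S,S'$ of the current tree $\tcl{t}$ causes one edge of the unique cycle to be deleted; assign a \emph{token} to that deleted edge. Since the merging distance at every step $\ge t_0$ is $\ge\Delta_{t_0}$ by Claim~3.5, I claim the deleted edge $e$ satisfies $\pl(e)\ge\Delta_{t_0}/6$ unless $e$ is ``short'' in a controlled way. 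The reason: the cycle is formed by the $\M/\clus{t}$-shortest path between $S$ and $S'$ together with their tree path in $\tcl{t}$; the shortest-path length is exactly the merging distance $\ge\Delta_{t_0}$, so the tree path between $S$ and $S'$ has total potential $\ge$ total length $\ge\Delta_{t_0}$ (else we could route cheaper through the tree, contradicting that the merging distance is the minimum $\C$-punctured distance between active supernodes — here I use that the tree path goes through supernodes, and $\pl(e)\ge\length(e)$). If that tree path has at most $6$ edges, the highest-potential edge on it (which is the one deleted in Step~5) has potential $\ge\Delta_{t_0}/6$; if it has more than $6$ edges it still contains a deleted edge of potential $\ge\Delta_{t_0}/6$, but I would instead handle long paths by noting they ``consume'' many marked supernodes at once. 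Counting: the $N_{t_0}$ marked supernodes get merged down to one, which takes $N_{t_0}-1$ relevant merges; a short-path merge (tree path $\le 6$ edges) contributes a good deleted edge and consumes $O(1)$ structure, while any merge that takes a long tree path reduces the number of tree edges by a large amount and can therefore occur few times — so at least $\Omega(N_{t_0})$ of the merges are of the good type, giving $\ge N_{t_0}/8$ distinct deleted edges of potential $\ge\Delta_{t_0}/6$. I would need to check the constants $8$ and $6$ carefully by this bookkeeping (the ``$6$'' should come from the path-length threshold, the ``$8$'' from how many marked nodes a single long-path merge can eat).

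The main obstacle I anticipate is precisely controlling the ``long tree path'' merges and ensuring that distinct merges charge to \emph{distinct} deleted edges (so that the $N_{t_0}/8$ edges are genuinely different). For distinctness, once an edge is deleted it leaves $\tcl{t}$ forever, so different Step~5 deletions give different edges automatically — the only subtlety is edges that are later short-cut into a single edge (then their potentials add, which only helps, but I must make sure I attribute the ``good'' deletion to an edge that is actually \emph{deleted} rather than short-cut, which is why Lemma~3.8 is phrased in terms of $\del(t)$). The other delicate point is the lower bound ``tree path potential $\ge\Delta_{t_0}$'': this uses that $\tcl{t}$ is faithful and that routing through the tree's supernodes gives a valid path in $G_{\clus{t}}$ of length $\le$ sum of lengths $\le$ sum of potentials, combined with the fact that gluttonous picked the globally-closest active pair, so no active-to-active tree route can be shorter than $\Delta_{t}\ge\Delta_{t_0}$; I would want to double-check this handles the case where intermediate supernodes on the tree path are inactive (they are, so they don't violate the ``closest \emph{active} pair'' optimality, and that's exactly why the inequality is one-directional and safe).
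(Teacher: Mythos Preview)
Your approach has the right instinct — the tree-path length between the merged supernodes in $\tcl{t}$ is indeed at least $\Delta_t \geq \Delta_{t_0}$ (the tree path gives a valid route in $G_{\clus{t}}$), and when that path has at most six edges the deleted max-potential edge has potential $\geq \Delta_{t_0}/6$. The gap is in your handling of the ``long tree path'' case. You assert that such a merge ``reduces the number of tree edges by a large amount and can therefore occur few times,'' but this is false: regardless of how many edges the $S$--$S'$ path in $\tcl{t}$ has, \updateforest removes exactly one edge in Step~5 and short-cuts at most three more (Claim~\ref{cl:forest-prop}), so every merge reduces the edge count of $\tcl{t}$ by at most four. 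The path edges that are neither deleted nor short-cut simply remain in the tree. There is thus no budget argument forcing long-path merges to be rare, and nothing in your setup prevents \emph{every} merge from having a long path; your $\Omega(N_{t_0})$ counting does not go through, and this is not a matter of adjusting constants.

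The paper avoids reasoning about path lengths at merge time altogether. It first proves a static structural fact about $\tcl{t_0}$ itself (Lemma~\ref{lem:long}): at least $N_{t_0}/2$ of its edges already have \emph{length} $\geq \Delta_{t_0}/6$. (This uses that any two active supernodes in $\tcl{t_0}$ are at tree-distance $\geq \Delta_{t_0}$, together with the degree-$3$ Steiner bound to cap the total edge count — the Euler-tour argument is where the $6$ comes from.) It then tracks each long edge $e_0$ forward in time: either $e_0$ remains the longest edge in its accumulated short-cut group until that group is deleted (so the deleted edge has potential $\geq \length(e_0)$), or at some iteration $e_0$ is short-cut together with an even longer edge — and at that iteration the cycle edge removed in Step~5 has potential at least that large. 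Since each iteration performs at most three short-cuts, at most three long edges charge to any single deleted edge, giving the $N_{t_0}/8$ bound. The constants $6$ and $8$ thus arise from the static count and the three-short-cut bound, not from a dynamic dichotomy on path length.
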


We defer the proof of Theorem~\ref{thm:charge} for the moment, and
instead show how to use this to charge the merging costs and to prove
Theorem~\ref{thm:one-tree-is-cheap}, which in turn gives the main
theorem of the paper.

\begin{proofof}{Theorem~\ref{thm:one-tree-is-cheap}}
  Let $I^r$ denote the index set of all relevant iterations during the
  run of gluttonous. We now define a mapping $g$ from $I^r$ to
  $\del(\infty)$ such that: (i) for any edge $e \in \del(\infty)$, the
  pre-image $g^{-1}(e)$ has cardinality at most 8, and (ii) the
  potential $\pl(g(t)) \geq \Delta_t/6$ for all $t \in I^r$. To get
  this, consider a bipartite graph on vertices $I^r \cup \del(\infty)$
  where a iteration $t \in I^r$ is connected to all edges $e\in
  \del(\infty)$ for which $\pl(e) \geq \Delta_t/6$. Theorem~\ref{thm:charge}
  shows this graph satisfies a Hall-type condition for such a mapping to
  exist; in fact a greedy strategy can be used to construct the mapping (there
  can be at most $N_t$ relevant iterations after iteration $t$ because each
  relevant iteration reduces the number of active supernodes by at least one).

  Thus, the total merging cost of gluttonous during relevant iterations
  is at most
  \[ \short{\ts} \sum_{t \in I^r} \Delta_t = \sum_{e \in \del(\infty)} \sum_{t \in
    g^{-1}(e)} \Delta_t \leq 48 \sum_{e \in \del(\infty)} \pl(e) = 48 \,
  \cost(\Ts), \] where the last equality follows from
  Lemma~\ref{lem:forest-inv}. By the faithfulness property, each
  iteration of gluttonous is relevant with respect to one of the trees
  in $\Fs$, so summing the above expression over all trees gives the
  total  merging cost to be at
  most $48 \, \cost(\Fs)$.
\end{proofof}

Combining Theorem~\ref{thm:one-tree-is-cheap} with
Theorem~\ref{thm:one-tree} gives an approximation factor of $96$ for the
gluttonous algorithm. While we have not optimized the constants, but it is
unlikely that our ideas will lead to constants in the single
digits. Obtaining, for instance, a proof that the gluttonous algorithm
is a $2$-approximation (or some such small constant) remains a
fascinating open problem.

\subsubsection{Proof of Theorem~\ref{thm:charge}}

In order to prove Theorem~\ref{thm:charge}, we need to understand the
structure of the trees $\tcl{t}$ for $t \geq t_0$ in more detail.
Let $\del_0([t_0 \ldots t))$ denote the edges deleted during the
relevant iterations in $t_0, \ldots, t-1$, i.e., $\del([t_0\ldots t)) :=
\del(t) \setminus \del(t_0)$.  Observe that each edge of $\tcl{t}$ is
either in $\tcl{t_0}$ or is obtained by short-cutting some set of edges
of $\tcl{t_0}$. Hence we maintain a partition $\E(t)$ of the edge set
$E(\tcl{t_0})$, such that there is a correspondence between edges $e \in
\tcl{t} \cup \del([t_0 \ldots t))$ and sets $D_t(e) \in \E(t)$, such that
$D_t(e)$ is the set of edges in $\tcl{t_0}$ which have been short-cut to
form $e$.

For each set $D_t(e)$, let $\head(D_t(e))$ be the edge $e' \in D_t(e)$
with greatest length. If edge $e$ is removed from $\tcl{t}$ in some
relevant iteration $t$, we have $e \in \del([t_0 \ldots t'))$ for all
$t' > t$, and the set $D_{t'}(e) = D_t(e)$ for all future partitions
$\E(t')$.

\begin{lemma}
  \label{lem:long}
  There are at least $N_{t_0}/2$ edges of length at
  least $\Delta_{t_0}/6$ in tree $\tcl{t_0}$.
\end{lemma}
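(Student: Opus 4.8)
The plan is to exploit the monotonicity of gluttonous' merging distances (Claim~\ref{clm:merge-mono}) together with a counting argument on the active supernodes in $\tcl{t_0}$. Let me set up notation: at the beginning of iteration $t_0$ there are $N_{t_0}$ active supernodes sitting in the tree $\tcl{t_0}$, and $t_0$ is relevant, so gluttonous merges two of them at merging distance $\Delta_{t_0}$. Since $\tcl{t_0}$ is a feasible solution to the induced instance $\I_{t_0}$ restricted to $\Ts$, each of these $N_{t_0}$ active supernodes has its mate (in the induced instance) also inside $\tcl{t_0}$; thus the active supernodes of $\tcl{t_0}$ come in mate-pairs (an active supernode may be mated to several others, but each is mated to at least one other inside the tree), and for each such pair the unique path in $\tcl{t_0}$ connecting them has total \emph{length} at least the $\clus{t_0}$-punctured distance between them, which by the greedy choice is at least $\Delta_{t_0}$ (gluttonous picked the globally closest active pair, so no pair of active supernodes is closer than $\Delta_{t_0}$).

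Next I would turn ``many long paths'' into ``many long edges.'' The clean way: root $\tcl{t_0}$ arbitrarily and consider the $N_{t_0}$ active supernodes as marked vertices. For each marked vertex pick a partner marked vertex (its mate inside the tree); the tree path between a mate-pair has length $\ge \Delta_{t_0}$, so it must contain an edge of length $\ge \Delta_{t_0}/(\text{path length in edges})$ — but that bound is too weak since paths can be long. Instead I would argue more carefully using the tree structure: consider, for each marked vertex $v$, the edge $e_v$ on the path from $v$ to its partner that is incident to $v$ (the first edge leaving $v$ toward the partner). If this first edge already has length $\ge \Delta_{t_0}/6$ we are happy for $v$. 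Otherwise the partner is ``close'' along that direction; one then wants to find a short-path/triangle-inequality contradiction with the fact that any two active supernodes are at punctured distance $\ge \Delta_{t_0}$. Concretely, if both the first edge out of $v$ and the first edge out of its partner toward $v$ were short (each $< \Delta_{t_0}/6$, say), and moreover intermediate supernodes on the path carry small ``detours,'' we would contradict minimality of $\Delta_{t_0}$; charging $v$ to the first long edge encountered along its partner-path, with a constant-factor loss, should yield at least $N_{t_0}/2$ distinct long edges.

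The main obstacle — and the step I expect to require the real care — is controlling double-counting when I charge each of the $N_{t_0}$ marked supernodes to a long edge of $\tcl{t_0}$: several marked vertices might want to charge the same edge, and the paths between mate-pairs can overlap heavily. The right fix is probably a careful ``first long edge along the mate-path'' assignment combined with the observation that any edge of $\tcl{t_0}$ can be the first-long-edge for only $O(1)$ marked vertices (because on either side of that edge the tree-distance to any active supernode on that side, measured in the punctured metric, is $\ge \Delta_{t_0}$, so at most a bounded number of marked supernodes can sit within a $\Delta_{t_0}/6$-radius ball on each side). Pinning down this bounded-multiplicity claim, getting the constants so that the count survives as $N_{t_0}/2$ and the length threshold survives as $\Delta_{t_0}/6$, is the crux; everything else is monotonicity of merging distances plus the triangle inequality in the punctured metric.
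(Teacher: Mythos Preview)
Your approach has a genuine gap: the charging scheme presupposes that the tree path from each active supernode $v$ to its mate contains \emph{some} edge of length $\ge \Delta_{t_0}/6$, but nothing guarantees this. A path in $\tcl{t_0}$ of total length $\ge \Delta_{t_0}$ can consist of arbitrarily many short edges (say, a hundred edges each of length $\Delta_{t_0}/100$), so ``the first long edge along the mate-path'' may simply not exist. Your attempted contradiction (``if both endpoint edges are short \ldots we would contradict minimality of $\Delta_{t_0}$'') does not go through either: short first edges at both ends say nothing about the total path length, and there is no second active supernode nearby to contradict the $\Delta_{t_0}$ lower bound. Consequently the whole charging step, and the multiplicity discussion after it, is built on a false premise.

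The paper's proof takes a completely different, global route and crucially uses a structural fact you never invoke: by Claim~\ref{cl:forest-prop} every Steiner vertex in $\tcl{t_0}$ has degree $\ge 3$, which forces the total number of edges in $\tcl{t_0}$ to be strictly less than $2N_{t_0}$. The argument is then a counting one: delete all long edges (say there are $\ell$ of them), obtaining $\ell+1$ subtrees in which every edge is short. In any such subtree with $n_i \ge 2$ active supernodes, an Euler tour broken at the active supernodes gives $n_i$ segments, each of total length $\ge \Delta_{t_0}$ and hence containing at least six short edges; since each edge appears in two segments, $e_i \ge 3n_i$. Summing and comparing against the $< 2N_{t_0}$ upper bound forces $\ell \ge N_{t_0}/2$. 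This Euler-tour-plus-edge-count argument is what replaces your local charging, and the degree bound on Steiner vertices is the missing ingredient that makes the constants close.
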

\begin{proof}
  Call an edge \emph{long} if its length is at least $\Delta_{t_0}/6$,
  and let $\ell$ denote the number of long edges in the tree
  $\tcl{t_0}$. Deleting these edges from $\tcl{t_0}$ gives $\ell+1$
  subtrees $C_1, C_2, \ldots, C_{l+1}$. Let $C_i$ have $n_i$ active
  supernodes and $e_i$ edges. For each tree $C_i$ where $n_i \geq 2$,
  take an Eulerian tour $X_i$ and divide it into $n_i$ disjoint segments by
  breaking the tour at the active supernodes. Each edge appears in two
  such segments, and each segment has at least six edges (since the
  distance between active supernodes is at least $\Delta_{t_0}$ and none
  of the edges are long), so $e_i \geq 3n_i$ when $n_i \geq 2$.  This
  means the total number of edges in $\tcl{t_0}$ is at least three times
  the number of ``social'' supernodes (supernodes that do not lie in a
  component $C_i$ with $n_i = 1$, in which they are the only supernode),
  \emph{plus} those $\ell$ long edges that were deleted.

  And how many such social supernodes are there? If $\ell \geq N_{t_0} + 1$,
  there may be none, but then we clearly have at least $N_{t_0}/2$ long
  edges. Else at least $N_{t_0} - \ell$ supernodes are social, so
  $\tcl{t_0}$ has at least $3(N_{t_0} - \ell) + \ell$ edges. Finally,
  since every Steiner vertex in $\tcl{t_0}$ has degree at least $3$, the
  number of edges is less than $2N_{t_0}$. Putting these together gives
  $3\,N_{t_0} - 2\ell \leq 2\,N_{t_0}$ or $\ell \ge N_{t_0}/2$.
\end{proof}

Let $L_0$ be the set of long edges in $\tcl{t_0}$, and $\E(\infty)$ be
the partition at the end of the process. Two cases arise:
\begin{itemize}
\item At least $N_{t_0}/8$ edges in $L_0$ are $\head(D_\infty(e))$ for
  some set $D_\infty(e) \in \E(\infty)$. Since each set in $\E(\infty)$
  has only one head, there are $N_{t_0}/8$ such sets. In any such set
  $D_\infty(e)$, $\pl(e) \geq \length(\head(D_\infty(e))) \geq
  \Delta_{t_0}/6$. Moreover, we must have removed $e$ in some iteration
  between $t_0$ and the end, and hence $e \in \del([t_0 \ldots \infty))
  \sse \del(\infty)$.

\item More than than $3N_t/8$ edges in $L_0$ are not heads of any set in
  $\E(\infty)$. Take one such edge $e_0$ --- the sets in $\E(t_0)$ are
  singleton sets and hence $e_0$ is the head of the set $D_{t_0}(e_0)$.
  Let $t$ be the first (relevant) iteration such that $e_0$ is not the
  head of the set containing it in $\E(t)$, and suppose $e_0 =
  \head(D_{t-1}(e'))$ for some set $D_{t-1}(e') \in \E(t-1)$. In forming
  $\fcl{t}$, we must have short-cut $e'$ and some other edge $e''$ to
  form an edge $e \in \fcl{t}$.  Observe that $\length(\head(D(e''))
  \geq \length(e_0)$, else $e_0$ would continue to be the head of
  $D(e)$. Moreover,
  \[ \min\big(\pl(e'), \pl(e'')\big) \geq \min\big(\length(\head(e')),
  \length(\head(e''))\big) \geq \Delta_{t_0}/6.
  \]
  By the discussion in Claim~\ref{cl:forest-prop}, one of $e'$ and $e''$
  must lie on the cycle formed when we merged two supernodes in
  $\fcl{t-1}$, as in Step~4 of \updateforest. Further, if $e_t$ was the
  edge removed from this cycle, by the rule in Step~5 we get that the
  potential $\pl(e_t)$ is the maximum potential of any edge on this
  cycle, and hence $\pl(e_t) \geq \min(\pl(e'), \pl(e'')) \geq
  \Delta_{t_0}/6$. Hence we want to ``charge'' this edge $e_0 \in L_0$
  to $e_t \in \del(\infty)$ (which has  potential at least $\Delta_{t_0}/6$). However, up to three
  edges from $L_0$ may charge to $e_t$: this is because there can be at
  most three short-cut operations in any iteration (one from Step~3 and
  two from Step~6).
\end{itemize}
In both cases, we've shown the presence of at least $N_{t_0}/8$ edges in
$\del(\infty)$ of potential $\Delta_{t_0}/6$, which completes the proof
of Theorem~\ref{thm:charge}. \qedsymb

%%% Local Variables:
%%% mode: latex
%%% TeX-master: "gluttonous"
%%% End:

\subsection{An Extension of the Analysis in Section~\ref{sec:analysis}}
\label{sec:extension}

Let us now abstract out some properties used in the above analysis, so
that we can generalize the analysis to a broader class of algorithms for
\stf. This abstraction is used to show that variants of the above
algorithm, which are presented in Section~\ref{sec:timed-version} and in
Appendix~\ref{sec:projected}, are also $O(1)$-approximations.

Consider an algorithm $\A$ which maintains a set of supernodes, where a
supernode corresponds to a set of terminals, and two different
supernodes correspond to disjoint terminals.  Initially, we have one
supernode for each terminal. Further, a supernode could be active or
inactive. Once a supernode becomes inactive, it stays inactive. Now, at
each iteration, the algorithm picks two active supernodes, and replaces
them by a new supernode which is the union of the terminals in these two
supernodes (the new supernode could be active or inactive).  Note that
the iteration when a supernode becomes inactive is arbitrary (depending
on the algorithm $\A$).

As in the case of gluttonous algorithm, let $\C^f$ be the final
clustering produced by the algorithm $\A$, and $\Ts$ be a {\em tree}
solution to a \stf instance $(\D, \M)$.  Let $\clus{t}$ be the set of
supernodes at the beginning of iteration $t$ of $\A$. For an iteration
$t$, let $\delta_t$ be the minimum distance (in the metric
$\M/\clus{t}$) between any two active supernodes in
$\clus{t}$. Claim~\ref{clm:close-mono} gives the following fact.
\begin{fact}
  \label{fact:gen}
  The quantity $\delta_t$ forms an ascending sequence with respect to $t$. 
\end{fact}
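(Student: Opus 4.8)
The plan is to prove Fact~\ref{fact:gen} as an immediate consequence of Claim~\ref{clm:close-mono}. Recall that $\delta_t$ is defined as the minimum distance in $\M/\clus{t}$ between any two active supernodes of $\clus{t}$, and we want to show $\delta_t \le \delta_{t+1}$ for every iteration $t$. First I would fix an iteration $t$ in which $\A$ merges two active supernodes $S$ and $T$ (the pair realizing some merge, not necessarily the closest pair, since $\A$ is an arbitrary algorithm of the abstract type). Let $\clus{t+1} = (\clus{t}\setminus\{S,T\})\cup\{S\cup T\}$ be the resulting clustering, and let $U,W$ be two active supernodes of $\clus{t+1}$ achieving the minimum distance $\delta_{t+1} = d_{\M/\clus{t+1}}(U,W)$.

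The argument then splits into two cases according to whether $S\cup T$ is one of $U,W$. If neither $U$ nor $W$ equals $S\cup T$, then $U$ and $W$ are both active supernodes already present in $\clus{t}$, distinct from $S$ and $T$; by the first part of Claim~\ref{clm:close-mono}, the distance from $U$ to its closest active supernode does not decrease when passing from $\clus{t}$ to $\clus{t+1}$, so in particular $d_{\M/\clus{t}}(U,W') \le d_{\M/\clus{t+1}}(U,W) = \delta_{t+1}$ for $W'$ the closest active supernode to $U$ in $\clus{t}$ — actually more simply, $\delta_t \le d_{\M/\clus{t}}(U,W) \le d_{\M/\clus{t+1}}(U,W) = \delta_{t+1}$, using Fact~\ref{fct:dist-mono} (distance functions are monotone) for the middle inequality since $\clus{t}$ refines $\clus{t+1}$. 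If instead one of them, say $W = S\cup T$, then by the second part of Claim~\ref{clm:close-mono} the distance from $S\cup T$ to its closest supernode in $\clus{t+1}$ is at least $\min\{d_S, d_T\}$, where $d_S, d_T$ are the distances from $S$ and $T$ to their closest supernodes in $\clus{t}$; both of these are at least $\delta_t$ by definition of $\delta_t$ as the global minimum. Hence $\delta_{t+1} = d_{\M/\clus{t+1}}(U, S\cup T) \ge \min\{d_S, d_T\} \ge \delta_t$.

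Combining the two cases gives $\delta_t \le \delta_{t+1}$ for every iteration $t$, so the sequence $(\delta_t)_t$ is ascending, as claimed. The only subtlety — and the one place to be slightly careful — is the bookkeeping about activity: an active supernode of $\clus{t}$ other than $S,T$ remains a supernode of $\clus{t+1}$, and it is still active there (activity of a supernode depends only on which terminal pairs it splits, which is unchanged when we merge two \emph{other} supernodes), so the case analysis above is exhaustive and Claim~\ref{clm:close-mono} applies verbatim. I do not expect any real obstacle here; the content is entirely front-loaded into Claim~\ref{clm:close-mono}, and Fact~\ref{fact:gen} is just the observation that this claim, applied one merge at a time, yields monotonicity of the global minimum active-pair distance.
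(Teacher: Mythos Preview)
Your approach is the same as the paper's: Fact~\ref{fact:gen} is deduced directly from Claim~\ref{clm:close-mono}, and your case split on whether $S\cup T$ is one of the minimizing supernodes is the natural way to unpack that claim. The first argument you give in Case~1 (via $W'$, the closest active supernode to $U$ in $\clus{t}$) and your Case~2 argument are both correct.

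However, the ``actually more simply'' alternative in Case~1 is wrong. Fact~\ref{fct:dist-mono} says that passing to a coarser clustering can only \emph{decrease} punctured distances, i.e.\ $d_{\M/\clus{t+1}}(U,W)\le d_{\M/\clus{t}}(U,W)$, which is the reverse of the inequality you wrote. This is precisely why Claim~\ref{clm:close-mono} is needed rather than mere distance monotonicity: individual pairwise distances go down when you merge, but the minimum over \emph{active} pairs can go up because the only new shortcuts pass through $S\cup T$, and $S,T$ were themselves active candidates in $\clus{t}$. Keep your first argument and delete the ``simpler'' one.

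One smaller point on the bookkeeping: in the abstract setting of \autoref{sec:extension}, activity of a supernode is determined by the algorithm $\A$, not by ``which terminal pairs it splits,'' so your parenthetical justification does not apply here. What you actually need is the converse direction --- that a supernode active in $\clus{t+1}$ was already active in $\clus{t}$ --- and this follows from the stipulated rule ``once a supernode becomes inactive, it stays inactive.'' That is also all that the proof of Claim~\ref{clm:close-mono} uses, so the claim carries over to the abstract setting verbatim.
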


Now Theorem~\ref{thm:one-tree-is-cheap} generalizes
to the following stronger result. 
\begin{corollary}
\label{cor:one-tree}
For any tree solution $\Ts$ to an instance $\I$, 
$\sum_{t} \delta_t \leq  48 \cdot \cost(\Ts).$
\end{corollary}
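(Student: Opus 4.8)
The plan is to reuse essentially verbatim the machinery developed for Theorem~\ref{thm:one-tree-is-cheap}, since that proof never actually used the \emph{greedy} choice of the merged pair — it only used two structural facts about the run. The first is that merging distances are monotone (Claim~\ref{clm:merge-mono}), which for a general algorithm $\A$ is replaced by Fact~\ref{fact:gen}: the quantities $\delta_t$ (the \emph{minimum} inter-active-supernode distance at step $t$) form an ascending sequence. The second is that at each iteration we can update the candidate forest $\fcl{t}$ via \updateforest, maintaining invariants (I1) and (I2), Claim~\ref{cl:forest-prop}, and Lemma~\ref{lem:forest-inv} — all of which are purely about contracting nodes, deleting the highest-potential cycle edge, and short-cutting degree-2 Steiner nodes, and hence apply to any algorithm $\A$ of the abstract form described. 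So I would first observe that, fixing the tree solution $\Ts$ and defining relevant iterations and the trees $\tcl{t}$ exactly as before, Lemma~\ref{lem:forest-inv} still gives $\sum_{e \in \del(\infty)} \pl(e) = \cost(\Ts)$.

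The only point that genuinely needs re-examination is Theorem~\ref{thm:charge} and its supporting Lemma~\ref{lem:long}, because there the merging cost $\Delta_{t_0}$ played two roles: it was the cost charged, and it was a lower bound on the distance between \emph{any} two active supernodes in $\tcl{t_0}$ (used to argue each Eulerian-tour segment has $\ge 6$ edges). For the general algorithm, $\A$ need not merge the closest pair, so the cost of a relevant iteration $t_0$ is not bounded by the minimum inter-supernode distance. The fix is to charge against $\delta_{t_0}$ rather than the actual merging cost: I would restate Lemma~\ref{lem:long} as ``there are at least $N_{t_0}/2$ edges of length at least $\delta_{t_0}/6$ in $\tcl{t_0}$,'' whose proof is word-for-word the same since $\delta_{t_0}$ is by definition a lower bound on pairwise distances between active supernodes in $\clus{t_0}$ (and hence in $\tcl{t_0}$). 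Correspondingly Theorem~\ref{thm:charge} becomes: if $t_0$ is relevant, there are at least $N_{t_0}/8$ edges in $\del(\infty)$ of potential at least $\delta_{t_0}/6$; its proof (the two-case head/non-head argument) is unchanged, substituting $\delta_{t_0}$ for $\Delta_{t_0}$ throughout.

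Finally I would run the same bipartite Hall-type / greedy matching argument: build a bipartite graph on $I^r \cup \del(\infty)$ where iteration $t$ is joined to every $e \in \del(\infty)$ with $\pl(e) \ge \delta_t/6$; the modified Theorem~\ref{thm:charge} plus Fact~\ref{fact:gen} (at most $N_t$ relevant iterations follow $t$, and the $\delta_t$ are sorted) gives the Hall condition, so there is a map $g\colon I^r \to \del(\infty)$ with $|g^{-1}(e)| \le 8$ and $\pl(g(t)) \ge \delta_t/6$. Then
\[
\sum_{t \in I^r} \delta_t \;=\; \sum_{e \in \del(\infty)} \sum_{t \in g^{-1}(e)} \delta_t \;\le\; 48 \sum_{e \in \del(\infty)} \pl(e) \;=\; 48\,\cost(\Ts).
\]
Summing the bound $\sum_{t \in I^r} \delta_t \le 48\,\cost(\Ts)$ is already the full per-tree statement; since $\Ts$ is a single tree there is no outer sum over a forest, so the corollary follows directly. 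I expect the main (minor) obstacle to be bookkeeping: making sure that the reindexing from ``merging cost $\Delta_t$'' to ``minimum distance $\delta_t$'' is done consistently in every lemma, and checking that the abstract $\A$ really does satisfy the hypotheses of Claim~\ref{cl:forest-prop} — i.e. that \updateforest as stated keeps all Steiner nodes at degree $\ge 3$ regardless of which pair $\A$ chose to merge, which it does since Step~3 and Step~6 of \updateforest only ever remove degree-2 Steiner vertices.
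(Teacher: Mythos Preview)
Your proposal is correct and matches the paper's intended argument: the paper simply asserts that Theorem~\ref{thm:one-tree-is-cheap} ``generalizes'' to Corollary~\ref{cor:one-tree} without spelling out a proof, and what you have written is precisely that generalization --- replacing $\Delta_t$ by $\delta_t$ throughout, using Fact~\ref{fact:gen} in place of Claim~\ref{clm:merge-mono}, and noting that since $\Ts$ is a single tree every iteration is relevant and faithfulness is automatic. Your identification of Lemma~\ref{lem:long} as the one place where the lower-bound role of $\Delta_{t_0}$ must be checked (and the observation that $\delta_{t_0}$ serves that role by definition) is exactly the point.
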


An important remark: this corollary is not making any claim about the
\emph{merging cost} of $\A$; at any iteration $\A$ could be connecting two
active supernodes which are much farther apart than $\delta_t$.

%%% Local Variables: 
%%% mode: latex
%%% TeX-master: "gluttonous"
%%% End: 

\section{A Timed Greedy Algorithm}
\label{sec:timed-version}
We now give a version of the gluttonous algorithm \timedglut where
supernodes are deemed active or inactive based on the current time and
not whether the terminals in the supernode have paired up with their
mates.\footnote{Timed versions of the primal-dual algorithm for Steiner
  forest had been considered previously in~\cite{GKPR, KLS05}; our
  version will be analogous to that of K\"onemann et al.~\cite{KLS05}
  which were used to get cross-monotonic cost-shares for Steiner
  forest.} This version will be useful in getting a strict cost-sharing
scheme.

The algorithm \timedglut is very similar to the gluttonous algorithm
except for what constitutes an active supernode.
\short{ Again, we shall maintain clustering of terminals into supernodes; and at each iteration $t$,
we shall merge the two closest active supernodes in the current metric. This will ensure that
the merging distances are monotone(~\autoref{clm:merge-mono}). Therefore, we can divide the
iterations of the algorithm into {\em stages}, where stage $i$ denotes the iterations where
the merging distance lies in the range $[\cee^i, \cee^{i+1})$.

 For a terminal $s$,
define $ \level(s)$ as $\lceil \log_\cee d_\M(s, {\bar s}) \rceil$ (recall that $\M$ denotes the
original metric). For a supernode $S$, define its {\em leader} as the
terminal in $S$ with highest level; in case of ties, choose the terminal with the
smallest index among these. A supernode $S$ will be active during stage $i$ of the algorithm
if the leader of $S$ has level at least $i$ (otherwise it is declared inactive).
Note that a supernode $S$  can remain active even if
for every terminal in it, its mate also lies in $S$. However, we can show that even this algorithm
has constant approximation ratio.
}

\full{
We  will again maintain a clustering of terminals (into supernodes) --
let $\clus{t}$ be the clustering at the beginning of iteration $t$.
Initially, at iteration $t=1$, $\clus{1}$ is the
 trivial clustering (consisting
of singleton sets of terminals). We maintain a set of edges $E'$ will be the
set of edges bought by the algorithm. Initially, $E' = \emptyset$.

We shall use $\Delta_t$ to denote the closest distance (in the metric $\M/\clus{t}$)
between two active supernodes in $\clus{t}$. Our algorithm will only merge active supernodes,
and an inactive supernode will not become active in future iterations. It follows that
$\Delta_t$ cannot decrease with $t$ (Fact~\ref{fct:dist-mono}). This allows us to divide the execution
of the algorithm into {\em stages}. Stage $i$ consists of those iterations $t$ for which $\Delta_t$ lies
in the range $[\cee^i, \cee^{i+1})$ (the initial stage belongs to stage 0, because we can assume w.l.o.g.\  that
the minimum distance between the terminals is 1).

For a terminal $s$,
define
\begin{gather}
  \level(s) := \lceil \log_\cee d_\M(s, {\bar s}) \rceil.
\end{gather}
Note that distances in this definition are measured in the original
metric $\M$. For a supernode $S$, define its {\em leader} as the
terminal in $S$ whose distance to its mate is the largest (and hence has
the largest level); in case of ties, choose the terminal with the
smallest index among these.

We shall use $\stagecl{i}$ to denote the clustering at the beginning of stage $i$ (note the
change in notation with respect to the clustering at the beginning of an iteration $t$, which will
be denoted by $\clus{t}$. So, if $t_i$ denotes the first iteration of stage $i$, then $\stagecl{i}$
is same as $\clus{t_i}$).
Now we specify when a supernode becomes inactive.
A terminal $s$ is {\em active} at the beginning of stage $i$ if $\level(s) \geq i$.
 A supernode $S$ will be  \emph{active} at the beginning of  a stage $i$ if $\level(\leader(S)) \geq
i$. Observe that supernodes do not become inactive {\em during} a stage -- if a terminal is active
at the beginning of a stage, it remains active during each of the iterations in this stage.

By the definition of a stage, the algorithm will satisfy the
invariant that the distance between any two active supernodes in $\stagecl{i}$
(in the metric $\M/\stagecl{i}$) is at least $\cee^i$. During stage $i$, the algorithm
repeatedly performs the following steps in each iteration $t$: pick any two
arbitrary pair of active supernodes $S', S''$ which are at most $\cee^{i+1}$ apart
(in the metric $\M/\clus{t}$). Further, we take any such $S'$-$S''$ path of length at
 most $\cee^{i+1}$ (in the graph induced by the metric $\M/\clus{t}$ on the vertex set
 $\clus{t}$) and add the edges (which go between supernodes) to $E'$.

Stage~$i$ ends when the merging distance between all remaining active
supernodes is at least $\cee^{i+1}$. Observe that when the algorithm
stops, we have a feasible solution---indeed, each terminal $s$ will
merge with its mate ${\bar s}$ by the end of stage $\level(s)$. At the
end, output a maximal acyclic subgraph of $E'$.

\full{
The analysis of \timedglut goes along the same lines as that of the
gluttonous algorithm. The analog of Theorem~\ref{thm:one-tree} is as
follows:
\begin{theorem}
  \label{thm:one-tree-timed}
  Let $\Fs = \{T_1^\star, T_2^\star, \ldots, T_p^\star\}$ be an optimal
  solution to the \stf instance $\I = (\M, \D)$.  Let clustering $\C^f$
  be produced by some run of the \timedglut algorithm. There exists
  another solution $\Fss$ for instance $\I$ such that
  \begin{OneLiners}
  \item[(a)] $\cost(\Fss) \leq \cost(\Fs) + \ceesq \cdot \width(\Fs) \leq
    \ceesqplusone \cdot \cost(\Fs)$, and
  \item[(b)] $\Fss$ is faithful to the clustering $\C^f$.
  \end{OneLiners}
\end{theorem}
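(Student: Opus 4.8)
The plan is to mimic the proof of Theorem~\ref{thm:one-tree} almost verbatim, with the key difference that the notion of ``active'' now depends on the stage rather than on whether a terminal is paired with its mate. I start with $\Fss = \Fs$ and run \timedglut on $\I$; whenever an iteration merges two active supernodes $S$ and $S'$ whose terminals lie in different trees of the current $\Fss$, I patch $\Fss$ by adding (a subset of) the shortest $S$--$S'$ path in $\M/\C$ to connect up the relevant trees, exactly as in Case~II there. As before I maintain the faithfulness invariant (B): $\Fss$ is faithful to the current clustering $\C$. The only thing that genuinely changes is the cost-accounting invariant (A): I maintain that the cost of edges in $\Fss \setminus \Fs$ is at most $\ceesq\cdot(\width(\Fs) - \width(\Fss))$, i.e., with an extra factor of $\ceesq = 4$ in front of the width potential.

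The reason the factor $\ceesq$ appears is the bound on $\cost(P)$, the length of the patching path. In the untimed algorithm, greedy merges the globally closest active pair, so $\cost(P) \leq \min\{d_{T_1^{\star\star}}(u,\bar u), d_{T_2^{\star\star}}(v,\bar v)\}$ because $u$ could instead have walked to $\bar u$ inside its tree; since $u$ is still active (unpaired), that bound holds. In \timedglut, a merge happening in stage $i$ uses a path of length at most $\cee^{i+1}$, and $u$ being active at the start of stage $i$ only tells us $\level(u) \geq i$, i.e., $d_\M(u,\bar u) \geq \cee^{i}$ rather than $\geq \cee^{i+1}$. So instead of charging against the exact width of the tree containing $u$, I argue: let $u\in S$ be active at stage $i$, so $d_{T_1^{\star\star}}(u,\bar u) \geq d_\M(u,\bar u) \geq \cee^{i}$, hence $\cost(P) \leq \cee^{i+1} = \cee \cdot \cee^{i} \leq \cee \cdot \width(T_1^{\star\star})$. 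Taking the analogous bound from the $v$ side and noting that the merge of $T_1^{\star\star},\ldots,T_k^{\star\star}$ (all inactive coarsenings of $\Fs$... wait, here a tree can be ``active'' in the timed sense even if every terminal is paired, so I must instead observe that a tree of $\Fss$ that is hit by the patch has width at least $\cee^{i}$ on at least one side, and the decrease in $\width(\Fss)$ from merging is at least $\min\{\width(T_1^{\star\star}),\width(T_2^{\star\star})\}\geq \cee^{i} \geq \cost(P)/\cee$). With the loss factor $\cee = 2$ here this would give $\ceesq$; I will be slightly more careful, but the point is that invariant~(A) with constant $\ceesq$ survives.

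The main obstacle, and the place the argument needs real care, is precisely the fact that in \timedglut a supernode can stay active even after all its terminals are paired with their mates, so the phrase ``each tree that we merge is inactive since it is a coarsening of $\Fs$'' from the proof of Theorem~\ref{thm:one-tree} is no longer literally available. I therefore cannot just say the merged trees are inactive; instead I track that the width of every affected tree of $\Fss$ dropped by at least $\cee^{i}$ (equivalently, that some demand pair inside it has tree-distance at least $\cee^{i}$ — which holds because the patch was triggered by an active supernode at stage $i$, whose leader has level $\geq i$, and the leader's mate lies in the same tree by faithfulness). Summing invariant~(A) over the whole run, the final width is nonnegative and the initial potential is $\ceesq\cdot\width(\Fs) \leq \ceesq \cdot \cost(\Fs)$, so $\cost(\Fss \setminus \Fs) \leq \ceesq\cdot\width(\Fs)$, giving $\cost(\Fss) \leq \cost(\Fs) + \ceesq\cdot\width(\Fs) \leq \ceesqplusone\cdot\cost(\Fs)$ as claimed, and faithfulness to $\C^f$ follows from invariant~(B) at termination.
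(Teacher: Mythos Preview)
Your proposal is correct and follows exactly the paper's approach: take $u,v$ to be the \emph{leaders} of $S,S'$ (so $\level(u),\level(v)\geq i$), bound the patching cost by $\cee^{i+1}\leq \ceesq\cdot\min(\width(T_1^{\star\star}),\width(T_2^{\star\star}))$, and otherwise copy Theorem~\ref{thm:one-tree} verbatim. Your flagged ``main obstacle'' is a non-issue: the trees in $\Fss$ are always coarsenings of the feasible forest $\Fs$ (you only ever merge trees), so every demand pair lies entirely inside one tree of $\Fss$ and the width of a merged tree is exactly the max of the constituent widths---this has nothing to do with the timed algorithm's notion of active supernodes, which is why the paper simply says ``the rest of the argument remains unchanged.''
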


\begin{proof}
  The proof is very similar to that of Theorem~\ref{thm:one-tree}, where
  we look over the run of \timedglut again to alter $\Fs$ into
  $\Fss$. Since \timedglut makes some arbitrary choices, we make the
  same choices consistently in this proof. We
  ensure very similar invariants:
  \begin{itemize}
  \item[(A)] The cost of edges in $\Fss \setminus \F^\star$ is at most $
    \ceesq (\width(\F^\star) - \width(\Fss))$, and
  \item[(B)] at any point in time, the forest $\Fss$ is faithful to the
    current clustering $\C$.
  \end{itemize}
  Observe the extra factor of $\ceesq$ in invariant~(A). Again, let two active
  supernodes $S'$ and $S''$ be merged in some stage $i$, and let $u$ and
  $v$ be the leaders of these supernodes respectively. The argument in
  Case~I remains unchanged. In Case~II, let $T_1^{\star\star}$ and
  $T_2^{\star\star} $ be the trees containing $u$ and $v$
  respectively. Being in stage $i$, we know that $\level(u), \level(v)
  \geq i$, since they are both still active, and that the distance
  between $S'$ and $S''$ in the current metric is at most $\cee^{i+1}$, since
  all merging costs in stage~$i$ lie between $\cee^i$ and $\cee^{i+1}$. So the
  cost of connecting $T_1^{\star\star}$ and $T_2^{\star\star}$ is at
  most
  \begin{align*}
    \cee^{i+1} \leq \min(\cee^{\level(u)+1}, \cee^{\level(v)+1}) &\leq
    \ceesq \cdot \min(d_{T_1^{\star\star}}(u, \bar{u}),
    d_{T_2^{\star\star}}(v, \bar{v})) \\
&\leq \ceesq \cdot
    \min(\width(T_1^{\star\star}), \width(T_2^{\star\star})).
  \end{align*}
  The rest of the argument remains unchanged.
\end{proof}
}
}
\begin{theorem}
  \label{thm:TG-approx}
  The \timedglut algorithm is a $\gammatg$-approximation algorithm for
  Steiner forest, where \full{$\gammatg=96 \times 5 = 480$.} \short{$\gammatg=480$.}
\end{theorem}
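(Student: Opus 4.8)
The plan is to mirror the two-step analysis of the plain gluttonous algorithm, now applied to \timedglut, and simply multiply the two constants together. The first step is to invoke Theorem~\ref{thm:one-tree-timed}: there is a forest $\Fss$ that is faithful to the final clustering $\C^f$ produced by \timedglut and whose cost is at most $\ceesqplusone\cdot\cost(\Fs) = 5\,\cost(\Fs)$. The second step is to bound the total merging cost of \timedglut by $O(1)\cdot\cost(\Fss)$. The key point is that \timedglut fits the abstract framework of \autoref{sec:extension}: it maintains a clustering of terminals into supernodes, merges only active supernodes two at a time, and inactive supernodes stay inactive. By Fact~\ref{fact:gen}, the quantity $\delta_t$ (the minimum $\M/\clus{t}$-distance between two active supernodes) is non-decreasing in $t$, so Corollary~\ref{cor:one-tree} applies and gives $\sum_t \delta_t \le 48\cdot\cost(\Ts)$ for every tree $\Ts$ in $\Fss$; summing over the trees of $\Fss$ and using faithfulness gives $\sum_t \delta_t \le 48\cdot\cost(\Fss)$.

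The one subtlety — and the step I expect to be the main obstacle — is that Corollary~\ref{cor:one-tree} bounds $\sum_t \delta_t$, the sum of the \emph{minimum} inter-active-supernode distances, whereas \timedglut pays, in iteration $t$ of stage $i$, the length of an $S'$--$S''$ path that is only guaranteed to be at most $\cee^{i+1}$, not close to $\delta_t$. The remark after Corollary~\ref{cor:one-tree} explicitly warns that it says nothing about the actual merging cost. To close this gap I would exploit the staging structure: in stage $i$ every active pair has $\M/\clus{t}$-distance at least $\cee^i$, so $\delta_t \ge \cee^i$ for every iteration $t$ in stage~$i$, while the cost charged in that iteration is at most $\cee^{i+1} = \cee\cdot\cee^i \le \cee\cdot\delta_t$. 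Since the edges added in step (iii) only ever total at most the merging-path length and we output a maximal acyclic subgraph (cf.\ Fact~\ref{fct:can-make-forest}), the cost of the \timedglut solution is at most $\sum_t (\text{cost in iteration }t) \le \cee \sum_t \delta_t$.

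Putting the pieces together:
\[
\cost(\timedglut\text{ solution}) \;\le\; \cee \sum_t \delta_t \;\le\; \cee\cdot 48\cdot\cost(\Fss) \;\le\; \cee\cdot 48\cdot \ceesqplusone\cdot\cost(\Fs) \;=\; 2\cdot 48\cdot 5\cdot\cost(\Fs) \;=\; 480\cdot\cost(\Fs),
\]
which is the claimed bound $\gammatg = 96\times 5 = 480$ once one notes $96 = \cee\cdot 48$. (One should double-check that Theorem~\ref{thm:one-tree-timed} indeed yields $\cost(\Fss)\le \ceesqplusone\cdot\cost(\Fs)$ with $\ceesqplusone = 5$ and that the path-length-to-$\delta_t$ comparison above uses exactly the ratio $\cee$ built into $\gammatg$, so that no constants are lost or double-counted.) The remaining details — that \timedglut terminates with a feasible forest, and that the abstract hypotheses of \autoref{sec:extension} are literally satisfied — were already verified in the construction of \timedglut, so they need only be cited.
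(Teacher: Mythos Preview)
Your proposal is correct and follows essentially the same two-step route as the paper: invoke Theorem~\ref{thm:one-tree-timed} for the faithful forest $\Fss$ with $\cost(\Fss)\le 5\cdot\cost(\Fs)$, then use the stage structure to bound each iteration's merging cost by $\cee\cdot\delta_t$ and apply Corollary~\ref{cor:one-tree} tree-by-tree to get the factor $\cee\cdot 48 = 96$. One point to tighten: your sentence ``$\sum_t \delta_t \le 48\cdot\cost(\Ts)$ for every tree $\Ts$ in $\Fss$'' is not literally what Corollary~\ref{cor:one-tree} gives, since a single tree $\Ts_r\in\Fss$ is not a solution to the full instance $\I$; the paper handles this by restricting the clustering to $V(\Ts_r)$, defining the per-tree minimum distance $\delta_{t,r}\ge\delta_t$, and applying the corollary to that restricted run --- your ``summing over the trees and using faithfulness'' is exactly this step, just stated less explicitly.
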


\full{
\begin{proof}
   Consider a solution $\Fss$ which is faithful with respect to the final clustering
   produced by the \timedglut algorithm. Suppose there are $m_i$ iterations during
   stage $i$. Then the total merging cost of the algorithm is at most $\sum_i \cee^{i+1} \cdot m_i$.

   We would like to use Corollary~\ref{cor:one-tree}. Let $\Fss$ consist of the trees
   $\Tss_1, \ldots, \Tss_k$. For a tree $\Tss_r$, and a stage $i$, let $I_{i,r}$ denote the
   iterations when we merge two supernodes with terminals belonging to the tree $V(\Tss_r)$ (note that
   the faithfulness property implies that there will be such a tree for each iteration of the
   algorithm). Let $m_{i,r}$ denote the cardinality of $I_{i,r}$. Clearly, $\sum_r m_{i,r} = m_r$.
   For an iteration $t$, and index $r$, let $\clusr{t}{r}$ denote the supernodes in $\clus{t}$ with
   terminals belonging to $V(\Ts_r)$. Define $\delta_{t,r}$ as the closest distance (in the metric
   $\M/\clus{t}$) between any two active supernodes with terminals belonging to $V(\Ts_r)$. If the iteration
   belongs to stage $i$, then $\delta_{t,r} \geq \cee^i$. Using Corollary~\ref{cor:one-tree}, we get
   $$ \sum_i \cee^{i+1} m_i \leq \cee \cdot \sum_r \sum_i \cee^i \cdot m_{i,r} \leq 98 \cdot \sum_r \cost(\Ts_r).$$
   The result now follows from Theorem~\ref{thm:one-tree-timed}.
\end{proof}

\subsection{An Equivalent Description of \timedglut}
\label{sec:equiv-timed}

An essentially equivalent way to state the \timedglut algorithm is as
follows. For a stage $i$, let $\M_i$ denote the metric $\M/\stagecl{i}$
corresponding to the clustering at the beginning of stage $i$. Construct
an auxiliary graph $\stageg{i}$ with vertex set being
the set of supernodes in $\stagecl{i}$, and edges between two vertices if the
two corresponding supernodes are active and the
distance between them is at most $\cee^{i+1}$ in
the metric $\M_i$. Pick a maximal acyclic set of edges $\pairs{i}$ in
this auxiliary graph $\stageg{i}$.
\begin{OneLiners}
\item For each edge $(S_1, S_2) \in \pairs{i}$, merge the supernodes
  $S_1, S_2$. Hence the clustering $\stagecl{i+1}$ at the end of stage $i$ is
  obtained by merging together all the supernodes that fall within a
  connected component of the subgraph $(\stageg{i}, \pairs{i})$.
\item For each edge $(S_1, S_2) \in \pairs{i}$, add edges corresponding
  to a path of length $< \cee^{i+1}$ in $\M_i$ to a set of edges $E_i$.
\end{OneLiners}
Finally, output a maximal sub-forest of the edges $\cup_i E_i$ added during this
process.

 One can now check that this algorithm is equivalent to the
\timedglut algorithm as described above; the key observation is that
because of the definition of the timed algorithm, an active terminal in
stage~$i$ stays active throughout the stage, and does not become
inactive partway through it. More formally, we have the following observation.
\begin{fact}
\label{fact:equiv}
Consider an execution of the \timedglut algorithm on an input $\I$. Then one can define  graphs $\stageg{i}$
and $\pairs{i}$ for each stage $i$ such that the set of supernodes at the beginning of stage $i$ in the
above algorithm is same as that of the \timedglut algorithm. Further, the two algorithms pick
the same set of edges in each stage.
\end{fact}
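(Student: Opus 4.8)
The plan is to exhibit, for a fixed execution of \timedglut, the graphs $\stageg{i}$ and the acyclic edge sets $\pairs{i}$ directly from the trace of that execution, and then to argue by induction on the stage index $i$ that the two processes agree on the clustering at the start of each stage and on the edges purchased within each stage. First I would record what \timedglut does during stage~$i$: it starts from the clustering $\stagecl{i} = \clus{t_i}$, and in each iteration $t$ of the stage it picks two active supernodes at distance at most $\cee^{i+1}$ in $\M/\clus{t}$ and merges them, adding a short path. The key structural input — already noted in the paper — is that activity status is frozen during a stage: a supernode active at the start of stage~$i$ (i.e.\ $\level(\leader(S)) \geq i$) stays active through every iteration of stage~$i$, since the leader's level only depends on the original metric $\M$ and merging can only raise a supernode's leader level. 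Hence the ``active'' label is well-defined on the whole stage, not just iteration-by-iteration.

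Next I would construct $\stageg{i}$ exactly as in \S\ref{sec:equiv-timed}: vertices are the supernodes of $\stagecl{i}$, with an edge $\{S_1,S_2\}$ whenever both are active and $d_{\M_i}(S_1,S_2) \le \cee^{i+1}$, where $\M_i = \M/\stagecl{i}$. To define $\pairs{i}$, I would look at the sequence of merges \timedglut actually performs in stage~$i$: each such merge, when traced back to the supernodes of $\stagecl{i}$, corresponds to joining two connected components of the graph built so far; I let $\pairs{i}$ be the set of ``representative'' edges of these merges (one edge per merge, with endpoints the $\stagecl{i}$-supernodes witnessing the merge). By construction this edge set is acyclic — each edge strictly decreases the number of components — and I would then argue it is \emph{maximal} acyclic in $\stageg{i}$: if some edge $\{S_1,S_2\}$ of $\stageg{i}$ could be added without creating a cycle, then at the moment stage~$i$ ended the two supernodes containing $S_1$ and $S_2$ would still be distinct and active, and by Fact~\ref{fct:dist-mono} (distances only shrink under refinement) their distance in the current metric would be at most $d_{\M_i}(S_1,S_2) \le \cee^{i+1}$, so \timedglut would not have been allowed to end the stage — contradiction. (If \timedglut's tie-breaking is fixed, I'd fix the corresponding choices in the equivalent algorithm so that the two produce literally the same $\pairs{i}$; otherwise one shows any maximal acyclic choice leads to the same clustering, which is the standard fact that the components of a maximal spanning forest of a graph are exactly its connected components.)

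With $\pairs{i}$ in hand the inductive step is routine: the clustering $\stagecl{i+1}$ in the equivalent algorithm is obtained by coalescing each connected component of $(\stageg{i},\pairs{i})$, and since the components of that graph are precisely the components of the full graph $\stageg{i}$ (maximal acyclic sets span the same components), this coincides with the partition induced by all the merges \timedglut performed in stage~$i$ — which is exactly \timedglut's clustering at the start of stage~$i+1$. For the edges purchased, each edge of $\pairs{i}$ contributes a short path of length $<\cee^{i+1}$ in $\M_i$ to $E_i$, and these are in bijection with the paths \timedglut bought for the corresponding merges; one has to observe that a path of length $<\cee^{i+1}$ in $\M_i = \M/\stagecl{i}$ can be taken for each merge even though \timedglut worked in the finer metric $\M/\clus{t}$, but this is immediate since $\M/\stagecl{i}$ has distances no smaller than $\M/\clus{t}$, so the witnessing path used by \timedglut already lives in $\M_i$ with the same length bound. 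This closes the induction and proves that the two algorithms agree stage by stage, both on supernodes and on purchased edges.

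The main obstacle I anticipate is the bookkeeping around tie-breaking and the ``same choices made consistently'' clause: \timedglut as written makes arbitrary choices both of which pair to merge and of which short path to buy, so ``equivalent'' has to be read as ``the equivalent algorithm, run with matching choices, produces the same output'' — and one must be careful that the order of merges within a stage does not matter for the final stage-$(i+1)$ clustering (it doesn't, because the clustering only depends on the set of edges $\pairs{i}$, not the order they are added) nor for the purchased edge set (it does not, provided we record which path was used for each merge). Making this precise — separating the genuinely order-independent content (component structure, total edge set) from the order-dependent labels — is the only delicate point; everything else follows from Fact~\ref{fct:dist-mono} and the frozen-activity observation.
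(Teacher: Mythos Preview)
Your overall plan---induct on stages, use the frozen-activity observation, and build $\pairs{i}$ from the trace of \timedglut---is the right one and matches the paper's (brief) justification. But there is a genuine gap in the edge-matching step, and it infects the clustering argument as well.

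You write that ``$\M/\stagecl{i}$ has distances no smaller than $\M/\clus{t}$, so the witnessing path used by \timedglut already lives in $\M_i$ with the same length bound.'' The inequality is stated correctly, but the conclusion is backwards: precisely because $\M_i$ has \emph{larger} distances than $\M/\clus{t}$, a path of length $\le\cee^{i+1}$ in $\M/\clus{t}$ can be strictly longer in $\M_i$. Concretely, \timedglut's path uses free hops inside $\clus{t}$-supernodes; when such a supernode is an \emph{active} one formed by earlier merges in the same stage, those hops cross distinct $\stagecl{i}$-supernodes and are no longer free in $\M_i$. So neither the path nor (in general) any single segment of it gives you an edge of $\stageg{i}$ between a $\stagecl{i}$-constituent of $S'$ and one of $S''$.

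This same issue undercuts your construction of $\pairs{i}$: you never verify that the ``representative edge'' you pick for each merge is actually an edge of $\stageg{i}$. Without that, your maximality argument only shows one containment (every $\stageg{i}$-component sits inside one end-of-stage supernode); it does not rule out \timedglut merging across two distinct $\stageg{i}$-components. The fix is to trace the short $\M/\clus{t}$-path through the sequence of $\clus{t}$-supernodes it visits: inactive intermediates are unchanged from $\stagecl{i}$ and contribute free hops in $\M_i$ too, while for each active intermediate you invoke the inductive hypothesis (its $\stagecl{i}$-constituents already lie in one $\stageg{i}$-component). Between consecutive \emph{active} supernodes on the path the real edges sum to at most $\cee^{i+1}$ in $\M_i$, giving genuine $\stageg{i}$-edges that chain the component of $S'$ to that of $S''$. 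With that in hand, both the clustering equivalence and a valid choice of $\pairs{i}\subseteq E(\stageg{i})$ follow.
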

}

\section{Cost Shares for Steiner Forest}
\label{sec:cost-shares}

A \emph{cost-sharing} method is a function $\chi$ mapping triples of the
form $(\M,\D, (s, {\bar s}))$ to the non-negative reals, where $(\M,\D)$
is an instance of the Steiner forest problem, and $(s,{\bar s}) \in
\D$. We require the cost-sharing method to be \emph{budget-balanced}:
if $\Fs$ is an optimal solution to the instance $(\M,\D)$ then
\begin{align}
  \label{eq:cost-share}
  \sum_{ (s, {\bar s}) \in \D} \chi(\M,\D, (s, {\bar s})) \leq \cost(\Fs).
\end{align}

\full{
We will consider \emph{strict} cost-shares; these are useful for several
problems in network design (see details in the introduction).  There are
 two versions of
strictness: uni-strictness, and strictness. Uni-strict cost-shares for
Steiner forest were given by~\cite{GKPR, FKLS10}, whereas \gs
cost shares for Steiner forest have remained an open problem. We show
how to get both using the \timedglut algorithm.

\subsection{Uni-strict Cost Shares for Steiner Forest}
\label{sec:strict-cs}

\begin{definition}
\label{def:strict}
  Given an $\alpha$-approximation algorithm $\A$ for the Steiner forest problem, a cost sharing
  $\chi$ is called \emph{$\beta$-\us} with respect to $\A$ if for all
  demand pair $(s, {\bar s})$, the cost share $\chi(\M,\D,(s, {\bar s}))$ is at
  least $1/\beta$ times the distance between $s$ and $\bar s$ in the
  graph $G/F$, where $F$ is the forest returned by algorithm $\A$ on the
  input $(\M,\D-\{s, {\bar s}\})$.
\end{definition}
Our objective is to find an algorithm $\A$ and the associated cost share $\chi$ such that the parameters
$\alpha$ and $\beta$ are both constants.
\subsubsection{Defining $\chi$ and $\A$}
\label{sec:csot-share-def}

Let the constant $\gammatg$ denote the approximation ratio of the
algorithm \timedglut. The cost-sharing method is simple: for a terminal
$s$, let $\ell_s$ be the largest value such that $s$ is a leader in
stage $\ell_s$ \emph{and} its supernode is merged with some other
supernode during this stage (note that a supernode can go from being active
in the beginning of a stage to becoming inactive in the next stage without merging
with any supernode; this can happen because all terminals in it become inactive
in the next stage). Then
\begin{gather}
  \chi(\M,\D, (s, {\bar s})) :=
  \frac{\cee^{\ell_s} + \cee^{\ell_{\bar s}}}{2\gammatg}. \label{eq:chi}
\end{gather}

The algorithm $\A$ is a slight variant on the \timedglut
algorithm. Given an instance $(\M,\D)$, run the algorithm \timedglut on
this instance to get forest $F$. Now merge some of the trees in $F$ as
follows. Recall that the width of each tree $T$ in $F$ is defined to be
$\width(T) := \max_{(s, {\bar s}) \in T} d_T(s, {\bar s})$, where
$d_T(s, {\bar s})$ denotes the distance between $s$ and $\bar s$ in the
tree $T$.
While there are trees $T_1, T_2 \in F$ such that $d_{\M/F}(T_1, T_2)
\leq 5 \min(\width(T_1), \width(T_2))$, connect $T_1, T_2$ by a path
of length $d_{\M/F}(T_1, T_2)$ to get a tree $T$, and update $F \gets (F
\setminus \{T_1, T_2\}) \cup \{T\}$. Here, $d_{\M/F}(T_1, T_2)$ denotes the
minimum over all pairs $u \in T_1, v \in T_2,$ of $d_{\M/F}(u,v)$.

\subsubsection{Analysis}

We now prove that the cost sharing method $\chi$ is $\beta$-\us
with respect to $\A$, where $\beta$ is a constant. Recall that $F$ denotes
the forest returned by the algorithm $\A$.

To begin, observe
that the algorithm $\A$ is also a constant-factor approximation.

\begin{lemma}
\label{lem:A-is-good}
The algorithm $\A$ is an $6\gammatg$-approximation for Steiner forest.
\end{lemma}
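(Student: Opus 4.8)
The plan is to bound the cost of the forest $F$ produced by $\A$ in two parts: first the cost of the forest produced by \timedglut itself, and then the extra cost incurred by the merging phase that connects trees $T_1,T_2$ with $d_{\M/F}(T_1,T_2)\le 5\min(\width(T_1),\width(T_2))$. The first part is immediate from Theorem~\ref{thm:TG-approx}: the \timedglut solution costs at most $\gammatg\cdot\cost(\Fs)$, where $\Fs$ is an optimal Steiner forest. So the work is entirely in controlling the augmentation.

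For the merging phase, I would argue via a charging/amortization scheme against the widths of the trees being merged. Each time we merge $T_1$ and $T_2$, the path we buy has length at most $5\min(\width(T_1),\width(T_2))$, and the resulting tree $T$ has $\width(T)$ at least $\max(\width(T_1),\width(T_2))$ (a $(s,\bar s)$ pair realizing the larger width is still connected in $T$ by a path of at least that length). So if we track the potential $\Phi:=\sum_{T\in F}\width(T)$, each merge buys at most $5$ times the drop in $\Phi$ coming from the disappearance of the smaller-width tree — more precisely, the total width decreases by at least $\min(\width(T_1),\width(T_2))$ while we spend at most $5\min(\width(T_1),\width(T_2))$, so the total augmentation cost over all merges is at most $5$ times the initial value of $\Phi$. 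Now $\Phi$ at the start of the merging phase is $\sum_T \width(T)$ over trees of the \timedglut forest, and each $\width(T)\le\cost(T)$, so $\Phi\le\cost(\timedglut\text{ solution})\le\gammatg\cdot\cost(\Fs)$. Adding the two parts gives $\cost(F)\le\gammatg\cdot\cost(\Fs)+5\gammatg\cdot\cost(\Fs)=6\gammatg\cdot\cost(\Fs)$, which is exactly the claimed bound.

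The main obstacle I expect is making the potential argument airtight about what "the smaller-width tree disappears" does to the sum of widths: when $T_1$ and $T_2$ merge into $T$ with $\width(T)=\max(\width(T_1),\width(T_2))$, the decrease in the global potential is exactly $\min(\width(T_1),\width(T_2))$, so this is actually clean — but one should double-check that $\width(T)$ is not strictly larger than $\max(\width(T_1),\width(T_2))$, i.e.\ that the newly bought path does not create a longer tree-path between some already-connected demand pair. Since any demand pair $(s,\bar s)$ already connected within $T_1$ (resp.\ $T_2$) still has its original tree-path available in $T$ (we only add edges, and the tree structure along the old path is preserved), $d_T(s,\bar s)\le d_{T_i}(s,\bar s)\le\width(T_i)$; and no demand pair is newly connected because the two trees were already each internally feasible and the merge does not introduce a pair straddling them (if it did, that pair would have forced a connection earlier in \timedglut or would have to be accounted separately — here we should note $\width(T)$ is a max over pairs contained in a single tree, and straddling pairs simply were never unsatisfied). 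So $\width(T)=\max(\width(T_1),\width(T_2))$, and the amortization goes through with the factor exactly $5$. Combining with the factor $\gammatg$ from Theorem~\ref{thm:TG-approx} completes the proof.
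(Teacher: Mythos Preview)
Your proposal is correct and is essentially the same argument as the paper's. The paper phrases it with a single combined potential $\sum_{T\in F}(\cost(T)+5\,\width(T))$, shows it is at most $6\gammatg\cdot\OPT$ initially, and that each merge does not increase it; your version separates this into the \timedglut cost plus the augmentation cost bounded by $5$ times the drop in $\sum_T\width(T)$, which is the same bookkeeping unpacked. Your verification that $\width(T)=\max(\width(T_1),\width(T_2))$ (no demand pair straddles the two trees since the \timedglut forest is already feasible, and each $T_i$ sits as a subtree of $T$ so tree distances within it are preserved) is exactly the point the paper leaves implicit.
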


\begin{proof}
  Let $F'$ be the forest returned by the \timedglut algorithm (called by $\A$).
  Consider the potential $\sum_{T \in F'} ( c(T) + 5\, \width(T))$. Since
  the width of each tree is at most the cost of its edges, and since
  \timedglut was a $\gammatg$-approximation, this potential is at most
  $6\gammatg$ times the optimal cost. Now, observe that whenever $\A$ merges two trees
  of this forest, the potential of the new forest does not increase. Therefore, the
  potential of the forest $F$ is also at most $6 \gammatg$ times the optimal cost.
\end{proof}

\begin{lemma}
  \label{lem:cost}
  The function $\chi$ is a budget-balanced cost sharing method.
\end{lemma}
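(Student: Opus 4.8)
The plan is to reduce budget-balance to the single inequality
\[
\textstyle\sum_{u}\cee^{\ell_u}\ \le\ 2\gammatg\cdot\cost(\Fs),
\]
where the sum is over all $2K$ terminals and $\Fs$ is an optimal forest. Indeed, the demand pairs partition the terminal set and the pair $\{s,\bar s\}$ contributes $\chi(\M,\D,(s,\bar s))=(\cee^{\ell_s}+\cee^{\ell_{\bar s}})/(2\gammatg)$ by \eqref{eq:chi}, so the displayed bound gives at once $\sum_{(s,\bar s)\in\D}\chi(\M,\D,(s,\bar s))=\frac{1}{2\gammatg}\sum_{u}\cee^{\ell_u}\le\cost(\Fs)$, which is exactly \eqref{eq:cost-share}; all cost shares are clearly non-negative, and a terminal $u$ for which no stage witnesses $\ell_u$ simply contributes $0$ and is dropped from the sum.

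To prove the displayed inequality I would set up a charging from terminals to iterations of \timedglut. For a terminal $u$ with $\ell_u$ defined, let $t(u)$ be the first iteration of stage $\ell_u$ in which the supernode witnessing $\ell_u$ (whose leader is $u$) is merged; such an iteration exists by the definition of $\ell_u$. Two facts drive the argument. First, $t(u)$ lies in stage $\ell_u$, and throughout stage $i$ every pair of active supernodes is at distance at least $\cee^i$ in the current punctured metric (by the stage invariant together with the monotonicity of punctured distances, \autoref{fct:dist-mono}); hence the path added to $E'$ in iteration $t(u)$ has length at least $\cee^{\ell_u}$. Second, each iteration combines exactly two supernodes, each of which has a unique leader (ties broken by index), and $u$ is the leader of one of the two supernodes combined at $t(u)$; therefore at most two terminals $u$ satisfy $t(u)=t$ for any fixed iteration $t$. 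Combining the two facts, $\sum_{u}\cee^{\ell_u}\le\sum_u(\text{length of the path added at }t(u))\le 2\sum_t(\text{length of the path added at }t)$, i.e.\ at most twice the total merging cost of \timedglut. Finally, the proof of \autoref{thm:TG-approx} already shows that this total merging cost $\sum_i\cee^{i+1}m_i$ is at most $\gammatg\cdot\cost(\Fs)$ (via \lref{cor:one-tree} applied to each tree of the faithful near-optimal forest and \autoref{thm:one-tree-timed}), which yields $\sum_u\cee^{\ell_u}\le 2\gammatg\cdot\cost(\Fs)$ and hence the lemma.

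The argument is essentially bookkeeping, and the only points that need care are: (i) the convention for terminals whose supernode goes inactive at a stage boundary without ever merging — these drop out of the sum and cause no trouble for an upper bound; and (ii) the fact that one must invoke the bound on the \emph{total merging cost} of \timedglut, not merely on the cost of its output forest (the former is what \autoref{sec:timed-version} actually proves). I expect no genuine obstacle here — note in particular that the factor $2$ coming from "two leaders per merge" is precisely the $2$ appearing in the denominator of \eqref{eq:chi}, so the bound closes with no slack to spare.
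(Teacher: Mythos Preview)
Your proof is correct and takes essentially the same approach as the paper: both arguments charge each merge to the two leaders involved (you via the map $u\mapsto t(u)$ with fibers of size at most two, the paper by splitting each merging distance in half between the two leaders), use the stage-$i$ lower bound $\cee^i$ on merging distances to lower-bound the charge each leader receives, and then invoke the $\gammatg$-approximation bound on the total merging cost of \timedglut. Your presentation is slightly more explicit about the map and about the need to bound the total merging cost (not just the output forest), but the substance is identical.
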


\begin{proof}
  We need to prove the inequality~(\ref{eq:cost-share}). To do this, let
  us run \timedglut and ``charge'' the cost of merging two active
  supernodes to the leaders of the respective clusters---charge half of
  the distance between these two supernodes to the leaders of each of
  these supernodes.  Clearly, the total charge assigned to the terminals
  is equal to the total cost paid by the algorithm \timedglut, which at most
  $\gammatg\, \cost(\Fs)$. Finally, we make the observation that each
  terminal $s$ is charged at least $\gammatg$ times the cost share
  (since it is charged at least $\cee^{\ell_s}/2$ in  stage $\ell_s$)
  to complete the proof.
\end{proof}

To prove the uni-strictness property, fix a terminal pair $(s, {\bar s})$,
and consider two instances: $\I=(\M,\D)$ and $\I'=(\M,\D-\{(s,{\bar
  s})\})$. For instance $\I$, let $\stagecl{i}$ denote the set of
supernodes at the beginning of stage $i$; let $\stagecl{'i}$ be the
corresponding set for $\I'$. Let $\M_i$ and $\M_i'$ denote the metrics
$\M/\stagecl{i}$ and $\M/\stagecl{'i}$ respectively. Recall that
$\level(s) = \lceil \log_\cee d_\M(s, {\bar s}) \rceil$.

The following claim will be convenient to understand the behavior of
\timedglut.
\begin{lemma}
  \label{lem:timed-supernode}
  Consider stage $i$ in the execution \timedglut on the instance
  $\I$. Define a graph $\G_i$ on the vertex set $\stagecl{i}$, with an edge
  between two active supernodes $C_1, C_2 \in \stagecl{i}$ if there is a
  path of length at most $\cee^{i+1}$ between them in $\M_i$ that does not
  contain any other active supernode as an internal node. If the
  connected components of $\G_i$ are $H_1, \ldots, H_q$, then
  $\stagecl{i+1}$ has $q$ supernodes---one supernode for each $H_j$ (formed
  by merging the supernodes in $H_j$).
\end{lemma}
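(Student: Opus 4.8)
The plan is to show that the auxiliary graph $\G_i$ is, up to the choice of which maximal acyclic set of edges $\pairs{i}$ gets picked, exactly the structure that governs merging within stage $i$; since connected components do not depend on which spanning forest of them is chosen, the merge pattern is determined. Concretely, I would first invoke the equivalent description of \timedglut from \autoref{sec:equiv-timed} (Fact~\ref{fact:equiv}): stage $i$ picks a maximal acyclic set $\pairs{i}$ of edges in the graph $\stageg{i}$, where $\stageg{i}$ has an edge between two active supernodes of $\stagecl{i}$ whenever their $\M_i$-distance is at most $\cee^{i+1}$, and then merges everything within each connected component of $(\stageg{i},\pairs{i})$. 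Thus it suffices to prove that $\G_i$ and $\stageg{i}$ have the same connected components on the vertex set $\stagecl{i}$ (restricted to active supernodes; inactive supernodes are isolated in both).

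The one direction is immediate: every edge of $\G_i$ is an edge of $\stageg{i}$, because a path of length $\le \cee^{i+1}$ in $\M_i$ between $C_1,C_2$ witnesses $d_{\M_i}(C_1,C_2)\le\cee^{i+1}$. For the reverse direction I would take an edge $\{C_1,C_2\}$ of $\stageg{i}$, i.e.\ a shortest path $Q$ of length $\le\cee^{i+1}$ between $C_1$ and $C_2$ in $\M_i$, and walk along $Q$ recording the active supernodes it passes through, say $C_1 = D_0, D_1, \ldots, D_r = C_2$ in order. Each consecutive subpath of $Q$ between $D_{j}$ and $D_{j+1}$ has length at most the whole path, hence at most $\cee^{i+1}$, and by construction contains no active supernode as an internal node, so $\{D_j,D_{j+1}\}$ is an edge of $\G_i$. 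Therefore $C_1$ and $C_2$ lie in the same component of $\G_i$. This shows the component partitions coincide, so $\stagecl{i+1}$ has one supernode per component $H_j$, obtained by merging the supernodes in $H_j$ — which is the claim.

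The only subtlety — and the step I would be most careful about — is the bookkeeping of ``active'' along the path $Q$: a priori an internal vertex of $Q$ could be a supernode that is active at the \emph{beginning} of stage $i$ but becomes inactive partway through, or the lengths in $\M_i$ (the stage-start metric) versus $\M/\clus{t}$ (the within-stage metric) could differ. Here I would lean on the key structural fact already emphasised in \autoref{sec:equiv-timed}: in the timed algorithm a supernode active at the start of stage $i$ stays active throughout stage $i$, and supernodes (hence the punctured metric on the relevant vertices) only coarsen within a stage, so distances in $\M/\clus{t}$ are at most those in $\M_i$. This makes the ``no internal active supernode'' condition well-defined relative to $\stagecl{i}$ and makes the path decomposition legitimate, so no real obstacle remains beyond this consistency check.
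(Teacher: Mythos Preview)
Your proposal is correct and follows essentially the same approach as the paper: invoke Fact~\ref{fact:equiv} to reduce to showing that $\G_i$ and $\stageg{i}$ have identical connected components, then note that $E(\G_i)\subseteq E(\stageg{i})$ trivially, and for the reverse direction decompose a shortest $C_1$--$C_2$ path witnessing an edge of $\stageg{i}$ at the active supernodes it passes through to get a $\G_i$-walk from $C_1$ to $C_2$. Your extra paragraph about activity being stable throughout a stage is a reasonable sanity check but is already baked into the setup of \autoref{sec:equiv-timed}, so no additional work is needed there.
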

\begin{proof}
The statement is essentially the same as Fact~\ref{fact:equiv} except that in the
graph $\stageg{i}$ (defined in Section~\ref{sec:equiv-timed}), we join two active supernodes
$C_1, C_2 \in \stagecl{i}$ by an edge if the distance between them in the metric
$\M_i$ is at most $\cee^{i+1}$, whereas here in the graph $\G_i$, we wish
to have a path of length at most $\cee^{i+1}$ with no internal vertex being an active supernode.
We claim that the connected components in the two graphs are the same, and hence, the statement
in the lemma follows.

Clearly, an edge $e \in \G_i$ is present in $\stageg{i}$ as well. Now, consider an edge $(C_1, C_2)$
in $\stageg{i}$. Let $P$ be the shortest path of
length at most $\cee^{i+1}$ between $S_1$ and $S_2$ in the metric $\M_i$. Let the active supernodes
on this path be $C_1=C_{i_1}, C_{i_2}, \ldots, C_{i_p}=C_2$ (in this order). Then $\G_i$ has edges $(C_{i_r}, C_{i_{r+1}})$
for $r=1, \ldots, p-1$. Therefore $C_1$ and $C_2$ are in the same connected component of $\G_i$. This proves
the desired claim.
\end{proof}

\begin{theorem}[Nesting]
  \label{thm:nesting}
  For $i \leq \level(s)$, let $C_s$ and $C_{\bar s}$ be the supernodes
  in $\stagecl{i}$ containing $s$ and $\bar s$ respectively. The following
  hold:
  \begin{enumerate}
  \item[(a)] If $C_s \neq C_{\bar s}$, we can arrange the supernodes in
    $\stagecl{'i}$ as $C_1', \ldots, C_p'$ such that $C_s = \{s\} \cup
    C_1' \cup \ldots \cup C_a',$ $C_{\bar s}= \{ \bar s\} \cup C_{a+1}'
    \cup \ldots \cup C_b'$ for some $0 \leq a \leq b \leq p$. Moreover,
    $\stagecl{i}-\{C_s,C_{\bar s}\} = \{C_{b+1}', \ldots, C_p'\}$.  If
    $C_s=C_{\bar s}$, we can arrange the supernodes in $\stagecl{'i}$
    as $C_1', \ldots, C_p'$ such that $C_s = \{s, {\bar s}\} \cup C_1' \cup \ldots
    \cup C_b',$ for some $0 \leq b \leq p$. Also, $\stagecl{i}-\{C_s\} =
    \{C_{b+1}', \ldots, C_p'\}$.

  \item[(b)] Suppose $C_s$ and $C_{\bar s}$ are distinct supernodes. Then
  for any terminal $v \in C_s$,  $d_{\M'_i}(s, v) \leq
    2 \cdot \cee^{i}$.  Similarly, for any $v \in C_{\bar s}$,
    $d_{\M'_i}(\bar s, v) \leq 2 \cdot \cee^{i}$.
  \item[(c)] Suppose $C_s = C_{\bar s}$. Then, $d_{\M'_i}(s, {\bar s})
  \leq 4 \cdot \cee^{i}$.

  \end{enumerate}
\end{theorem}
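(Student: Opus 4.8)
The plan is to prove all three parts simultaneously by induction on the stage $i$, using the equivalent ``maximal acyclic set of edges'' description of \timedglut from \autoref{sec:equiv-timed} (or rather the cleaner ``no-internal-active-vertex'' version from Lemma~\ref{lem:timed-supernode}). The key point is that running \timedglut on $\I$ versus on $\I' = (\M, \D \setminus \{(s,\bar s)\})$ differs only in that the pair $(s,\bar s)$ is a demand in $\I$ but not in $\I'$; since the level of every \emph{other} terminal is unchanged, the set of active supernodes can differ only through how the components containing $s$ and $\bar s$ behave. So the induction hypothesis at stage $i$ is precisely statement~(a): the supernodes of $\stagecl{'i}$ refine those of $\stagecl{i}$, with $C_s$ (and $C_{\bar s}$, when distinct) being unions of $\I'$-supernodes together with the singleton(s) $\{s\}$ (resp.\ $\{\bar s\}$), and all other supernodes identical in the two runs.

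For the inductive step I would fix a stage $i < \level(s)$ and run one stage of \timedglut on both instances. Outside of $C_s \cup C_{\bar s}$ the two instances present identical metrics and identical active/inactive labels (here is where we use $i \le \level(s)$, so that $s,\bar s$ are still active in $\I$, hence their supernodes are active, so the ``ambient'' supernodes see the same picture in both runs), and since we break ties consistently the merges away from $C_s, C_{\bar s}$ are the same. Near $C_s$ and $C_{\bar s}$: in the $\I$-run these are (possibly) two supernodes that may merge with neighbours and with each other; in the $\I'$-run they are split into the pieces $C_1',\dots,C_b'$ plus the isolated points $s,\bar s$. I would argue that whatever $\I'$-supernodes get merged into $C_s$ during stage $i$ are exactly those reachable, in the graph $\G_i'$ of Lemma~\ref{lem:timed-supernode}, from one of the old pieces of $C_s$ — and symmetrically for $C_{\bar s}$ — which is what keeps the nesting structure intact into stage $i+1$. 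The one subtlety is that in $\I$ the supernode $C_s$ might be active only because $s$ points to $\bar s$, whereas in $\I'$ the pieces of $C_s$ not containing a terminal of level $\ge i$ from some other demand would be inactive; I would handle this by noting that inactive pieces simply do not merge, so they sit unchanged inside $C_s$ in $\stagecl{i}$ and as separate $C_j'$'s in $\stagecl{'i}$, exactly as the statement allows ($a$ can be less than $b$, etc.).

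For parts~(b) and~(c) I would piggyback on the same induction. The bound $d_{\M_i'}(s,v) \le 2\cee^i$ for $v \in C_s$ follows because $C_s$ was assembled over stages $0,1,\dots,i-1$ by merges, each of merging distance $< \cee^{j+1}$ in stage $j$; a terminal added in stage $j$ is within $\cee^{j+1} \le \cee^i$ (in the relevant punctured metric, which only shrinks) of the part of $C_s$ built so far, so telescoping a path from $s$ through these merges and bounding $\sum_{j<i}\cee^{j+1} \le \frac{\cee}{\cee-1}\cee^{i} = 2\cee^i$ (using $\cee = 2$) gives the claim; one has to be a little careful that the path is measured in $\M_i'$ and that contracting the current $\I'$-supernodes only decreases distances, which is Fact~\ref{fct:dist-mono}. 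Part~(c), the case $C_s = C_{\bar s}$, is then immediate by going from $s$ to the merge point and back out to $\bar s$, giving $d_{\M_i'}(s,\bar s) \le 4\cee^i$.

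The main obstacle I anticipate is the bookkeeping in the inductive step when $C_s$ and $C_{\bar s}$ merge with \emph{each other} during stage $i$ — the transition from the ``two distinct supernodes'' case of~(a) to the ``$C_s = C_{\bar s}$'' case — and, relatedly, making precise the claim that the connected-component structure of $\G_i$ on the pieces of $C_s \cup C_{\bar s}$ matches that of $\G_i'$ with $s,\bar s$ removed. Getting the active/inactive labels to line up between the two runs at the boundary of $C_s$, and checking that no $\I'$-supernode ``escapes'' to merge with something outside what becomes $C_s \cup C_{\bar s}$, is where the real care is needed; everything else is a routine telescoping estimate.
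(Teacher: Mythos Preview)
Your plan for part~(a) matches the paper's: induct on the stage, use the graphs $\G_i, \G_i'$ of Lemma~\ref{lem:timed-supernode}, and compare their connected components. The paper makes two observations: an edge of $\G_i'$ gives a path in $\G_i$ (so $\stagecl{'i+1}$ refines $\stagecl{i+1}$), and conversely an edge of $\G_i$ between two supernodes neither of which is $C_s$ or $C_{\bar s}$ survives in $\G_i'$ (because the witnessing short path cannot pass through the active $C_s$ or $C_{\bar s}$ as an internal vertex). These two facts give~(a); you have essentially the right picture.

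The gap is in your argument for part~(b). Your telescoping asserts that ``a terminal added in stage~$j$ is within $\cee^{j+1}$ of the part of $C_s$ built so far.'' But at stage~$j$ the supernode $C_s^{(j)}$ absorbs its \emph{entire} connected component in $\G_j$, and a supernode $B$ in that component may be several $\G_j$-hops away from $C_s^{(j)}$; in $\M_j$ the distance from $C_s^{(j)}$ to $B$ can be as large as $r\cdot\cee^{j+1}$ for a chain of length~$r$, not $\cee^{j+1}$. Invoking Fact~\ref{fct:dist-mono} does not help, since $\M_i'$ has \emph{fewer} contractions than $\M_j$ near $C_s$. What rescues the bound---and what you do not mention---is the following: on the $\G_j$-path $C_s^{(j)}=A_0,A_1,\ldots,A_r=B$, every edge except the first one $(A_0,A_1)$ is also an edge of $\G_j'$ (being incident to neither $C_s$ nor $C_{\bar s}$), so $A_1,\ldots,A_r$ collapse into a single supernode of $\stagecl{'j+1}$. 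Hence in $\M_{j+1}'$, and \emph{a fortiori} in $\M_i'$, the distance from $A_1$ to any $v\in B$ is zero, and one only pays for the single hop $C_s^{(j)}\to A_1$. The paper proves~(b) inductively via exactly this observation, getting $d_{\M_{i+1}'}(s,v)\le 2\cee^i + \cee^{i+1} = 2\cee^{i+1}$. Your telescoped sum $\sum_{j<i}\cee^{j+1}$ is just the unrolled form of that recursion, but it is not valid without first establishing this one-hop-per-stage bound. The same issue affects your sketch of~(c).
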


\begin{proof}
  We induct on $i$. At the beginning, all clusters are singletons, so
  the base case is easy. For the inductive step, suppose the statement
  of the theorem is true for some $i < \level(s)$. Assume that $C_s \neq
  C_{\bar s}$, the other case is similar. Apply
  Lemma~\ref{lem:timed-supernode} to stage $i$ in both $\I$ and $\I'$,
  and let $\G_i$ and $\G_{i}'$ be the corresponding graphs on the vertex
  sets $\stagecl{i}$ and $\stagecl{'i}$ (as defined in Lemma~\ref{lem:timed-supernode}).
  We know that
  the supernodes of $\stagecl{i+1}$ and $\stagecl{'i+1}$ correspond to the
  connected components of these graphs; we now use this information to
  prove the induction step.

  By the induction hypothesis, the supernodes in $\stagecl{i}$ can be
  labeled $C_{s}, C_{\bar s}, C_{b+1}', \ldots, C_p'$; moreover, we can
  define a map $\phi: V(\G_i') \to V(\G_i)$ as follows:
  \begin{gather}
    \phi(C_j') :=
    \begin{cases}
      C_s & 0 \leq j \leq a \\
      C_{\bar s} & a+1 \leq j \leq b \\
      C_j' \quad &  b+1 \leq j \leq p
    \end{cases}
  \end{gather}
  Suppose there is an edge between $C_j'$ and $C_k'$ in $\G_i'$. By the
  definition of $\G_i$, both are active supernodes, and the length of
  the shortest path between them in the metric $\M_i'$ is at most
  $\cee^{i+1}$. This path has no greater length in the metric $\M_i$, since
  the supernodes in $\stagecl{i}$ are unions of supernodes in
  $\stagecl{'i}$. This means there is a path between $\phi(C_j')$ and
  $\phi(C_k')$ in $\G_i$, i.e., the clustering $\stagecl{'i+1}$ is
  a refinement of $\stagecl{i+1}$ (because these clusterings are determined by
   the connected components of the corresponding graphs).

  Now consider an edge $e$ in $\G_i$. For the first part of the theorem,
  suppose $e=(C_j', C_k')$ where both $j,k \geq b+1$. If $P$ is the
  corresponding path in $\M_i$ between these two supernodes, then $P$ cannot
  contain $C_s$ or $C_{\bar s}$ as an internal node (because it does not
  contain any active supernodes as internal nodes, and both $C_s,
  C_{\bar s}$ are active). But then the length of $P$ remains unchanged
  in $\M_i'$, and we have the corresponding edge $(C_j', C_k')$ in
  $\G_i'$ as well. This means that all the connected components of
  $\G_i$ not containing $C_s$ or $C_{\bar s}$ also form connected
  components in $\G_i'$. Combined with the fact that $\stagecl{'i+1}$
  is a refinement of $\stagecl{i+1}$, this proves the part~(a) of the
  theorem for the case $C_s \neq C_{\bar s}$. (The proof for the other
  case is similar.)

  For part~(b),  let $H_1$ be the connected component of $\G_i$ which
  contains the supernode $C_s$ (as a vertex). So all the supernodes
  in $H_1$ will merge to form a single supernode of $\stagecl{i+1}$.
  As argued in the paragraph above, any
  edge in $H_1$ which is not incident with $C_s$ is also
  present in $\G_i'$(recall
  that we are assuming $C_{\bar s}$ is not one of the vertices in $H_1$).
  Let $v$ be a terminal in a supernode $B$ in $H_1$. Let the path from
  $C_s$ to $B$ in $H_1$ be $C_s= A_0, A_1, \ldots, A_r = B$. Since the edges
  $(A_1, A_2), \ldots, (A_{r-1}, A_r)$ belong to $\G_i'$ as well,
  $A_1, \ldots, A_r$ will lie in the same supernode in $\stagecl{'i+1}$. Therefore,
  $d_{\M_{i+1}'}(s, v) \leq
  \cee^{i+1} + 2 \cdot \cee^i = 2 \cdot \cee^{i+1}$, where the term
  $2 \cdot \cee^i$ is present to account for the distance between $s$ and the terminal
  in $C_s$ which is closest to $A_1$   -- this distance can be at most $2 \cdot \cee^i$ by the
  induction hypothesis.

  For part~(c), consider the last stage $i$ such that the supernodes $C_s$ and
  $C_{\bar s}$ are distinct (so the result in part~(b) applies to this stage).
  The same argument as above applies except that when we consider the path from
  $C_s$ to $C_{\bar s}$ in the component of $\G_i$ containing them, we will
  have to account for the first and the last edges in this path.
\end{proof}

We are now ready to prove the uni-strictness of the cost-shares.  We
run \timedglut on the instance $\I$ to get the cost shares, and let the
cost share for $(s, \bar s)$ be as in~(\ref{eq:chi}).
% $\ti$ be the first stage when neither $s$ nor $\bar s$ is the leader of
% the supernode containing it. Hence the sum of cost shares of $s$ and
% $\bar s$ is at least $\cee^{\ti-1}/2$.
Now let $F'$ be the forest returned by the algorithm $\A$ on the
instance $\I'$. Recall that  $\M/F'$ denotes the metric
$\M$ with the connected components in $F'$ contracted to single points.
\begin{lemma}
  \label{lem:costshare}
  The distance between $s$ and $\bar s$ in $\M/F'$ is at most $4  \cdot
  (\cee^{\ell_s+1} + \cee^{\ell_{\bar s}+1})$.
\end{lemma}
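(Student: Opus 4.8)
The plan is to bound the $\M/F'$-distance between $s$ and $\bar s$ by running through the stages of \timedglut on the instance $\I'$ (the instance without the pair $(s,\bar s)$) and tracking how close $s$ and $\bar s$ get to supernodes that will eventually be absorbed into large trees of $F'$. The key is the Nesting Theorem~\ref{thm:nesting}, which tells us that the supernodes of $\stagecl{'i}$ are ``nested inside'' the supernodes $C_s$ and $C_{\bar s}$ of $\stagecl{i}$, together with parts (b) and (c) which control the $\M_i'$-distance from $s$ (resp.\ $\bar s$) to any terminal living in $C_s$ (resp.\ $C_{\bar s}$).

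\textbf{Step 1.} Let $\ell_s, \ell_{\bar s}$ be the levels from the definition of $\chi$ in~(\ref{eq:chi}); without loss of generality assume $\ell_s \le \ell_{\bar s}$, and note both are at most $\level(s)$. Consider the run of \timedglut on $\I$. At stage $\ell_s$, the supernode $C_s$ containing $s$ is a leader-supernode that merges; by Theorem~\ref{thm:nesting}(a) applied at stage $\ell_s$, $C_s = \{s\}\cup C_1'\cup\cdots\cup C_a'$ for supernodes $C_j'$ of $\stagecl{'\ell_s}$, and by part~(b) every terminal $v\in C_s$ has $d_{\M'_{\ell_s}}(s,v)\le 2\cee^{\ell_s}$. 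The point is that some such $C_j'$ is ``large'': since $C_s$ merges with another supernode at stage $\ell_s$ in $\I$, there is a terminal in $C_s$ whose partner lies outside $C_s$, so in $\I'$ (where the pair $(s,\bar s)$ is gone) the supernodes $C_1',\dots,C_a'$ collectively contain the other endpoint of some genuine demand of $\D'$, forcing some $C_j'$ to host a demand pair whose \timedglut-width is comparable to $\cee^{\ell_s}$. Hence the tree of $F'$ containing $C_j'$ has width $\Omega(\cee^{\ell_s})$ — more precisely, there is a tree $T^s\in F'$ with $V(T^s)\cap C_s \neq \emptyset$.

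\textbf{Step 2.} The same argument at stage $\ell_{\bar s}$ gives a tree $T^{\bar s}\in F'$ with $V(T^{\bar s})\cap C_{\bar s}\neq\emptyset$ (when $C_s\neq C_{\bar s}$ at stage $\ell_{\bar s}$; the case $C_s = C_{\bar s}$ at some stage $\le \min(\ell_s,\ell_{\bar s})$ is handled directly by part~(c), which immediately gives $d_{\M'_i}(s,\bar s)\le 4\cee^i$, and this path lies inside a single \timedglut-tree, so it contracts in $\M/F'$ up to a $4\cee^{\ell_s+1}$-type slack — this is the easy case). In the main case, we now estimate: $d_{\M/F'}(s,\bar s) \le d_{\M/F'}(s, T^s) + d_{\M/F'}(T^s, T^{\bar s}) + d_{\M/F'}(T^{\bar s},\bar s)$. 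The first and third terms are at most $2\cee^{\ell_s}$ and $2\cee^{\ell_{\bar s}}$ respectively by part~(b) (the $\M_i'$-distance bounds translate into $\M/F'$-distance bounds since $\M/F'$ has distances no larger than $\M_i' = \M/\stagecl{'i}$, as the supernodes of $\stagecl{'i}$ are subsets of connected components of $F'$). The middle term $d_{\M/F'}(T^s,T^{\bar s})$ is controlled by the merging rule of $\A$: $\A$ keeps merging trees $T_1,T_2$ as long as $d_{\M/F'}(T_1,T_2)\le 5\min(\width(T_1),\width(T_2))$, so in the \emph{final} forest $F'$ no two trees are within this threshold; since $T^s$ and $T^{\bar s}$ are both contained in trees of $F'$ (possibly after $\A$'s merging), either they lie in the same final tree (distance $0$) or their distance is still at most the respective $\cee^{\ell_s+1}$-scale gap inherited from the \timedglut run plus the $5\width$ slack, and chaining through the at-most-once-merged structure gives a bound of the form $O(\cee^{\ell_s+1}+\cee^{\ell_{\bar s}+1})$. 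Assembling the three pieces and absorbing the constant-factor losses into the factor $4$ and the shift from $\cee^{\ell}$ to $\cee^{\ell+1}$ yields $d_{\M/F'}(s,\bar s)\le 4(\cee^{\ell_s+1}+\cee^{\ell_{\bar s}+1})$.

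\textbf{The main obstacle} I anticipate is Step~1 — showing that the supernode $C_s$ of $\I$, which merges at stage $\ell_s$, actually contains (or is adjacent to) a terminal that survives into a substantial tree of $F'$ in the instance $\I'$. The subtlety is that deleting the pair $(s,\bar s)$ could in principle ``deactivate'' exactly the terminals that were driving $C_s$'s growth, so one must use that $C_s$ \emph{merges} at stage $\ell_s$ (not merely is active) to conclude there is a genuine $\D'$-demand pulling one of the $C_j'$'s together, and then relate the \timedglut-width of that $C_j'$ to $\cee^{\ell_s}$ via the stage structure. The rest is bookkeeping with the three inequalities $d_{\M/F'}\le d_{\M/\stagecl{'i}}$, parts (b)/(c) of Nesting, and the $5\width$ stopping condition of $\A$.
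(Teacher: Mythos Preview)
There is a genuine gap in your argument, and it stems from working at the wrong stage. You examine stages $\ell_s$ and $\ell_{\bar s}$, where by definition $s$ and $\bar s$ \emph{are} the leaders of their respective supernodes. At those stages you cannot find any other terminal of $\D' = \D \setminus \{(s,\bar s)\}$ inside $C_s$ with distance-to-mate comparable to $d_\M(s,\bar s)$; the best you can hope for is a terminal whose width is comparable to $\cee^{\ell_s}$, which may be far smaller than $d_\M(s,\bar s)$. Consequently, in your triangle decomposition the middle term $d_{\M/F'}(T^s,T^{\bar s})$ is not controlled: if the trees $T^s,T^{\bar s}$ end up in distinct trees of the final forest $F'$, the stopping rule of $\A$ only tells you that their distance \emph{exceeds} $5\min(\width(T^s),\width(T^{\bar s}))$, which is a lower bound, not the upper bound you need. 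Your sentence ``chaining through the at-most-once-merged structure gives a bound of the form $O(\cee^{\ell_s+1}+\cee^{\ell_{\bar s}+1})$'' is where the proof breaks; nothing prevents that middle term from being of order $d_\M(s,\bar s)\approx \cee^{\level(s)}$.

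The paper's fix is to jump one stage past both: set $\ti := \max(\ell_s,\ell_{\bar s})+1$. If $\ti \geq \level(s)$ the bound is immediate from $d_\M(s,\bar s)\le \cee^{\ti}$. Otherwise look at $C_s,C_{\bar s}\in\stagecl{\ti}$. If they coincide, part~(c) of Nesting finishes. If they are distinct, the crucial observation is that $C_s$ must merge at some stage $\ge \ti > \ell_s$, so by the maximality defining $\ell_s$, the terminal $s$ is \emph{not} the leader of $C_s$; likewise $\bar s$ is not the leader of $C_{\bar s}$. Let $v_1,v_2$ be those leaders: they belong to $\D'$ and satisfy $d_\M(v_i,\bar{v_i})\ge d_\M(s,\bar s)$. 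Hence the \timedglut$(\I')$ trees $T_1,T_2$ containing $v_1,v_2$ have $\width(T_i)\ge d_\M(s,\bar s)$, while Nesting part~(b) plus the triangle inequality gives $d_{\M/F'}(T_1,T_2)\le 5\,d_\M(s,\bar s)$. Now the $5\width$ rule \emph{forces} $\A$ to merge $T_1$ and $T_2$, so $d_{\M/F'}(v_1,v_2)=0$, and the whole bound reduces to $d_{\M'_{\ti}}(s,v_1)+d_{\M'_{\ti}}(\bar s,v_2)\le 4\cee^{\ti}$. The point you were missing is that moving to stage $\ti$ hands you leaders of large width in $\I'$, which is exactly what makes the middle term vanish.
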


\begin{proof}
  Let $\ti := \max(\ell_s, \ell_{\bar s}) + 1$.  Suppose $\ti \geq
  \level(s)$, then the claim is trivial because $d_\M(s,\bar s) \leq
  \cee^{\level(s)} \leq \cee^{\ti} \leq \cee \cdot (\cee^{\ell_s}
  +\cee^{\ell_{\bar s}})$; hence consider the case where $\ti \leq
  \level(s) - 1$.

  Let $C_s$ and $C_{\bar s}$ denote the supernodes containing $s$ and $\bar s$
  in clustering $\stagecl{j}$ respectively.
  There are two cases. The first case is when $C_s$ is same as $C_{\bar s}$.
  In this case, part~(c) of Theorem~\ref{thm:nesting} implies that
  $\smash{d_{\M'_{\ti}}(s,\bar s)} \leq 4 \cdot \cee^{\ti} \leq 4 \cdot
   (\cee^{\ell_s+1} +\cee^{\ell_{\bar s}+1})$. The distance in metric
  $d_{\M/F'}$ can only be smaller.

  The other case is when $C_s$ and $C_{\bar s}$ are different. Note that $C_s$
  will be merged with another supernode in some stage during or after stage $j$
  (eventually the $s$ and ${\bar s}$ will end up in the same supernode). Since
  $j > \ell_s$, it follows from the definition of $\ell_s$ that $s$ is not the
  leader of $C_s$. Similarly, ${\bar s}$ is not the leader of $C_{\bar s}$.
  Let the leaders of $C_s$ and $C_{\bar s}$ be $v_1$ and $v_2$ respectively.
   By Theorem~\ref{thm:nesting}(b), we know that
  $\smash{d_{\M'_{\ti}}(s,v_1)} \leq 2  \cdot \cee^{\ti}$ and
   $d_{\M'_{\ti}}({\bar s}, v_2 ) \leq 2 \cdot \cee^{\ti}$. Consequently,
  \begin{gather*}
    d_{\M'_{\ti}}(v_1, v_2) \leq d_{\M'_{\ti}}(v_1, s) +
    d_{\M'_{\ti}}(s, \bar s) + d_{\M'_{\ti}}({\bar s}, v_2) \leq
    4\cdot\cee^{\ti} + d_\M(s, \bar s) \leq 5\, d_\M(s, \bar s),
  \end{gather*}
   where the last inequality follows because $d_{\M}(s, \bar s)
   \geq d_{\M_\ti}(C_s, C_{\bar s}) \geq \cee^j$.

  Let $F''$ be the final forest produced by $\timedglut$ on the instance
  $\I'$; recall that $F'$ is obtained from $F''$ by merging together
  some of these trees. Let $T_1$ and $T_2$ be the trees in $F''$ which
  contain $v_1$ and $v_2$ respectively. Since the distance between $v_1$
  and $v_2$ is already at most $5\, d_\M(s, \bar s)$ at the
  beginning of stage $\ti$, we know that $d_{\M/F'}(T_1, T_2) \leq 5\, d_\M(s, \bar s)$, where $\M/F'$ denotes the metric $M$ with
  the trees in $F'$ contracted.

  Since $s$ lost its leadership to $v_1$, it must be the case that $d(s,
  {\bar s}) \leq d(v_1, {\bar {v_1}})$; thus $\width(T_1) \geq d(s, \bar
  s)$; a similar argument shows $\width(T_2) \geq d(s, \bar s)$.
  Since $d_{\M/F'}(T_1, T_2) \leq 5\, \min(\width(T_1), \width(T_2)) $,
  the algorithm $\A$ would have merged $T_1$ and $T_2$ into one
  tree. This makes the distance $d_{\M/F'}(v_1, v_2) = 0$ and hence
  $$d_{\M/F'}(s, \bar s) \leq d_{\M/F'}(s, v_1) + d_{\M/F'}(v_2, {\bar s})
  \leq d_{\M/\stagecl{\ti}}(s, v_1)+ d_{\M/\stagecl{\ti}}({\bar s}, v_2)
  \leq
  4 \cdot \cee^{\ti},$$ proving the claim.
\end{proof}

This shows that the cost of connecting $(s, \bar s)$ in $\M/F'$ is at
most $\beta := 16\gammatg$ times the cost share of $(s, \bar s)$,
which proves the uni-strictness property.

\subsection{Strict Cost Shares}
\label{sec:group-strict}

We now extend the previous cost sharing scheme to the more general
\emph{\gs} cost sharing scheme. Let $\chi$ be a
budget-balanced cost sharing function for the Steiner forest problem. As
before, let $\A$ be an $\alpha$-approximation algorithm for the Steiner forest problem.
}
\short{We now consider the notion of \gs cost shares (the details for the
construction of the simpler uni-strict cost shares are given in the full version).
}
\begin{definition}
  \label{def:goup-strict}
  A cost-sharing function $\chi$ is {\em $\beta$-\gs} with
  respect to an algorithm $\A$ if for all pairs of disjoint terminal
  sets $\D_1, \D_2$ lying in a metric $\M$, the following condition
  holds: if $\D$ denotes $\D_1 \cup \D_2$, then $\sum_{(s, {\bar s})
    \in \D_2} \chi(\M, \D, (s, {\bar s}))$ is at least $1/\beta$ times
  the the cost of the optimal Steiner forest on $\D_2$ in the metric
  $\M/F$, where $F$ is the forest returned by $\A$ on the input $(\M,
  \D_1)$.
\end{definition}

In addition to the \timedglut algorithm, we will also need a timed
primal-dual algorithm for Steiner forest, denoted by \TPD. The input for the
\TPD algorithm
is a set of terminals, each terminal $s$ being assigned an
\emph{activity time} $\time(s)$ such that the terminal is \emph{active}
for all times $t \leq \time(s)$. The primal-dual algorithm grows moats
around terminals as long as they are active and buys edges that ensure
that if two moats meet at some time $t$, all the terminals in these
moats that are active at time $t$ lie in the same tree. One can do this
in different ways (see, e.g.,~\cite{GKPR,Pal,KLS05}); for concreteness
we refer to the \texttt{KLS} algorithm of K\"onemann et al.~\cite{KLS05}
which gives the following guarantee:
\begin{theorem}
  \label{thm:KLS}
  If $\time(s) = \frac12 d_\M(s, \bar s)$ for all terminals $s$, then
  the total cost of edges bought by the timed primal-dual algorithm
  \texttt{KLS} is at most $2 \cdot \OPT(\I)$.
\end{theorem}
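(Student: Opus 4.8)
The plan is to give the standard primal-dual / LP-duality analysis of Agrawal--Klein--Ravi~\cite{AKR95} and Goemans--Williamson~\cite{GW95}, adapted to the timed moat-growing process of K\"onemann et al.~\cite{KLS05}. First I would write down the cut-covering LP relaxation for Steiner forest on $(\M,\D)$: minimize $\sum_e d_\M(e)\,x_e$ subject to $\sum_{e\in\delta(S)} x_e\ge 1$ for every vertex set $S$ separating some demand pair in $\D$, together with $x\ge 0$; and its dual, which maximizes $\sum_S y_S$ over $y\ge 0$ subject to $\sum_{S:\,e\in\delta(S)} y_S\le d_\M(e)$ for every edge $e$. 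Weak duality, composed with the fact that the LP relaxes the integral problem, gives $\sum_S y_S\le \OPT(\I)$ for any dual-feasible $y$, so it suffices to exhibit a dual-feasible $y$ and the feasible forest $F$ output by \TPD with $\cost(F)\le 2\sum_S y_S$.

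Next I would make precise the $y$ implicitly generated by the run of \TPD. Grow every active moat --- a connected component $C$ of the currently-bought forest that contains a terminal $s$ with $\bar s\notin C$ and $\time(s)$ not yet elapsed --- at unit rate, buy an edge the instant it becomes tight (so two colliding moats merge), and finish with the usual reverse-delete so that $F$ is an inclusion-minimal feasible forest. Set $y_S:=\int_0^\infty \mathbf{1}[\,S\text{ is an active moat at time }\tau\,]\,d\tau$. This $y$ is dual-feasible: once an edge is tight its two sides lie in one moat, so no $y_S$ with $e\in\delta(S)$ grows further; and its support consists only of demand-separating sets. Feasibility of $F$ is where the hypothesis $\time(s)=\tfrac12 d_\M(s,\bar s)$ enters: while $s$ and $\bar s$ lie in distinct moats and $\tau\le\time(s)=\time(\bar s)$, both of their moats are active and hence expand at unit rate, so they must collide by time $\tfrac12 d_\M(s,\bar s)$; thus every demand pair gets connected.

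For the cost bound, every edge of $F$ is tight, so $\cost(F)=\sum_{e\in F}\sum_{S:\,e\in\delta(S)} y_S=\int_0^\infty\big(\sum_{C\text{ active at }\tau}\deg_{\widehat F_\tau}(C)\big)\,d\tau$, where $\widehat F_\tau$ is $F$ with all time-$\tau$ moats contracted to points. The crux is the classical degree lemma: in each tree component of $\widehat F_\tau$ every leaf is an active moat. Indeed, if a leaf $C$ were inactive, the unique edge leaving $C$ could be deleted while keeping $F$ feasible --- for that edge to be needed, $C$ must separate some demand pair $(s,\bar s)$ with $s\in C,\ \bar s\notin C$, but then the feasibility argument above forces $\tau\le\time(s)$, making $C$ active --- contradicting minimality of $F$. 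Hence every inactive vertex of $\widehat F_\tau$ has degree at least $2$, and a counting argument inside each component ($\sum_v \deg_{\widehat F_\tau}(v)=2\,(\#\text{edges})\le 2\,(\#\text{vertices}-1)$) yields $\sum_{C\text{ active}}\deg_{\widehat F_\tau}(C)\le 2\cdot\#\{\text{active moats at }\tau\}$. Integrating over $\tau$ gives $\cost(F)\le 2\sum_S y_S\le 2\,\OPT(\I)$.

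I expect the one genuinely delicate point to be the interaction between the timed activity rule and the reverse-delete step inside the degree lemma, namely confirming that a moat inactive at time $\tau$ truly separates no demand pair at that time; this is exactly where the specific choice $\time(s)=\tfrac12 d_\M(s,\bar s)$ is indispensable, and it also subsumes the verification that \TPD returns a feasible solution at all. Everything else is the textbook primal-dual accounting.
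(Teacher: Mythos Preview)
The paper does not prove this theorem; it is quoted as a black-box result from K\"onemann, Leonardi, and Sch\"afer~\cite{KLS05}, so there is no proof in the paper to compare against. Your proposal is the standard AKR/GW primal-dual accounting and is correct.

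One clarification is worth making, precisely at the point you flag as delicate. Your description of an active moat --- a component $C$ containing some $s$ with $\bar s\notin C$ \emph{and} $\time(s)$ not yet elapsed --- mixes the Goemans--Williamson activity criterion with the purely time-based one used in \texttt{KLS}. In \texttt{KLS} a moat is active whenever it contains some terminal $s$ with $\tau\le\time(s)$, irrespective of whether $\bar s$ is already in $C$; so a moat can keep growing after the pair it was ``responsible for'' has merged. For the specific choice $\time(s)=\tfrac12 d_\M(s,\bar s)$ this distinction does not break your degree-lemma argument: if a leaf moat $C$ is inactive in the \texttt{KLS} sense (every $s\in C$ has $\time(s)<\tau$), then by your own feasibility observation each such $s$ has already merged with $\bar s$ by time $\time(s)<\tau$, so $\bar s\in C$ as well, $C$ separates no demand pair, and reverse-delete would remove the unique edge leaving $C$. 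Thus the leaf must be active and the counting goes through. But you should state the activity rule as purely time-based, since that is the algorithm the theorem is about, and since for other choices of $\time(\cdot)$ (e.g.\ the inflated times used later in the paper) the two notions genuinely diverge.
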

\full{The following property can be shown for the \texttt{KLS} algorithm:
\begin{lemma}
  \label{lem:kls-scaling}
  Multiplying the activity times by a factor of $K \geq 1$ to
  $\frac{K}{2} \cdot d_\M(s, \bar s)$ causes the
  \texttt{KLS} algorithm to output another feasible solution of total
  cost at most $2K \cdot \OPT(\I)$.
\end{lemma}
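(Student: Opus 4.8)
The plan is to rerun the primal–dual analysis of \texttt{KLS}, keeping track of exactly where the activity times enter. Recall how Theorem~\ref{thm:KLS} is obtained: running \texttt{KLS} grows a dual consisting of moats $\{y_S\}$ around the maximal active components, and one shows (i) the forest $F$ bought by the algorithm satisfies $\cost(F)\le 2\sum_S y_S$, by the usual averaging argument that at every instant the forest restricted to the active components is a forest and active components have average degree below $2$; and (ii) with the activity times set to $\tfrac12 d_\M(s,\bar s)$ the dual is feasible, so $\sum_S y_S\le\OPT(\I)$. Step (i) is insensitive to the choice of activity times (it never uses their magnitude, only which components are active at a given instant), so the same argument gives $\cost(F)\le 2\sum_S y_S$ for the dual $y$ grown by the run with activity times $\tfrac{K}{2}d_\M(s,\bar s)$. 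The entire task therefore reduces to proving $\sum_S y_S\le K\cdot\OPT(\I)$.

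For that bound I would write $\sum_S y_S=\int_0^\infty a(\theta)\,d\theta$, where $a(\theta)$ denotes the number of maximal active components at time $\theta$. Since activity is time-monotone (a terminal $s$ is active exactly on $[0,\tfrac{K}{2}d_\M(s,\bar s)]$) and the moat process only merges components or lets them become inactive, $a(\theta)$ is non-increasing in $\theta$; the same holds for the corresponding quantity $a^{\mathrm{std}}(\theta)$ of the standard run, where $\int_0^\infty a^{\mathrm{std}}(\theta)\,d\theta\le\OPT(\I)$ by feasibility. The goal is then the comparison $a(K\theta)\le a^{\mathrm{std}}(\theta)$ for all $\theta$, since substituting $\theta=K\phi$ gives $\int_0^\infty a(\theta)\,d\theta=K\int_0^\infty a(K\phi)\,d\phi\le K\int_0^\infty a^{\mathrm{std}}(\phi)\,d\phi\le K\cdot\OPT(\I)$. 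One direction is immediate: the set of active terminals in the modified run at time $K\theta$ coincides with that in the standard run at time $\theta$ (both equal $\{i:d_\M(s_i,t_i)\ge 2\theta\}$ together with their mates). It then remains to show that the partition of these terminals into components in the modified run at time $K\theta$ refines \emph{down} to (is coarser than) the one in the standard run at time $\theta$, which morally holds because in the modified run every component stays active at least as long, hence every moat has grown at least as much, so every collision that the standard run has made by time $\theta$ has been made by the modified run by time $K\theta$.

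The main obstacle is exactly this last comparison, and it is where the ``dead moat'' issue reappears: in the modified run a component may keep growing long after all its demands are satisfied internally, so the modified and standard runs can have genuinely different component structures at intermediate times and cannot be coupled step for step. The cleanest approach I would try is to fix $\theta$ and induct over the sequence of merges the standard run performs by time $\theta$, maintaining the invariant that the modified run has performed all of them by time $K\theta$; the inductive step needs: if two moats become tight along a metric edge $e$ at some time $\tau\le\theta$ in the standard run, then the modified-run moats containing the endpoints of $e$ have together grown at least $d_\M(e)$ by time $K\theta$ — which holds because the growth intervals of those components (as active components) contain the corresponding standard-run intervals. As a fallback in case this is too delicate in full generality, one can first handle the case where all demand distances are equal to a common value $d$: there the modified run agrees with the standard run on $[0,\tfrac12 d]$ and only merges further afterward, so the extra moat volume is at most $\tfrac{K-1}{2}d\cdot a^{\mathrm{std}}(\tfrac12 d)\le(K-1)\OPT(\I)$ by monotonicity of $a^{\mathrm{std}}$ and $a^{\mathrm{std}}(\tfrac12 d)\le\tfrac{2}{d}\int_0^{d/2}a^{\mathrm{std}}\le\tfrac{2}{d}\OPT(\I)$, giving $\sum_S y_S\le\OPT(\I)+(K-1)\OPT(\I)=K\OPT(\I)$; and then reduce the general case by bucketing demands into geometric scales, at the cost of only another constant factor, which is acceptable since the paper does not optimize constants.
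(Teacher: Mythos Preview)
The paper does not actually prove this lemma: it is stated and immediately followed by the next subsection, prefaced only by ``The following property can be shown for the \texttt{KLS} algorithm.'' So there is no in-paper argument to compare against; the authors are implicitly deferring to the analysis in~\cite{KLS05}.

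Your plan is a correct way to supply a self-contained proof. The split into (i) $\cost(F)\le 2\sum_S y_S$ (activity-time independent) and (ii) $\sum_S y_S\le K\cdot\OPT(\I)$ is the right one, and the target inequality $a(K\theta)\le a^{\mathrm{std}}(\theta)$ is true. You can streamline the delicate part considerably by first proving an auxiliary monotonicity statement at \emph{equal} times: for every $\theta$, the component partition of the modified run at time $\theta$ is a coarsening of that of the standard run at time $\theta$. This follows by induction over events: whenever an edge $e=(u,v)$ goes tight in the standard run at time $\theta$, either $u,v$ are already merged in the modified run, or else (using the inductive hypothesis and the fact that the modified active-terminal set is always a superset) the dual growth rate on $e$ in the modified run dominates that in the standard run pointwise on $[0,\theta)$, so $e$ is tight in the modified run by time $\theta$ as well. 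Once you have this, the claim $a(K\theta)\le a^{\mathrm{std}}(\theta)$ is immediate: partitions only coarsen over time, so the modified partition at time $K\theta\ge\theta$ coarsens the standard one at time $\theta$; and the set of clock-active terminals at time $K\theta$ in the modified run coincides with the set at time $\theta$ in the standard run, so the number of active components can only be smaller. This removes the need for your bucketing fallback and gives the exact constant $2K$ rather than a looser one.
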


\subsubsection{Defining $\chi$ and $\A$}
\label{sec:gs-cs-defs}
}

\paragraph{Defining $\chi$ and $\A$}
To define the cost-shares for the instance $\I = (\M, \D)$, run the
algorithm \timedglut on $\I$.
\short{ During stage $i$, if we merge supernodes $S$ and $S'$ with leaders $s$ and $s'$
respectively, then we increment the cost-share of each of $(s, \bar s)$ and $(s', \bar s')$ by
$\frac{\cee^{i+1}}{2\gammatg}$. The budget balance property follows directly from the
analysis of the \timedglut algorithm.

}
\full{Recall the description of the algorithm as
given in \autoref{sec:equiv-timed}: in each stage $i$, we choose a
collection $\pairs{i}$ of pairs of supernodes whose mutual
distance (in the metric $\M/\stagecl{i}$) lies in the range $[\cee^i, \cee^{i+1})$, merge each such
pair of supernodes to get the new clustering (and add edges in the
underlying graph of at most as much length). For pair $(S, S') \in
\pairs{i}$, if $s, s'$ are the leaders of $S, S'$ respectively,
increment the cost-share of each of $(s, \bar s)$ and $(s', \bar s')$ by
$\frac{\cee^{i+1}}{2\gammatg}$. Since the analysis of the \timedglut
algorithm proceeds by showing that the quantity $\sum_i | \pairs{i} |
\cdot \cee^{i+1} \leq \gammatg \, \OPT(\I)$, the budget-balance property
follows.
}

The algorithm $\A$ on input $\I = (\M, \D)$ is simple: set the activity
time $\time(s) := 6 \cdot \cee^{\level(s)+1}$ for each terminal $s$, and
run the algorithm \TPD. \full{The following claim immediately follows from
Lemma~\ref{lem:kls-scaling} and Theorem~\ref{thm:KLS}, and the fact that
$6 \cdot \cee^{\level(s) + 1} \leq 12 \cdot \cee^2 \cdot \frac12 d_\M(s,\bar s)$.
}
\short{ The following claim follows from a simple extension of Theorem~\ref{thm:KLS}
and the fact that $6 \cdot \cee^{\level(s) + 1} \leq 48 \cdot \frac12 d_\M(s,\bar s)$.}
\begin{lemma}
  \label{lem:alg-timed}
  The algorithm $\A$ is a $96$-approximation algorithm
  for Steiner forest.
\end{lemma}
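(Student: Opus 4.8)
On input $\I=(\M,\D)$ the algorithm $\A$ runs the timed primal-dual algorithm \TPD (the \texttt{KLS} variant) with activity time $\time(s):=6\cdot\cee^{\level(s)+1}$ for each terminal $s$. Since $\level(s)=\lceil\log_\cee d_\M(s,\bar s)\rceil$ we have $d_\M(s,\bar s)\le\cee^{\level(s)}\le\cee\cdot d_\M(s,\bar s)$, so on the one hand $\time(s)=6\cdot\cee^{\level(s)+1}\ge 12\, d_\M(s,\bar s)\ge\frac12 d_\M(s,\bar s)$, and on the other hand $\time(s)=6\cdot\cee^{\level(s)+1}\le 6\,\cee^{2}\, d_\M(s,\bar s)=48\cdot\frac12 d_\M(s,\bar s)$. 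So, writing $K:=48$, every terminal satisfies $\frac12 d_\M(s,\bar s)\le\time(s)\le\frac{K}{2}\, d_\M(s,\bar s)$. The plan is to get feasibility of $\A(\I)$ from the lower bound, and the estimate $\cost(\A(\I))\le 96\cdot\OPT(\I)$ from the upper bound together with Theorem~\ref{thm:KLS} and Lemma~\ref{lem:kls-scaling}.

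\emph{Feasibility.} The lower bound $\time(s)\ge\frac12 d_\M(s,\bar s)$ is exactly what the timed primal-dual needs in order to connect the pair $\{s,\bar s\}$: the moats grown around $s$ and $\bar s$ touch by time $\frac12 d_\M(s,\bar s)$, at which moment both endpoints are still active, so the standard invariant of the \texttt{KLS} algorithm puts $s$ and $\bar s$ in the same tree. Hence $\A(\I)$ is a feasible Steiner forest.

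\emph{Cost.} Recall that the proof of Lemma~\ref{lem:kls-scaling} bounds the cost of the edges bought by the algorithm by $2\sum_S y_S$, where $\{y_S\}$ is the dual it constructs, and that the activity times enter this estimate only through an upper bound of $\time(s)$ on the total growth of any moat separating $(s,\bar s)$; the uniformity of the times is never used. Consequently, whenever $\time(s)\le\frac{K}{2}\, d_\M(s,\bar s)$ for all $s$, the rescaled dual $\{y_S/K\}$ is feasible for the ordinary Steiner-forest LP dual, so $\sum_S y_S\le K\cdot\OPT(\I)$ and hence $\cost(\A(\I))\le 2K\cdot\OPT(\I)=96\cdot\OPT(\I)$. (Equivalently, by monotonicity of the primal-dual process one could compare with the run on the uniform times $\frac{K}{2}d_\M(s,\bar s)$: enlarging the activity times to these values only makes more edges become tight, so the cost of the bought edges --- and hence of the forest left after reverse-delete --- does not decrease, and then Lemma~\ref{lem:kls-scaling} applies to the uniform run as stated.)

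The only point in all of this that requires care is confirming that the proof of Lemma~\ref{lem:kls-scaling}, equivalently of Theorem~\ref{thm:KLS}, uses only the one-sided bound $\time(s)\le\frac{K}{2}d_\M(s,\bar s)$ for the cost estimate and $\time(s)\ge\frac12 d_\M(s,\bar s)$ for feasibility, so that it transfers verbatim to the non-uniform activity times prescribed by $\A$. This is where I expect the (minor) bookkeeping to lie; granting it, feasibility together with the cost bound gives that $\A$ is a $96$-approximation algorithm for Steiner forest.
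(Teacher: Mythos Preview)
Your approach is essentially the same as the paper's. The paper's argument is a one-liner: it observes that $6\cdot\cee^{\level(s)+1}\le 12\cdot\cee^{2}\cdot\frac12 d_\M(s,\bar s)=48\cdot\frac12 d_\M(s,\bar s)$ and then invokes Lemma~\ref{lem:kls-scaling} (with $K=48$) together with Theorem~\ref{thm:KLS}; you compute exactly the same bound and then appeal to the same two results, giving in addition the lower bound needed for feasibility. The only difference is that you are explicit about the fact that Lemma~\ref{lem:kls-scaling} is stated for the uniform scaling $\time(s)=\frac{K}{2}d_\M(s,\bar s)$ whereas here the times are only upper-bounded by that quantity, and you sketch two ways to bridge this (dual rescaling, or monotonicity in the activity times); the paper simply calls this a ``simple extension'' and moves on. So your proof is correct and matches the paper, with the caveat you already flag: the passage from uniform to non-uniform activity times is where the actual work hides, and neither argument you give is fully written out.
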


%%% Local Variables:
%%% mode: latex
%%% TeX-master: "gluttonous"
%%% End:

\full{\subsubsection{Proving Strictness}}
\short{\paragraph{Proving Strictness}
We now give an outline of the proof of the strictness property, leaving details to
the full version.}
Given a set of demands $\D$ in a metric space $\M$, and a partition into
$\D_1 \cup \D_2$, we run the algorithm $\A$ on $\D_1$---let $F_1$ be the
forest returned by this algorithm, and let metric $\M_1$ be obtained by
contracting the edges of $F_1$. To prove the strictness property, we now
exhibit a ``candidate'' Steiner forest $F_2$ for $\D_2$ in the metric
$\M_1$ with cost at most a constant factor times $\sum_{(s, {\bar s})
  \in \D_2} \chi(\M, \D, (s, {\bar s})),$ the total cost-share assigned
to the terminals in $\D_2$.

\short{ To define the forest $F_2$ connecting $\D_2$, we now imagine running
\timedglut on the entire demand set $\D_1 \cup \D_2$ on the original
metric $\M$, look at paths added by that algorithm, and choose a
carefully chosen subset of these paths to add to $F_2$. This is the
natural thing to do, since such a run of \timedglut was used to define
the cost-shares $\chi$ in the first place. Ideally, whenever this algorithm
connects two supernodes $S$ and $S'$, we will add a path between them
(in the forest $F_2$) unless there are terminals $v$ and $v'$  in $S$ and $S'$
respectively such that both $v$ and $v'$ lie in the same tree of $F_1$ (note that
$v, v'$ belong to $\D_1$).
However, analyzing such a scheme directly is quite challenging. Instead, we
carefully match the runs of the following algorithms: \timedglut on $\D_1 \cup \D_2$,
denoted by $\R$,
and $\A$ on $\D_1$. For any parameter $t$, the run of $\A$, which is a moat
growing algorithm, at time $t$ corresponds to the run $\R$ when the merging distance is
about $t$. We now show that if $\R$ merges supernodes $S$ and $S'$, then we will
add a subset of the edges on the shortest path between $S$ and $S'$ to the forest $F_2$
such that these edges correspond to the ``dual'' raised by the corresponding run of $\A$ (and
so would have contributed towards the cost share of a terminal).
 Details of the analysis
can be found in the full version.}

\full{
Recall that the algorithm $\A$ on $\D_1$ is just the \TPD algorithm. We
divide this algorithm's run into stages, where the $i^{th}$ stage lasts
for the time interval $[6 \cdot \cee^i, 6 \cdot \cee^{i+1})$; the $0^{th}$ stage lasts
for $[0, 6 \cdot \cee)$. Let $\forestone{i}$ be the edges of the output forest
$F_1$ which become tight during stage $i$ of this run, and
$\moatsone{i}$ be the set of moats at the beginning of stage $i$. These
moats are defined in the original metric $\M$.

\paragraph{Defining a Candidate Forest $F_2$}

To define the forest $F_2$ connecting $\D_2$, we now imagine running
\timedglut on the entire demand set $\D_1 \cup \D_2$ on the original
metric $\M$, look at paths added by that algorithm, and choose a
carefully chosen subset of these paths to add to $F_2$. This is the
natural thing to do, since such a run of \timedglut was used to define
the cost-shares $\chi$ in the first place. Recall the description of
\timedglut from \autoref{sec:equiv-timed}, and let $\R$ denote this run
of \timedglut on $\I = (\M, \D_1 \cup \D_2)$.

We examine the run $\R$ stage by stage: at the beginning of stage $i$,
the run $\R$ took the current clustering $\stagecl{i}$, built an auxiliary
graph $\stageg{i}$ whose nodes were the supernodes in $\stagecl{i}$ and
edges were pairs of active supernodes that had mutual merging distance
at most $\cee^{i+1}$, picked some maximal forest $\pairs{i}$ in this
graph, merged these supernode pairs in $\pairs{i}$, and bought edges in
the underlying metric corresponding to paths connecting these supernode
pairs. We show how to choose some subset of these underlying edges to
add to our candidate forest---we denote these edges by $\foresttwo{i}$.

In the following, we will talk about edges $(S,S') \in \pairs{i}$ (which
are edges of the auxiliary graph $\stageg{i}$) and edges in the metric
$\M$. To avoid confusion, we refer to $(S,S')$ as \emph{pairs} and those
in the metric as edges.

What edges should we add to $\foresttwo{i}$? For that, look at the end
of stage~$i$ of the run of $\A$ on $\D_1$; the primal-dual algorithm has
formed a set of moats $\moatsone{i+1}$ at this point. We define an
equivalence relation on the supernodes in $\stagecl{i}$ as follows. If the
leaders of two supernodes $S, S' \in \stagecl{i}$ both lie in $\D_1$, and
also in the same moat of $\moatsone{i+1}$, we put $S$ and $S'$ in the
same equivalence class. Now if we collapse each equivalence class by
identifying all the supernodes in that class, the pairs in $\pairs{i}$
may no longer be acyclic in the collapsed version of $\stageg{i}$, they
may contain cycles and self-loops. Consider a maximal acyclic set of
pairs in $\pairs{i}$, and denote the dropped pairs by
$\pairsdrop{i}$. The set $\pairs{i}$ is now classified into three parts~(see
Figure~\ref{fig:cost} for an example):
\begin{itemize}
\item Let $\pairsgood{i}$ be pairs $(S,S') \in \pairs{i} \setminus
  \pairsdrop{i}$ for which at least one of $\leader(S),\leader(S')$
  belongs to $\D_2$.
\item Let $\pairsbad{i}$ be pairs $(S,S') \in \pairs{i} \setminus
  \pairsdrop{i}$ for which both of $\leader(S)$ and $\leader(S')$ belongs
  to $\D_1$.
\item Of course, $\pairsdrop{i}$ is the set of pairs $(S,S') \in
  \pairs{i}$ dropped to get an acyclic set.
\end{itemize}

Given this classification, the edges we add to $\foresttwo{i}$ are as
follows. Recall that in the run $\R$, for each pair $(S,S')
\in \pairs{i}$, we had added edges connecting those two supernodes of total
length at most $\cee^{i+1}$. For each pair in $\pairs{i} \setminus
\pairsdrop{i}$, we now add the same edges to $\foresttwo{i}$. We
further classify these edges based on their provenance: the edges added
due to a pair $(S,S') \in \pairsgood{i}$ we call \emph{good edges}, and
those added due to a pair in $\pairsbad{i}$ we call \emph{bad
  edges}. Observe that we add no edges for pairs in $\pairsdrop{i}$.

\begin{figure}[h]
\begin{center}
  \includegraphics[height=35mm]{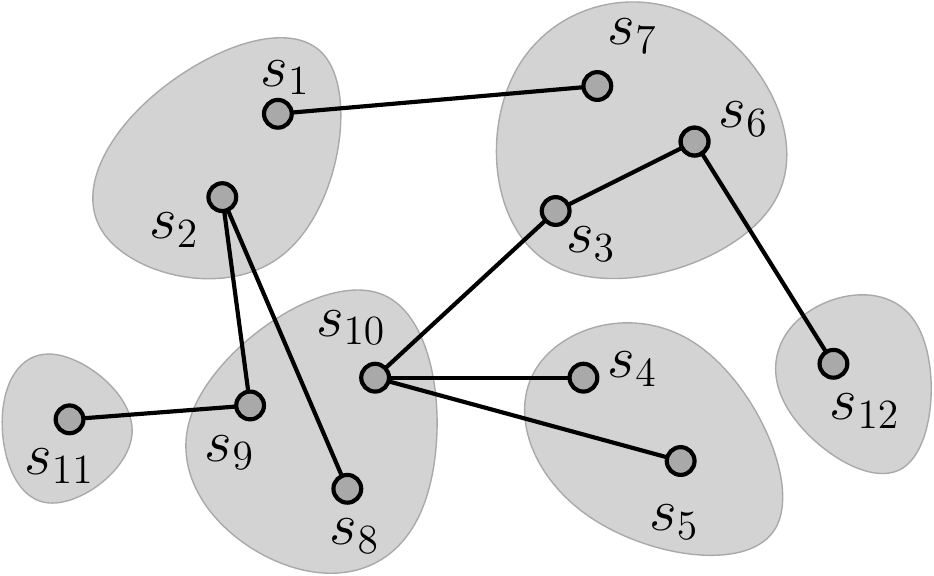}
\caption{\footnotesize The solid edges denote the pairs $\pairs{i}$, and the grey regions denote the equivalence classes.
Let $\pairsdrop{i}$ be the pairs $\{(S_2, S_9), (S_4, S_{10}), (S_3,S_{10}), (S_3, S_6) \}$. Assume that leaders of
$S_{11}$ and $S_{12}$ belong to $\D_2$ (the leaders of the rest of the supernodes must be in $\D_1$ because
the equivalence classes corresponding to these supernodes  have cardinality larger than 1). So,
$\pairsgood{i}$ consists of pairs $\{(S_6, S_{12}), (S_{11}, S_9)\}$ and $\pairsbad{i}$ consists of
 $\{(S_1, S_7), (S_2, S_8), (S_5, S_{10}) \}$. }
\label{fig:cost}
\end{center}
\end{figure}

This completes the construction of the set $\foresttwo{i}$.  The forest
$F_2$ is obtained by taking the union of $\cup_i \foresttwo{i}$. The
task now is to show (a)~feasibility, that the edges in $F_2$ form a
Steiner forest connecting up the demands of $\D_2$ in metric $\M_1$, or
equivalently that $F_1 \cup F_2$ is a Steiner forest on the set $\D_1
\cup \D_2$, and (b)~strictness, that the cost of $F_2$ is comparable to
the cost shares assigned to the demands in $\D_2$.

\paragraph{Feasibility}

First, we show feasibility, i.e., that $F_1 \cup F_2$ connects all pairs
in $\D_2$. Observe that we took the run $\R$, and added to $F_2$ some of
the edges added in $\R$. Had we added all the edges, we would trivially
get feasibility (but not the strictness), but we omitted edges
corresponding to pairs in $\pairsdrop{i}$. The idea of the proof
is that such supernodes will get connected due to the other connections, and
to the fact that we inflated the activity times in the \TPD
algorithm. Let's give the formal proof, which proceeds by induction over
time.

For integer $i$, define $\forestone{\leq i} := \cup_{j \leq i}
\forestone{j}$, and define $\foresttwo{\leq i}$ similarly.  The first
claim relates the stages in the run $\R$ of \timedglut$(\D_1 \cup \D_2)$
to the stages in the run of \TPD.
\begin{claim}
  \label{cl:corres}
  If terminal $s \in \D_1$ is active at the beginning of stage $i$ in
  the run $\R$, then the moat containing $s$ remains active during stage
  $i$ of the run of $\TPD$ on $\D_1$.
\end{claim}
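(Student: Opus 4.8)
The plan is to unwind the two different notions of ``active'' in play — the stage-based one used by \timedglut in the run $\R$, and the time-based one used by the primal-dual algorithm \TPD inside $\A$ — and to observe that inflating the activity times in $\A$ makes the primal-dual notion the more generous of the two, stage for stage.

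First I would recall that, by definition of \timedglut, a terminal $s$ is active at the beginning of stage $i$ of $\R$ precisely when $\level(s) \ge i$, where $\level(s) = \lceil \log_\cee d_\M(s,\bar s)\rceil$; note that $\level(s)$ depends only on the pair $(s,\bar s)$ and the metric $\M$, so it is the same quantity whether we run on $\D_1$ or on $\D_1 \cup \D_2$. Hence the hypothesis that $s \in \D_1$ is active at the start of stage $i$ in $\R$ gives $\level(s) \ge i$, and therefore the activity time assigned to $s$ by $\A$ satisfies $\time(s) = 6 \cdot \cee^{\level(s)+1} \ge 6 \cdot \cee^{i+1}$.

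Next I would use that stage $i$ of the run of \TPD on $\D_1$ occupies the half-open interval $[\,6\cdot\cee^i,\ 6\cdot\cee^{i+1})$ (with $[\,0,\ 6\cee)$ for $i=0$). Thus at every time $t$ occurring during stage $i$ we have $t < 6\cdot\cee^{i+1} \le \time(s)$, so $s$ is still active in the primal-dual sense, since a terminal is active for all $t \le \time(s)$. Because moats in \TPD only merge and never split — once a terminal enters a moat it stays there — the moat containing $s$ at any moment of stage $i$ contains the active terminal $s$, and hence is itself active throughout stage $i$. That is the claim.

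This is essentially a bookkeeping check, so I do not expect a genuine obstacle here; the one point to handle with care is the off-by-one in the exponents — one must use the half-open form of the \TPD stage interval so that the boundary case $\level(s) = i$ still yields $t < \time(s)$ — together with making sure the two indexing conventions are aligned, namely that ``stage $i$'' in both runs is the scale-$\cee^i$ stage. That alignment is precisely why the activity times in $\A$ were set to $6\cdot\cee^{\level(\cdot)+1}$ rather than something smaller: the factor buys exactly the slack needed for this correspondence, which will then be the workhorse of the feasibility argument for $F_2$.
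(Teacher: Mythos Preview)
Your proof is correct and follows exactly the same approach as the paper: deduce $\level(s)\ge i$ from the \timedglut notion of activeness, conclude $\time(s)\ge 6\cdot\cee^{i+1}$, and compare against the endpoint of stage $i$ in \TPD. The paper's proof is terser (three sentences) and in fact has a typo, writing $4\cdot\cee^{i+1}$ where the definitions give $6\cdot\cee^{i+1}$; your version has the constant right.
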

\begin{proof}
  Since $s \in \D_1$ is active in stage $i$, $\level(s) \geq i$.  Hence
  its activity time $\time(s) \geq 4 \cdot \cee^{i+1}$. Since stage~$i$ for
  the timed primal-dual algorithm ends at time $4 \cdot \cee^{i+1}$, the moat
  containing $s$ in \TPD will be active at least until the end of stage $i$.
\end{proof}

\begin{lemma}
  \label{lem:moats}
  Let $\stagecl{i}$ be the clustering at the beginning of stage $i$ in the run
  $\R$. Then,
  \begin{OneLiners}
  \item[(a)] For any $S \in \stagecl{i}$, all terminals in $S$ lie in the
    same connected component of $\forestone{\leq i-1} \cup
    \foresttwo{\leq i-1}$.
  \item[(b)] For every $(S, S') \in \pairs{i} \setminus \pairsdrop{i}$,
    the terminals in $S \cup S'$ lie in the same connected component of
    $\forestone{\leq i-1} \cup \foresttwo{\leq i}$.
  \end{OneLiners}
\end{lemma}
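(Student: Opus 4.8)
The plan is to prove both parts (a) and (b) by a single induction on the stage index $i$, establishing part~(a) for stage $i$ first and then deducing part~(b) for stage $i$ from it. The base case $i=0$ is immediate: $\stagecl{0}$ is the trivial clustering, so every supernode is a singleton and (a) is vacuous, while (b) holds since for each surviving pair $(S,S')\in\pairs{0}\setminus\pairsdrop{0}$ the set $\foresttwo{0}$ contains, by construction, the path of $\R$ joining $S$ and $S'$.

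For the inductive step of part~(a), recall that a supernode $S\in\stagecl{i}$ corresponds to a connected component of the auxiliary graph $(\stageg{i-1},\pairs{i-1})$, and since $\pairs{i-1}$ is a \emph{maximal} acyclic subset it is a spanning forest of $\stageg{i-1}$, so $S$ is spanned by the pairs in $\pairs{i-1}$. Write $H:=\forestone{\leq i-1}\cup\foresttwo{\leq i-1}$. I would record three ``linking'' facts, each asserting for supernodes $X,Y\in\stagecl{i-1}$ that all terminals of $X\cup Y$ lie in one component of $H$: (1)~if $X=Y$, this is the induction hypothesis (a) at stage $i-1$ (the connectivity already lives inside $\forestone{\leq i-2}\cup\foresttwo{\leq i-2}\subseteq H$); (2)~if $(X,Y)\in\pairs{i-1}\setminus\pairsdrop{i-1}$, this is the induction hypothesis (b) at stage $i-1$; (3)~if $X,Y$ are active and $\leader(X),\leader(Y)$ lie in the same moat of $\moatsone{i}$ (so, by definition of the equivalence classes, both leaders are in $\D_1$), then combining (1) at stage $i-1$ (internal connectivity of $X$ and of $Y$) with the claim that $\leader(X)$ and $\leader(Y)$ are joined inside $\forestone{\leq i-1}$ gives the assertion. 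Fact~(3) is where the timed primal-dual enters: since $X,Y$ are active at the start of stage $i-1$ of $\R$, their leaders have level at least $i-1$, so by Claim~\ref{cl:corres} (applied at stage $i-1$ and at every earlier stage) their \TPD-moats stay active through stage $i-1$; hence when those moats merged---necessarily by the end of stage $i-1$, since they coincide in $\moatsone{i}$---both leaders were still active, so the \texttt{KLS} guarantee connects them in $F_1$ by a tight path all of whose edges became tight by that time, i.e.\ edges of $\forestone{\leq i-1}$.

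Given these facts, part~(a) follows from one structural observation about the equivalence-class collapse used to define $\pairsdrop{i-1}$: after collapsing the equivalence classes of $\stagecl{i-1}$, a maximal acyclic subset of $\pairs{i-1}$ in the collapsed graph is, by maximality, a spanning forest of that collapsed graph, so the image of the component $S$ is connected using only non-dropped pairs. Translating this back downstairs, any two supernodes of $S$ are joined by an alternating chain of (2)-steps (along non-dropped pairs, whose endpoints are active pair-endpoints) and (3)-steps (hopping between the incoming and outgoing endpoints inside a traversed equivalence class, which---if it has more than one element---consists of active supernodes whose leaders lie in $\D_1$), interleaved with the trivial (1)-steps; since ``lying in the same component of $H$'' is transitive, all terminals of $S$ lie in one component of $H$. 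For part~(b): for $(S,S')\in\pairs{i}\setminus\pairsdrop{i}$, the set $\foresttwo{i}\subseteq\foresttwo{\leq i}$ contains, by construction, the edges of the $\R$-path of length $<\cee^{i+1}$ in $\M/\stagecl{i}$ joining $S$ to $S'$; writing this path as a concatenation of blocks lying in successive supernodes of $\stagecl{i}$ (we may assume each supernode's vertices on the path are consecutive), the inter-block edges are exactly the edges we added, and by part~(a) at stage $i$ every traversed supernode---including $S$ and $S'$---is internally connected in $\forestone{\leq i-1}\cup\foresttwo{\leq i-1}$; chaining these shows that all terminals of $S\cup S'$ lie in one component of $\forestone{\leq i-1}\cup\foresttwo{\leq i}$.

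I expect the main obstacle to be fact~(3): pinning down when two $\D_1$-terminals that end up in a common moat of the primal-dual run are actually connected in the output forest $F_1$ by edges that became tight early enough (by the end of stage $i-1$). This requires the activity-time inflation built into $\A$ together with Claim~\ref{cl:corres}, and a careful reading of the \texttt{KLS} pruning guarantee---namely, that the unification promised when two still-active moats meet is realized by a tight path that survives into $F_1$. The remaining work is essentially bookkeeping around the ``maximal acyclic $\Rightarrow$ spanning forest'' fact for the collapsed auxiliary graph and the block decomposition of the $\R$-paths.
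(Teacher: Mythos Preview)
Your proposal is correct and follows essentially the same route as the paper: a single induction on the stage index, with (b) derived from (a) via the block decomposition of the $\R$-path, and the heart of the induction step for (a) being the equivalence-class argument---leaders in $\D_1$ sharing a moat of $\moatsone{i}$ are connected in $\forestone{\leq i-1}$ because Claim~\ref{cl:corres} keeps them active until their moats meet, and the \texttt{KLS} guarantee then places them in the same component. The paper organizes the induction step slightly differently (it argues directly that each dropped pair in $\pairsdrop{i-1}$ has its endpoints connected, splitting into ``same equivalence class'' and ``cycle through several classes''), while you phrase it via the observation that the non-dropped pairs form a spanning forest of the collapsed graph; these are the same argument in different clothing.
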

\begin{proof}
  We first show that if the statement~(a) is true for some stage $i$,
  then the corresponding statement~(b) is also true (for this stage).
  Consider a pair $(S, S') \in \pairs{i}$. If $(S, S') \in \pairsgood{i}
  \cup \pairsbad{i}$, the
  edges we add to $\foresttwo{i}$ would connect the terminals in $S \cup S$, as long as
  all the supernodes in $\stagecl{i}$ formed connected components. But by
  the assumption, we know that edges in $\forestone{\leq i-1}
  \cup \foresttwo{\leq i-1}$ connect up each supernode in
  $\stagecl{i}$. Consequently, terminals in $S \cup S'$ lie in the same
  connected component of $\forestone{\leq i-1} \cup \foresttwo{\leq i}$.
  This proves statement~(b).

  We now prove statement~(a) by induction on $i$. At the beginning of stage $i = 0$,
  each supernode $S$ is a singleton and hence the statement is true.

  Now to prove the induction step for~(a). It suffices to show that if
  $(S, S') \in \pairsdrop{i}$ then $S \cup S'$ is contained in the same
  component in $\forestone{\leq i} \cup \foresttwo{\leq i}$. We
  distinguish two cases. The first case is when $S$ and $S'$ both lie in
  the same equivalence class that was used to construct
  $\pairsdrop{i}$. Then $s = \leader(S)$ and $s' = \leader(S')$ belong
  to $\D_1$ and also to the same moat in $\moatsone{i+1}$, at the end of
  stage $i$. Since $S, S'$ are active in stage $i$ of \timedglut($\D$),
  both $s, s'$ have level at least $i$. By Claim~\ref{cl:corres} they
  remain active throughout stage~$i$ of \TPD. Moreover, the end of that
  stage they share the same moat. Hence, $s, s'$ belong to the same moat
  when active---but recall that the \TPD algorithm ensures that whenever
  two active terminals belong to the same moat they lie in the same
  connected component. Hence $s,s'$ must lie in the same connected
  component of $\forestone{\leq i}$. By the induction hypothesis, the
  rest of $S, S'$ are connected to their leaders in $\forestone{\leq
    i-1} \cup \foresttwo{\leq i-1}$. Hence $S, S'$ are connected in
  $\forestone{\leq i} \cup \foresttwo{\leq i-1}$; indeed for each
  equivalence class, the supernodes that belong to it are connected
  using those edges.

  The second case is when for pair $(S,S') \in \pairsdrop{i}$, the
  supernodes $S$ and $S'$ do not fall in the same equivalence class, but
  the adding pair $(S,S')$ to $\pairs{i} \setminus \pairsdrop{i}$ would
  form a cycle when equivalence classes are collapsed. The argument here
  is similar: if the leaders are again $s, s'$, then the above arguments
  applied to each pair on the cycle, and to each equivalence class imply
  that $s$ and $s'$ must be connected in $\forestone{\leq i} \cup
  \foresttwo{\leq i-1}$---and therefore so must $S \cup S'$.
\end{proof}

Since each pair $\{s, \bar{s}\} \in \D_2$ is contained in some supernode
at the end of the run $\R$, Lemma~\ref{lem:moats} implies that they are
eventually connected in using $F_1 \cup F_2$ as well.  This completes
the proof that $F_1 \cup F_2$ is a feasible solution to the demands in
$\D_2$.

\paragraph{Bounding the Cost of Forest $F_2$}

Finally, we want to bound the cost of the edges in $F_2$ by a constant
times $\sum_{(s, \bar s) \in \D_2} \chi(\M, \D_1 \cup \D_2, (s, \bar
s))$. If $\countbad{i} := |\pairsbad{i}|$, then the total
cost of bad edges is at most
\begin{eqnarray}
  \label{eq:badcost}
  \sum_i \countbad{i} \cdot \cee^{i+1}
\end{eqnarray}
because the length of edges added for each connection in $\pairsbad{i}$
is at most $\cee^{i+1}$.

\begin{lemma}
  \label{lem:goodcost}
  The total cost of the edges in $\cup_i \foresttwo{i}$ is at least $3
  \sum_i \countbad{i} \cdot (\cee^{i+1}-\cee^i)$.
\end{lemma}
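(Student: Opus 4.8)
The plan is to reduce the bound to a weighted counting inequality between good and bad pairs, and then to prove that inequality by tracing each bad pair forward through the merges performed by the run $\R$.

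\emph{Reduction.} The set $\cup_i\foresttwo{i}$ is exactly the union of the good edges (for each pair in $\pairsgood{i}$, a path of $\M$-length in $[\cee^i,\cee^{i+1})$) and the bad edges (one such path per pair in $\pairsbad{i}$); the lower bound $\cee^i$ on each added path comes from the stage invariant that any two active supernodes in $\stagecl{i}$ are at distance at least $\cee^i$, which persists through stage $i$ by Claim~\ref{clm:close-mono}. Hence, for $\cee=2$, the bad edges alone have total length at least $\sum_i\countbad{i}\,\cee^i=\sum_i\countbad{i}(\cee^{i+1}-\cee^i)$, so it suffices to show the good edges have total length at least $2\sum_i\countbad{i}(\cee^{i+1}-\cee^i)$; and since each good path also has length at least $\cee^i$, it is enough to establish $\sum_i\countgood{i}\,\cee^i\ \ge\ 2\,\sum_i\countbad{i}\,\cee^i$.

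\emph{Structure of a stage, and the easy charge.} Working in the collapsed auxiliary forest on the equivalence classes of $\stagecl{i}$ with edge set $\pairs{i}\setminus\pairsdrop{i}$, I would first record: (i) a supernode with a $\D_2$-leader is never equivalent to another supernode, so every good pair is incident to such a singleton ``$\D_2$-node''; (ii) a bad pair joins two distinct classes, both with leaders in $\D_1$ and lying in distinct moats of $\moatsone{i+1}$; (iii) each connected component of this forest is contracted by \timedglut into a single supernode of $\stagecl{i+1}$, whose leader is the highest-level terminal in the component. Then I would follow a bad pair $(S,S')\in\pairsbad{i}$ forward along the nested chain $U_{i+1}\subseteq U_{i+2}\subseteq\cdots$ of supernodes of $\stagecl{j}$ containing $S\cup S'$, while these are active. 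If some $U_j$ with $j\ge i+1$ is an endpoint of a pair in $\pairsgood{j}$ (take the first such $j$), charge $(S,S')$ to the good edges of that pair; those have total length at least $\cee^j\ge\cee^{i+1}=2\cee^i$, which supplies the factor $2$ needed for this bad pair, and one must then argue that no good pair absorbs bad pairs of too-large total weight, exploiting the geometric growth of the stage lengths in the style of the congestion counting in the proof of Theorem~\ref{thm:charge}.

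\emph{The hard case, and the main obstacle.} The difficulty is a bad pair sitting in a component that contains no $\D_2$-node and whose contracted supernode $U$ becomes inactive at stage $i+1$, so that no good pair ever lies ``above'' it. For such bad pairs I would invoke the inflation of the \TPD activity times to $\time(s)=6\cee^{\level(s)+1}$ together with Claim~\ref{cl:corres}: the two $\D_1$-leaders of the bad pair stay active in the primal-dual run on $\D_1$ well past stage $i$, so $F_1$ is forced to carry precisely the connectivity that these bad edges provide, and I would redirect the charge of the bad pair onto good edges bought in \emph{earlier} stages, while the supernodes of the component were being assembled (keeping congestion $O(1)$ via the bounded-degree Steiner structure, as in Claim~\ref{cl:forest-prop}). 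Turning ``$F_1$ already joins the relevant $\D_1$-moats'' into an honest, constant-congestion charging that still leaves the factor $3$ intact after summing over all stages is the main obstacle; the remaining cases, and the feasibility direction via Lemma~\ref{lem:moats}, are routine bookkeeping once the stage structure (i)--(iii) is in hand.
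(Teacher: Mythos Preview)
Your proposal has a genuine gap, and the overall strategy diverges from the paper in a way that does not seem to close.

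\textbf{The gap.} You reduce to showing $\sum_i \countgood{i}\,\cee^i \ge 2\sum_i \countbad{i}\,\cee^i$, and then try to charge each bad pair forward to some good pair lying above it in the merge tree of $\R$. You explicitly acknowledge that the ``hard case'' --- a bad pair whose containing supernode stays $\D_1$-led until it becomes inactive, so that no good pair ever sits above it --- is unresolved. This is not a corner case: if $\D_2$ is small, most stage-$i$ merges can be bad, and entire branches of the merge tree can consist solely of $\D_1$-led supernodes. In that situation your forward charging has nothing to hit, and the fallback of redirecting to ``good edges bought in earlier stages'' is unsupported (there is no reason any good edges were bought earlier in that branch either). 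The reduced inequality $\sum_i \countgood{i}\,\cee^i \ge 2\sum_i \countbad{i}\,\cee^i$ is never established in the paper and I do not see why it should hold in general.

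\textbf{What the paper does instead.} The paper does not relate good pairs to bad pairs at all in this lemma; that comparison only happens afterward in Theorem~\ref{thm:strict-cost}. Here the argument is purely geometric, driven by the primal-dual moats. A pair $(S,S')\in\pairsbad{i}$ has, by definition, leaders $s,s'\in\D_1$ lying in \emph{distinct} moats of $\moatsone{i+1}$; since \TPD\ runs with activity times inflated to $6\cdot\cee^{\level(\cdot)+1}$, these moats stay separate throughout stage $i$, so after contracting the moats $\moatsone{i}$ the points $s,s'$ are at distance at least $6(\cee^{i+1}-\cee^i)$. By Lemma~\ref{lem:moats}(b), $\forestone{\leq i-1}\cup\foresttwo{\leq i}$ connects $S$ to $S'$; contracting $\moatsone{i}$ kills all of $\forestone{\leq i-1}$, so $\foresttwo{\leq i}$ alone must span from $M_s$ toward $M_{s'}$. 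Taking a maximal set $L_B^\star$ of bad-pair leaders in distinct $\moatsone{i+1}$-moats (acyclicity of $\pairsbad{i}$ gives $|L_B^\star|\ge\countbad{i}+1$) and placing disjoint balls of radius $3(\cee^{i+1}-\cee^i)$ around their $\moatsone{i}$-moats yields at least $3\,\countbad{i}(\cee^{i+1}-\cee^i)$ worth of $\foresttwo{\leq i}$-edges in the annulus between $\moatsone{i}$ and $\moatsone{i+1}$. These annuli are disjoint over $i$, so induction on $i$ gives the bound. The inflation factor $6$ is used precisely here to make the inter-moat distance large enough --- not for the feasibility argument where you invoked it.
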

\begin{proof}
  For this proof, recall that we run the primal-dual process on the
  metric $\M$, and $\moatsone{i}$ are the dual moats at the beginning of
  stage $i$. Let $E_i$ denote the set of tight edges lying inside the
  moats in $\moatsone{i}$.  We prove the following statement by
  induction on $i$: the total cost of edges in $\foresttwo{\leq i} \cap
  E_i$ is at least $3 \sum_{j \leq i} \countbad{j} \cdot
  (\cee^{j+1}-\cee^{j})$.

  The base case for $i=0$ follows trivially because the $\foresttwo{0}$ is empty, and
 $\countbad{0}$ is also 0.

  Suppose the statement is true for some $i-1$. Now consider the pairs in
  $\pairsbad{i}$---these correspond to pairs of supernodes $(S,S')$
  whose leaders lie in $\D_1$. The pairs in $\pairsbad{i}$ form an
  acyclic set in the auxiliary graph $\stageg{i}$. Consider the set of
  supernodes which occur as endpoints of the edges in $\pairsbad{i}$,
  and let $L_B$ be the set of terminals that are the leaders of these
  supernodes. Now pick a maximal set of these supernodes subject
  to the constraint that all of them lie in different moats in
  $\moatsone{i+1}$; i.e., no two of them lie in the same equivalence
  class. Let $L^\star_B \sse L_B$ be the set of terminals that are
  leaders of this maximal set. In the example given in Figure~\ref{fig:cost},
  $L_B=\{S_1, S_2, S_5, S_7, S_8, S_{10} \}$, and we could define $L^\star_B$ as
  $\{S_1,  S_5, S_7, S_8 \}$.
  By the fact that $\pairsbad{i}$ is an
  acyclic set, we get $|L^\star_B| \geq \countbad{i} + 1$.

  For any $s \in L_B$, let $M_s$ be the moat in $\moatsone{i}$
  containing $s$. By construction,
  \begin{OneLiners}
  \item For each pair $(S,S') \in \pairsbad{i}$, the leaders of $S$ and
    $S'$ lie in different moats in $\moatsone{i+1}$.
  \item For all $s \neq s' \in L_B^\star$, $s$ and $s'$ belong to different
    moats in $\moatsone{i+1}$.
  \end{OneLiners}
  Also note that terminals in distinct moats of $\moatsone{i+1}$ \emph{a
    fortiori} lie in distinct moats in $\moatsone{i}$. Now contract all
  the moats in $\moatsone{i}$ in the metric $\M$. Observe that all edges
  in $\forestone{\leq i-1}$ were already tight by the end of stage
  $i-1$, and hence get contracted by this operation.

  By Lemma~\ref{lem:moats}(b), for each $s \in L_B^\star$ the edges in
  $\forestone{\leq i-1} \cup \foresttwo{\leq i}$ connect moat $M_s$ to
  some another moat $M_{s'}$ for some $s' \in L_B$ (corresponding to the
  pair in $\pairsbad{i}$). Since we contracted the moats in
  $\moatsone{i}$, the edges in $\foresttwo{\leq i}$ connect $M_s$ to
  $M_{s'}$ in this contracted metric. But any two moats $M_s, M_{s'}$
  are at least $6 \cdot ( \cee^{i+1}-\cee^i)$ apart in this contracted metric
  (because these moats do not meet during stage~$i$ of the primal dual
  growing process, else $s, s'$ would share a moat in $\moatsone{i+1}$).
  Therefore, if we draw a ball of radius $3(\cee^{i+1}-\cee^i)$ around
  the moat $M_s$ in this contracted metric, this ball contains edges
  from $\foresttwo{\leq i}$ of length at least $3(\cee^{i+1}-\cee^i)$.

  Since these balls around the terminals in $L_B^\star$ are all
  disjoint, the total length of all edges in these balls is at least
  $$|L_B^\star| \cdot 3 \cdot (\cee^{i+1}-\cee^i) \geq 3 (\countbad{i}+1) \cdot (\cee^{i+1}-\cee^i).$$
  All these edges lie in moats in $\moatsone{i+1}$ but not within moats
  in $\moatsone{i}$, so we add these to the bound we get for
  $\foresttwo{\leq i-1} \cap E_{i-1}$ from the induction hypothesis and
  complete the inductive step.
\end{proof}

\begin{theorem}
  \label{thm:strict-cost}
  The cost of edges in $F_2$ is at most $6
  \gammatg$ times $\sum_{(s, \bar s) \in \D_2} \chi(\M, \D_1 \cup \D_2,
  (s, \bar s))$.
\end{theorem}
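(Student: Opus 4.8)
The plan is a short accounting argument resting on Lemma~\ref{lem:goodcost}, which already carries the weight. First I would decompose $\cost(F_2)$ by stages. For each stage $i$, the run $\R$ added to $\foresttwo{i}$ the underlying-metric edges for every pair in $\pairs{i}\setminus\pairsdrop{i}=\pairsgood{i}\cup\pairsbad{i}$, and each such connection has length at most $\cee^{i+1}$. Writing $\countbad{i}:=|\pairsbad{i}|$,
\[
  \cost(F_2)\;\le\;\sum_i\cost(\foresttwo{i})\;\le\;\sum_i\big(|\pairsgood{i}|+\countbad{i}\big)\,\cee^{i+1},
\]
so it suffices to bound the good-edge term $\sum_i|\pairsgood{i}|\,\cee^{i+1}$ and the bad-edge term $\sum_i\countbad{i}\,\cee^{i+1}$ separately.

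For the good edges I would use the definition of $\chi$: when $\R$ processes a pair $(S,S')\in\pairs{i}$ it raises, by $\cee^{i+1}/(2\gammatg)$, the cost share of the demand pair of $\leader(S)$ and that of $\leader(S')$. A pair in $\pairsgood{i}$ has at least one of these two leaders in $\D_2$, so each such pair injects at least $\cee^{i+1}/(2\gammatg)$ into $\sum_{(s,\bar s)\in\D_2}\chi(\M,\D_1\cup\D_2,(s,\bar s))$; distinct pairs over all stages generate distinct increments, so there is no double counting and
\[
  \sum_i|\pairsgood{i}|\,\cee^{i+1}\;\le\;2\gammatg\sum_{(s,\bar s)\in\D_2}\chi(\M,\D_1\cup\D_2,(s,\bar s)).
\]

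For the bad edges I would invoke Lemma~\ref{lem:goodcost}. Since $\cee=2$ we have $\cee^{i+1}-\cee^i=\cee^{i+1}/2$, so the lemma gives $\cost(F_2)=\cost(\cup_i\foresttwo{i})\ge 3\sum_i\countbad{i}(\cee^{i+1}-\cee^i)=\tfrac{3}{2}\sum_i\countbad{i}\,\cee^{i+1}$, i.e.\ $\sum_i\countbad{i}\,\cee^{i+1}\le\tfrac{2}{3}\cost(F_2)$. Substituting the two bounds into the first display yields $\cost(F_2)\le 2\gammatg\sum_{(s,\bar s)\in\D_2}\chi(\M,\D_1\cup\D_2,(s,\bar s))+\tfrac{2}{3}\cost(F_2)$, and rearranging gives $\cost(F_2)\le 6\gammatg\sum_{(s,\bar s)\in\D_2}\chi(\M,\D_1\cup\D_2,(s,\bar s))$, exactly as claimed. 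The subtle points here — that the bad-edge cost is a bounded fraction of the whole forest's cost, and that the good-edge charge to $\D_2$'s cost shares is injective — are routine; the genuine obstacles (feasibility of $F_1\cup F_2$, and the lower bound on $\cost(F_2)$ in terms of the $\countbad{i}$'s) have already been handled in Lemmas~\ref{lem:moats} and~\ref{lem:goodcost}, so this final statement is purely assembly.
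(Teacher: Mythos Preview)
Your proposal is correct and follows essentially the same approach as the paper: both use Lemma~\ref{lem:goodcost} (with $\cee=2$) to conclude that the bad-edge cost $\sum_i\countbad{i}\cee^{i+1}$ is at most $\tfrac{2}{3}\cost(F_2)$, and both use the definition of $\chi$ to bound the good-edge contribution by $2\gammatg$ times the $\D_2$ cost shares. The paper phrases the combination as ``good edges are at least one third of all edges, and the $\D_2$ cost shares are at least $1/(2\gammatg)$ of the good edges,'' while you write the same inequality with $\cost(F_2)$ on both sides and rearrange; the arithmetic is identical.
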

\begin{proof}
  By Lemma~\ref{lem:goodcost}, the cost of edges in $F_2$ is at least $3
  \sum_i \countbad{i} \cdot (\cee^{i+1} - \cee^i) \geq \frac32 \sum_i
  \countbad{i} \cdot \cee^{i+1}$. Out of these, the
  bad edges have total cost at most $\sum_i \countbad{i} \cdot \cee^{i+1}$,
  by~(\ref{eq:badcost}). So the cost of the good edges in $F_2$ is at
  least one third of the cost of all edges in $F_2$.

  However, observe that good edges correspond to pairs $(S,S') \in
  \pairsgood{i}$ for some $i$, i.e., this pair of supernodes was
  connected by these good edges in stage~$i$ of \timedglut, and
  moreover, at least one of the leaders of $S$ and $S'$ belong to a
  terminal pair in $\D_2$. By the construction of our cost shares, it
  follows the cost share of terminal pairs in $\D_2$ is at least
  $\frac1{2\gammatg}$ times the cost of the good edges. Hence the cost
  of the forest $F_2$ is at most $6 \gammatg$ times the cost share of
  terminal pairs in $\D_2$, proving the theorem.
\end{proof}
}
%%% Local Variables:
%%% mode: latex
%%% TeX-master: "gluttonous"
%%% End:

\subsubsection*{Acknowledgments} We thank R.\ Ravi for suggesting the
problem to us, for many discussions about it over the years, and for the
algorithm name. We also thank Chandra Chekuri, Jochen K\"onemann, Stefano
Leonardi, and Tim Roughgarden.

\ifstoc
 \bibliographystyle{abbrv}
 {\bibliography{bibonline}}
\else
 \bibliographystyle{alpha}
 {\small \bibliography{bibonline}}

\begin{thebibliography}{KLSvZ08}

\bibitem[AKR95]{AKR95}
Ajit Agrawal, Philip Klein, and R.~Ravi.
\newblock When trees collide: an approximation algorithm for the generalized
  {Steiner} problem on networks.
\newblock {\em SIAM J. Comput.}, 24(3):440--456, 1995.

\bibitem[BGRS13]{ByrkaGRS13}
Jaros{\l}aw Byrka, Fabrizio Grandoni, Thomas Rothvoss, and Laura Sanit{\`a}.
\newblock Steiner tree approximation via iterative randomized rounding.
\newblock {\em J. ACM}, 60(1):Art. 6, 33, 2013.

\bibitem[Big98]{Biggs-girth}
Norman Biggs.
\newblock Constructions for cubic graphs with large girth.
\newblock {\em Electron. J. Combin.}, 5:Article 1, 25 pp. (electronic), 1998.

\bibitem[CRV10]{CRV10}
Ho-Lin Chen, Tim Roughgarden, and Gregory Valiant.
\newblock Designing network protocols for good equilibria.
\newblock {\em SIAM J. Comput.}, 39(5):1799--1832, 2010.

\bibitem[CS09]{CS08}
Chandra Chekuri and F.~Bruce Shepherd.
\newblock Approximate integer decompositions for undirected network design
  problems.
\newblock {\em SIAM J. Discrete Math.}, 23(1):163--177, 2008/09.

\bibitem[FKLS10]{FKLS10}
Lisa Fleischer, Jochen K{\"o}nemann, Stefano Leonardi, and Guido Sch{\"a}fer.
\newblock Strict cost sharing schemes for {S}teiner forest.
\newblock {\em SIAM J. Comput.}, 39(8):3616--3632, 2010.

\bibitem[GGLS08]{GargGLS08}
Naveen Garg, Anupam Gupta, Stefano Leonardi, and Piotr Sankowski.
\newblock Stochastic analyses for online combinatorial optimization problems.
\newblock In {\em SODA}, pages 942--951, 2008.

\bibitem[GK09]{GuptaK09}
Anupam Gupta and Amit Kumar.
\newblock A constant-factor approximation for stochastic steiner forest.
\newblock In {\em STOC}, pages 659--668, New York, NY, USA, 2009. ACM.

\bibitem[GKPR07]{GKPR}
Anupam Gupta, Amit Kumar, Martin P{\'a}l, and Tim Roughgarden.
\newblock Approximation via cost sharing: simpler and better approximation
  algorithms for network design.
\newblock {\em J. ACM}, 54(3):Art. 11, 38 pp., 2007.

\bibitem[GPRS11]{GuptaPRS04}
Anupam Gupta, Martin P{\'a}l, R.~Ravi, and Amitabh Sinha.
\newblock Sampling and cost-sharing: approximation algorithms for stochastic
  optimization problems.
\newblock {\em SIAM J. Comput.}, 40(5):1361--1401, 2011.

\bibitem[GW95]{GW95}
Michel~X. Goemans and David~P. Williamson.
\newblock A general approximation technique for constrained forest problems.
\newblock {\em SIAM J. Comput.}, 24(2):296--317, 1995.

\bibitem[IKMM04]{IKMM04}
Nicole Immorlica, David Karger, Maria Minkoff, and Vahab Mirrokni.
\newblock On the costs and benefits of procrastination: Approximation
  algorithms for stochastic combinatorial optimization problems.
\newblock In {\em SODA}, pages 684--693, 2004.

\bibitem[Jai01]{Jain98}
Kamal Jain.
\newblock A factor 2 approximation algorithm for the generalized {Steiner}
  network problem.
\newblock {\em Combinatorica}, 21(1):39--60, 2001.
\newblock (Preliminary version in {\em 39th FOCS}, pages 448--457, 1998).

\bibitem[KLS05]{KLS05}
Jochen K{\"o}nemann, Stefano Leonardi, and Guido Sch{\"a}fer.
\newblock A group-strategyproof mechanism for steiner forests.
\newblock In {\em SODA}, pages 612--619, 2005.

\bibitem[KLSvZ08]{KLSZ08}
Jochen K{\"o}nemann, Stefano Leonardi, Guido Sch{\"a}fer, and Stefan H.~M. van
  Zwam.
\newblock A group-strategyproof cost sharing mechanism for the steiner forest
  game.
\newblock {\em SIAM J. Comput.}, 37(5):1319--1341, 2008.

\bibitem[P{\'a}l04]{Pal}
M.~P{\'a}l.
\newblock {\em Cost Sharing and Approximation}.
\newblock PhD thesis, Cornell University, 2004.

\bibitem[RZ05]{RobinsZ05}
Gabriel Robins and Alexander Zelikovsky.
\newblock Tighter bounds for graph {S}teiner tree approximation.
\newblock {\em SIAM J. Discrete Math.}, 19(1):122--134, 2005.

\end{thebibliography}
\fi

\full{
\appendix

\section{The Lower Bound for (Paired) Greedy}
\label{sec:girth-lbd}

The (paired) greedy algorithm picks, at each time the closest
yet-unconnected source-sink pair $(s, \bar s)$ in the current graph, and
connects them using a shortest $s$-$\bar s$ path in the current graph. A
lower bound of $\Omega(\log n)$ for this algorithm was given by Chen,
Roughgarden, and Valiant~\cite{CRV10}. We repeat this lower bound here
for completeness.

Take a cubic graph $G = (V,E)$ with $n$ nodes and girth at least $c \log
n$ for constant $c$; see~\cite{Biggs-girth} for constructions of such
graphs. Fix a spanning tree $T$ of $G$, and let $E' = E \setminus E(T)$
be the non-tree edges. Set the lengths of edges in $E(T)$ to~$1$, and
the lengths of edges in $E'$ to $\frac{c}{2} \log n$.

Let $M$ be a maximal matching in $G' = (V,E')$; since $G$ and hence $G'$
has maximum degree $3$, this maximal matching has size at least
$\Omega(|E'|) = \Omega(n)$. The demand set $\D$ consists of the
matching~$M$.

We claim that the paired-greedy algorithm will just buy the direct edges
connecting the demand pairs. This will incur cost $|M| \cdot \frac{c}{2}
\log n = \Omega(n \log n)$. The optimal solution, on the other hand, is
to buy the edges of $T$, which have total cost $n-1$, which gives the
claimed lower bound of $\Omega(\log n)$.

The proof of the claim about the behavior of paired-greedy is by
induction. Suppose it is true until some point, and then demand $(u,v)$
is considered. By construction, $(u,v) \in M$. We can model the fact
that we bought some previous edges by zeroing out their lengths. Now the
observation is that for any path $P$ between $u$ and $v$ that is not the
direct edge $(u,v)$, if there are $k$ edges from $E'$ on this path, then
at most $\lceil k/2 \rceil$ of them can be zeroed out. Moreover, since
the girth is $g$, the number of edges on the path is at least $g-1$. So
the new length of the path $P$ is at least $(g-1-k) + \lfloor k/2
\rfloor \frac{c}{2} \log n \geq g-2$. This means the direct edge between
$u,v$ is still the (unique) shortest path, and this proves the claim and
the result.

\section{A Different Gluttonous Algorithm, and its Analysis}
\label{sec:projected}

In this section, we consider a slightly different analysis of the
gluttonous algorithm which does not rely on the faithfulness property
developed in Section~\ref{sec:near-optimal}. We then use this analysis
to show that a different gluttonous algorithm is also a constant-factor
approximation.

As before, let $\Fs$ denote an optimal solution to the \stf instance $\I
= (\M, \D)$, and let the trees in $\Fs$ be $\Ts_1, \Ts_2, \ldots,
\Ts_p$. A supernode $S$ is a subset of terminals -- note that we no
longer require that the supernodes formed during the gluttonous
algorithm should be contained in one of the trees of $\Fs$.  Let
$\clus{t}$ denote the clustering at the beginning of iteration $t$. So
$\clus{1}$ consists of singleton supernodes.

For each index $r$, $1 \leq r \leq p$, we can think of a new instance
$\I_r$ with set of terminals $V(\Ts_r)$ and metric $\M_r$ (the metric
$\M$ restricted to these terminals).  We now define the notion of {\em
  projected supernodes}. For each tree $\Ts_r$, we shall maintain a
clustering $\clusr{t}{r}$ corresponding to $\clus{t}$ -- this should be
seen as the {\em restriction} of the algorithm $\A$ to the instance
$\I_t$.  A natural way of defining this would be $S \cap V(\Ts_r)$ for
every $S \in \clus{t}$. But it turns out that we will really need a
refinement of the latter clustering. The reason for this is as follows:
if the algorithm $\A$ merges two active supernodes in the instance $\I$,
the intersection of the two supernodes with $V(\Ts_r)$ may be inactive
supernodes. But we do not want to combine two inactive supernodes in the
clustering $\clusr{t}{r}$.

We now define the clustering $\clusr{t}{r}$ formally. For a supernode
$S$, let $S_r$ denote $S \cap V(\Ts_r)$.  We say that $S_r$ is active if
there is some demand pair $(u, {\bar u})$ such that $u \in S_r$, but
${\bar u} \notin S_r$.  Let $\alive(S)$ denote the set of indices $r$
such that $S_r$ is active.  For each iteration $t$, we shall maintain
the following invariants.
\begin{itemize}
\item[(i)] The clustering $\clusr{t}{r}$ will be a refinement of the
  clustering $\{S_r: S \in \clus{t}\}$.
\item[(ii)] For each (active) supernode $S \in \clus{t}$ such that $S_r$
  is active, there will be exactly one active supernode $\act_r(S)
  \subseteq S_r$ in the clustering $\clusr{t}{r}$.
\end{itemize}
Initially, $\clusr{1}{r}$ is the clustering consisting of singleton sets
(it is easy to check that it satisfies the two invariants above because
$\clus{1}$ also consists of singleton sets). Suppose at iteration $t$,
the algorithm $\A$ merges supernodes $S'$ and $S''$ to a supernode
$S$. If $S'_r$ and $S''_r$ are both active (i.e., $r \in \alive(S') \cap
\alive(S'')$), then we replace $\act_r(S')$ and $\act_r(S'')$ by their
union in the clustering $\clusr{t}{r}$ to get
$\clusr{t+1}{r}$. Otherwise, $\clusr{t+1}{r}$ is same as
$\clusr{t}{r}$. It is easy to check that the above two invariants will
still be satisfied.  Further, observe that we only merge two active
supernodes of $\clusr{t}{r}$, and an inactive supernode never merges
with any other supernode.

Let $\delta_{t,r}$ denote the minimum over two active supernodes $S',
S''\in \clusr{t}{r}$ of $d_{\M_r/\clusr{t}{r}}(S', S'')$ (in case
$\clusr{t}{r}$ has only one active supernode, this quantity is
infinity). Fact~\ref{fact:gen} implies that $\delta_{t,r}$ is ascending
with $t$. Let $\delta_t$ denote the minimum over all $r$ of
$\delta_{t,r}$. Clearly, $\delta_t$ is also ascending with $t$.

Whenever the algorithm merges active supernodes $S'$ and $S''$, we will
think of merging the corresponding supernodes $\act_r(S')$ and
$\act_r(S'')$ in the instance $\I_r$ for each value of $r$ (provided $r
\in \alive(S') \cap \alive(S'')$).  Note that the latter merging cost
may not suffice to account for the merging cost of $S'$ and $S''$. For
example, suppose $S'=\{a\}, S''=\{b\}$, and $a$ and $b$ happen to be in
different trees in $\Fs$, say $\Ts_1$ and $\Ts_2$ respectively. If $S =
\{a,b\}$ is the new supernode, then $\act_1(S)=\act_1(S') = \{a\}$ and
$\act_2(S)=\act_2(S'') = \{b\}$. So the merging costs in the
corresponding instances is 0, but the actual merging cost is
positive. The main idea behind the proof is that we will pay for this
merging cost later when the algorithm merges the supernode containing
$a$ with some other supernode which has non-empty intersection with
$\Ts_1$ (and similarly for $b$).

In order to keep track of the unpaid merging cost, we associate a {\em
  charge} with each supernode (which gets formed during the gluttonous
algorithm $\A$) -- let $\charge(S)$ denote this quantity.  At the
beginning of every every iteration $t$ and supernode $S \in \clus{t}$,
we will maintain the following invariant:
\begin{eqnarray}
  \label{eq:chargeinv}
  \charge(S) \leq \left\{ \begin{array}{ll}  0 & \mbox{if $S$ is
        inactive at the beginning of iteration $t$. } \\
      (n(S)-1) \cdot \delta_t & \mbox{ otherwise,}  \end{array} \right.
\end{eqnarray}
where $n(S)$ denotes the number of indices $r$ such that $S_r$ is
active, i.e., $n(S) = |\alive(S)|$.  Now, we show how the quantity
$\charge(S)$ is updated. Initially (at the beginning of iteration 1), we
have a supernode $\{u\}$ for each terminal $u$.  The charge associated
with each of these supernodes is 0. Clearly, the invariant condition
above is satisfied.

Suppose the algorithm $\A$ merges supernodes $S'$ and $S''$ in iteration
$t$.  Let $S$ denote the new supernode. We define $\charge(S)$ as
\begin{eqnarray}
  \label{eq:charge}
  \charge(S) = \charge(S') + \charge(S'') - (2 |\alive(S') \cap
  \alive(S'')|-1) \delta_t.
\end{eqnarray}

We first prove that the invariant about charge of a supernode is satisfied.
\begin{claim}
  \label{cl:inv}
  The invariant conditions~(\ref{eq:chargeinv}) are always satisfied.
\end{claim}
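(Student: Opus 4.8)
The plan is to prove Claim~\ref{cl:inv} by induction on the iteration $t$. Two preliminary observations make the bookkeeping go through. First, whenever the algorithm has not yet terminated at the start of iteration $t$, there is an unresolved demand pair $(u,\bar u)$; since $\Fs$ connects $u$ to $\bar u$, both lie in some tree $\Ts_r$, and the supernodes of $\clus{t}$ containing $u$ and $\bar u$ project to two distinct active supernodes of $\clusr{t}{r}$, so $\delta_{t,r}$ --- and hence $\delta_t$ --- is finite; thus the right-hand side of \eqref{eq:chargeinv} is always well defined. Second, if a supernode $U$ is active, say $u\in U$ with $\bar u\notin U$, then picking $r$ with $u\in V(\Ts_r)$ shows $U_r$ is active, so $n(U)=|\alive(U)|\ge 1$.

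The crux of the argument is the following combinatorial fact about merging: if $S',S''$ are disjoint supernodes with union $S$, then $\alive(S)\supseteq \alive(S')\symdif\alive(S'')$, equivalently $n(S)\ge n(S')+n(S'')-2\,|\alive(S')\cap\alive(S'')|$. I would prove this by taking $r\in\alive(S')\setminus\alive(S'')$ (the other case being symmetric). Since $S''_r$ is inactive, it is closed under taking mates; since $S'_r$ is active, fix $u\in S'_r$ with $\bar u\notin S'_r$. If $\bar u\in S_r=S'_r\cup S''_r$, then $\bar u\in S''_r$, so $u\in S''_r$ by closure, contradicting $S'_r\cap S''_r=\emptyset$ (which follows from $S'\cap S''=\emptyset$). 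Hence $\bar u\notin S_r$ while $u\in S_r$, so $S_r$ is active and $r\in\alive(S)$. This is the one step that needs care; everything else is routine.

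For the base case $t=1$, every supernode is a singleton $\{u\}$, which is active with $n(\{u\})=1$ and $\charge(\{u\})=0\le (n(\{u\})-1)\,\delta_1$, so \eqref{eq:chargeinv} holds. For the inductive step, assume \eqref{eq:chargeinv} at the start of iteration $t$, and suppose the algorithm merges active supernodes $S',S''$ into $S$; I must verify \eqref{eq:chargeinv} for $\clus{t+1}$ against $\delta_{t+1}$. A supernode $U\ne S',S''$ is unchanged, as are $\charge(U)$ and its activity status, so its bound follows from the inductive hypothesis together with $\delta_{t+1}\ge\delta_t$ and the second observation above ($n(U)-1\ge 0$ when $U$ is active). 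For $S$, set $m:=|\alive(S')\cap\alive(S'')|$; since $S',S''$ were active, the inductive hypothesis gives $\charge(S')+\charge(S'')\le (n(S')+n(S'')-2)\,\delta_t$, so the update rule \eqref{eq:charge} gives
\[
\charge(S)\ \le\ \big(n(S')+n(S'')-2m-1\big)\,\delta_t\ =\ \big(|\alive(S')\symdif\alive(S'')|-1\big)\,\delta_t\ \le\ (n(S)-1)\,\delta_t,
\]
using the combinatorial fact in the last inequality (and $\delta_t\ge 0$). If $S$ is active at the start of iteration $t+1$ then $n(S)\ge 1$, so $\charge(S)\le (n(S)-1)\,\delta_t\le (n(S)-1)\,\delta_{t+1}$; if $S$ is inactive then $n(S)=0$, so $\charge(S)\le -\delta_t\le 0$. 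Either way \eqref{eq:chargeinv} holds for $\clus{t+1}$, completing the induction. The only genuine obstacle is getting the combinatorial fact right --- in particular, noticing that an inactive projected supernode $S''_r$ is closed under mates and exploiting $S'_r\cap S''_r=\emptyset$; after that the charge update was designed precisely so the telescoping works.
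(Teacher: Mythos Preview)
Your proof is correct and follows essentially the same inductive approach as the paper's own argument. The main difference is that you give a fuller justification of the key combinatorial fact $\alive(S)\supseteq \alive(S')\symdif\alive(S'')$ (via the closure-under-mates argument) and add the preliminary observations about finiteness of $\delta_t$ and $n(U)\ge 1$ for active $U$, whereas the paper simply asserts the inclusion and proceeds.
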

\begin{proof}
  We prove this by induction on the iteration $t$. Suppose $\A$ merges
  (active) supernodes $S'$ and $S''$ iteration $t$, and let $S$ be the
  new supernode. If we consider a supernode $S_1$ other than $S$ in
  $\clus{t+1}$, then the statement follows easily. Note that
  $\charge(S_1)$ does not change.  If $S_1$ were inactive before
  iteration $t$, it will continue to be inactive at the end of iteration
  $t$ as well. If $S_1$ were active before iteration $t$, then it will
  continue to be active be at the beginning of iteration $(t+1)$ as
  well.  Further, $n(S_1)$ will not change (and is at least 1), and
  $\delta_{t+1} \geq \delta_t$. So, we now consider the case for $S$
  only.  We get
  \begin{align*}
    \charge(S) &=  \charge(S') + \charge(S'') -  (2 |\alive(S') \cap
    \alive(S'')|-1) \delta_t \\
    & \leq (n(S')-1) \cdot \delta_t + (n(S'')-1) \cdot \delta_t  -  (2
    |\alive(S') \cap \alive(S'')|-1) \delta_t  \\
    & \leq (n(S)-1) \cdot \delta_t,
  \end{align*}
  where the last inequality follows from the fact that $n(S) \geq n(S')
  + n(S'')-2| \alive(S') \cap \alive(S'')|$. Note that it is possible
  that $r \in \alive(S') \cap \alive(S'')$, but $r \notin \alive(S)$.
  However, $\alive(S') \backslash \alive(S''),$ $ \alive(S'') \backslash
  \alive(S') \subseteq \alive(S).$ Finally, observe that if $S$ is
  active, then $n(S) \geq 1$. This implies that $(n(S)-1) \cdot \delta_t
  \leq (n(S)-1) \cdot \delta_{t+1}$. Therefore, the invariant holds in
  this case. If $S$ becomes inactive, then $n(S)=0$, and so, $(n(S)-1)
  \cdot \delta_t \leq 0$. Again, the invariant continues to hold here.
\end{proof}

When the algorithm starts, $n(S)=1$ for all supernodes $S$ (consisting
of singleton terminals). So, $\charge(S)=0$ for all $S$ at time
$1$. When the algorithm terminates, $\charge(S) \leq 0$ for all
supernodes (since they are inactive). Therefore, adding the
inequality~({eq:charge}) over all iterations $t$, we get
\begin{eqnarray}
\label{eq:bounddel}
\sum_t \delta_t \leq 2 \delta_t \cdot \sum_t |\alive(S') \cap \alive(S'')| \leq 2 \sum_t \sum_r \one{r}{t} \cdot \delta_{t,r},
\end{eqnarray}
where $\one{r}{t}$ is the indicator variable which is 1 if and only if $r \in
\alive(S') \cap \alive(S'')$.  The following claim follows easily.

\begin{claim}
  \label{cl:mergingcost}
  The merging cost of $\A$ in iteration $t$ is at most $\delta_t$.
\end{claim}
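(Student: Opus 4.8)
The plan is to simply unwind the definitions. Since $\A$ here is the gluttonous algorithm, at iteration $t$ it merges the pair of active supernodes of $\clus{t}$ that are closest in the punctured metric $\M/\clus{t}$; hence the merging cost of iteration $t$ is \emph{exactly} $\min d_{\M/\clus{t}}(S_1,S_2)$ over all pairs of distinct active supernodes $S_1,S_2 \in \clus{t}$. So it suffices to exhibit, for the index $r$ realizing $\delta_t=\delta_{t,r}$, two distinct active supernodes $S_1,S_2$ of $\clus{t}$ with $d_{\M/\clus{t}}(S_1,S_2)\le\delta_t$. The degenerate cases ($\clus{t}$ has fewer than two active supernodes, so the algorithm has halted; or $\delta_t=\infty$) are vacuous.

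First I would establish a comparison between the two punctured metrics: for all $x,y\in V(\Ts_r)$,
$d_{\M/\clus{t}}(x,y)\le d_{\M_r/\clusr{t}{r}}(x,y)$. This is where invariant~(i) is used: $\clusr{t}{r}$ refines $\{S_r : S\in\clus{t}\}$, so if two terminals of $V(\Ts_r)$ share a supernode of $\clusr{t}{r}$ they share some $S_r$, hence some $S\in\clus{t}$. Combined with the fact that $\M_r$ is just $\M$ restricted to $V(\Ts_r)$, every edge of the graph underlying $\M_r/\clusr{t}{r}$ is also an edge of the graph underlying $\M/\clus{t}$ of no larger length, so shortest-path distances in the former dominate those in the latter. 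Next I would take the two active supernodes $P_1\ne P_2$ of $\clusr{t}{r}$ attaining $\delta_{t,r}$, and witnesses $u\in P_1$, $v\in P_2$ with $d_{\M_r/\clusr{t}{r}}(u,v)=\delta_{t,r}$. By invariant~(ii), $P_i=\act_r(S_i)$ for supernodes $S_i\in\clus{t}$ with $(S_i)_r=S_i\cap V(\Ts_r)$ active; and here I would invoke feasibility of $\Fs$ to conclude that $(S_i)_r$ active implies $S_i$ active (the mate of a $V(\Ts_r)$-terminal lies in $V(\Ts_r)$, so an active terminal of $(S_i)_r$ witnesses activity of $S_i$ too). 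Since $\act_r$ is a function, $P_1\ne P_2$ forces $S_1\ne S_2$. Then $u\in S_1$, $v\in S_2$, and chaining gives the merging cost $\le d_{\M/\clus{t}}(S_1,S_2)\le d_{\M/\clus{t}}(u,v)\le d_{\M_r/\clusr{t}{r}}(u,v)=\delta_{t,r}=\delta_t$, as wanted.

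The argument is short, so there is no real ``hard part'': the two points that must be gotten right are the direction of the metric comparison (the refinement of the clustering and the restriction of the ground set both \emph{increase} distances, which is exactly what we need) and the small amount of bookkeeping showing the two active supernodes of the projected instance sit inside two \emph{distinct, active} supernodes of $\clus{t}$ — the ``active'' part relying on $\Fs$ being a genuine Steiner forest.
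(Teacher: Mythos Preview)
Your proposal is correct and follows the same approach as the paper's proof: pick the index $r$ realizing $\delta_t=\delta_{t,r}$, lift the two active projected supernodes $\act_r(S_1),\act_r(S_2)$ to distinct active supernodes $S_1,S_2\in\clus{t}$, and use the metric comparison $d_{\M/\clus{t}}\le d_{\M_r/\clusr{t}{r}}$. You have simply made explicit the bookkeeping (invariants (i),(ii), feasibility of $\Fs$, injectivity of $\act_r$) that the paper compresses into a single ``Observe that\ldots''.
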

\begin{proof}
  Suppose $\delta_t = \delta_{t,r}=d_{\M_r/\clusr{t}{r}}(\act_r(S'),
  \act_r(S''))$ for some active supernodes $S', S'' \in \clus{t}$.
  Observe that $d_{\M/\clus{t}}(S', S'')) \leq
  d_{\M_r/\clusr{t}{r}}(\act_r(S'), \act_r(S''))$.
\end{proof}

Finally, Corollary~\ref{cor:one-tree} applied to the instance $\I_r$
shows that
$$ \sum_t \one{r}{t} \cdot \delta_{t,r} \leq 48 \cdot \cost(\Ts_r). $$
Combining the above with Claim~\ref{cl:mergingcost} and
inequality~(\ref{eq:bounddel}), we get

\begin{lemma}
  \label{lem:r}
  The total merging cost of the gluttonous algorithm is at most $96$
  times that of the optimal solution.
\end{lemma}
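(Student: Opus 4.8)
The plan is to assemble three facts already in hand: Claim~\ref{cl:mergingcost}, inequality~(\ref{eq:bounddel}), and Corollary~\ref{cor:one-tree} applied to each restricted instance $\I_r$. First I would observe that by Claim~\ref{cl:mergingcost} the cost the gluttonous algorithm $\A$ pays for the merge in iteration $t$ is at most $\delta_t$, so its total merging cost is at most $\sum_t \delta_t$, the sum running over all iterations of $\A$. It thus suffices to prove $\sum_t \delta_t \leq 96 \cdot \cost(\Fs)$, where $\Fs=\{\Ts_1,\dots,\Ts_p\}$ is the fixed optimal solution.

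Next I would invoke inequality~(\ref{eq:bounddel}), which was obtained by telescoping the charge update rule~(\ref{eq:charge}) (using that the total charge is $0$ at the start and non-positive at the end, when every supernode is inactive): it states $\sum_t \delta_t \leq 2 \sum_t \sum_r \one{r}{t}\,\delta_{t,r}$. The step that makes this meaningful is that whenever $\one{r}{t}=1$ we have $\delta_t = \min_{r'} \delta_{t,r'} \leq \delta_{t,r}$, which is exactly what licenses replacing $\delta_t$ by the (possibly larger) $\delta_{t,r}$ inside the double sum. Exchanging the order of summation gives $\sum_t \delta_t \leq 2 \sum_r \big( \sum_t \one{r}{t}\,\delta_{t,r} \big)$.

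For the inner sum with $r$ fixed, I would note that the projected clusterings $\clusr{t}{r}$ evolve exactly like the generic merging algorithm of Section~\ref{sec:extension} running on the instance $\I_r$: they begin from singletons, inactive projected supernodes never reactivate (so $\delta_{t,r}$ is ascending by Fact~\ref{fact:gen}), and at an iteration $t$ with $\one{r}{t}=1$ the only change is the merge of two active projected supernodes at distance $\delta_{t,r}$, while at iterations with $\one{r}{t}=0$ the clustering $\clusr{t}{r}$ is unchanged. Hence $\sum_t \one{r}{t}\,\delta_{t,r}$ is precisely the quantity controlled by Corollary~\ref{cor:one-tree} for this run on $\I_r$, which has the tree solution $\Ts_r$; the corollary bounds it by $48 \cdot \cost(\Ts_r)$. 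Summing over $r=1,\dots,p$ and using $\cost(\Fs)=\sum_r \cost(\Ts_r)$ yields $\sum_t \delta_t \leq 2\sum_r 48\cdot\cost(\Ts_r) = 96\,\cost(\Fs)$, as required; together with Fact~\ref{fct:can-make-forest} this re-derives the $96$-approximation without appealing to the faithfulness machinery.

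I do not anticipate a genuine obstacle: the subtle points — well-definedness of the projected supernodes $\act_r(S)$, monotonicity of $\delta_{t,r}$, and the charge invariant~(\ref{eq:chargeinv}) underlying~(\ref{eq:bounddel}) — have all been handled in the preceding build-up. The only place meriting care is the identification in the third paragraph: one must check that the ``iterations'' appearing in Corollary~\ref{cor:one-tree} applied to $\I_r$ are exactly those iterations $t$ of $\A$ with $\one{r}{t}=1$, so that the no-op iterations contribute nothing and can simply be discarded.
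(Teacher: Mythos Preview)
Your proposal is correct and mirrors the paper's own argument exactly: combine Claim~\ref{cl:mergingcost}, inequality~(\ref{eq:bounddel}), and Corollary~\ref{cor:one-tree} applied to each restricted instance $\I_r$, then sum over $r$. One minor slip worth noting: the two projected supernodes merged when $\one{r}{t}=1$ need not be at distance exactly $\delta_{t,r}$ (they may be farther apart), but this is immaterial since Corollary~\ref{cor:one-tree} bounds $\sum_t \delta_t$ regardless of which active pair the algorithm chooses to merge.
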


\subsection{A Different Gluttonous Algorithm}

We apply the above result to analyze a slightly different version of the
gluttonous algorithm. As before, the algorithm will maintain a set of
supernodes, denoted by $\clus{t}$, at the beginning of iteration
$t$. Further, it will maintain an edge-weighted graph $\graph{t}$ with
vertex set being $\clus{t}$.  Initially, at time $t=1$, $\graph{1}$ is
just the initial graph $G$ on the vertices (i.e., terminals and Steiner
nodes). In iteration $t$, the algorithm picks the two active supernodes
with smallest distance between them, where distances are measured with
respect to the edge lengths in $\graph{t}$. Let $P_t$ be the
corresponding shortest path (all internal vertices of $P_t$ must be
inactive supernodes).  We contract all the edges of $P_t$ to get the
graph $\graph{t+1}$. The clustering $\clus{t+1}$ will be be same as
$\clus{t}$ except for the fact that the supernodes corresponding to
vertices of $P_t$ will get replaced by a new supernode which will be the
union of all these supernodes.

For the above algorithm, the proof of Theorem~\ref{thm:one-tree} does
not hold. More specifically, consider case \rom{1} in the proof where
$u$ and $v$ happen to be in the same tree in $\Fss$. Now, the algorithm
may contract a path with end-points containing $u$ and $v$
respectively. Further, the internal vertices of this path could
correspond to inactive supernodes, and so, the new supernode will
contain terminals outside this tree. But we cannot pay for the cost of
this path. However, we claim that the analysis of this section still
applies.

\begin{lemma}
  \label{lem:mod-glut}
  The above algorithm is a 96-approximation algorithm for the \stf problem.
\end{lemma}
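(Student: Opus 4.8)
The plan is to show that the path-contraction algorithm slots into the abstract bookkeeping of \autoref{sec:projected}, so that the conclusion of \autoref{lem:r} applies with essentially no change, and then note that the forest output by the algorithm (a maximal acyclic subgraph of $\bigcup_t P_t$) connects all demand pairs and has cost at most $\sum_t \length(P_t)$, so a $96$-bound on the total merging cost gives the lemma. Here a supernode is a set of vertices of $G$; Steiner-only supernodes (in particular the Steiner vertices of the initial graph) are permanently inactive, once a supernode is inactive it stays inactive, and one iteration merges the two active endpoints $S_0,S_k$ of $P_t$ together with the inactive internal supernodes $S_1,\dots,S_{k-1}$ it passes through. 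I would model one such iteration as the single framework merge of $S_0$ and $S_k$: since each $S_j$ ($1\le j\le k-1$) is inactive we have $\alive(S_j)=\emptyset$, and (as for any inactive supernode) $(S_j)_r$ is inactive for every $r$, so absorbing them changes no projected clustering $\clusr{t}{r}$ beyond merging $\act_r(S_0)$ and $\act_r(S_k)$, does not change $\alive(S)$, $n(S)$, or the inequality $n(S)\ge n(S_0)+n(S_k)-2|\alive(S_0)\cap\alive(S_k)|$. We discard the internal supernodes' charges; since these are non-positive, the telescoping used to derive $\sum_t\delta_t\le 2\sum_t\sum_r \one{r}{t}\,\delta_{t,r}$ still works (the extra non-positive terms only help), and \autoref{cl:inv} holds for $S$ by the identical computation. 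Likewise \autoref{fact:gen} (via Claim~\ref{clm:close-mono}) and the application of \autoref{cor:one-tree} to each instance $\I_r$, giving $\sum_t \one{r}{t}\,\delta_{t,r}\le 48\cost(\Ts_r)$, use nothing about how supernodes are merged.

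The one point that genuinely depends on the new merging rule is the analog of \autoref{cl:mergingcost}: the merging cost of iteration $t$ is $\length(P_t)=d_{\graph{t}}(\cdot,\cdot)$ between the two closest active supernodes of $\graph{t}$, and we must see this is at most $\delta_t$. The key observation is $d_{\graph{t}}(u,v)\le d_{\M/\clus{t}}(u,v)$ for all vertices $u,v$: if $u,v$ lie in the same supernode $S\in\clus{t}$ then $S$ was built by a sequence of path contractions, so $u$ and $v$ are joined by a zero-length walk in $\graph{t}$ and $d_{\graph{t}}(u,v)=0$; hence, given a shortest $u$-$v$ path in the graph defining $\M/\clus{t}$, replace each intra-supernode (zero-length) edge by such a zero-length walk and each other edge $\{w,w'\}$ by a $w$-$w'$ path in $\graph{t}$ of length at most $d(w,w')$, proving the inequality. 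Now pick $r$ with $\delta_t=\delta_{t,r}$, realized by active $S',S''\in\clus{t}$ with $\act_r(S'),\act_r(S'')$ at $\M_r/\clusr{t}{r}$-distance $\delta_t$; then
\[
\length(P_t)\;\le\; d_{\graph{t}}(S',S'')\;\le\; d_{\M/\clus{t}}(S',S'')\;\le\; d_{\M_r/\clusr{t}{r}}(\act_r(S'),\act_r(S''))\;=\;\delta_t,
\]
where the last inequality is exactly the estimate in the proof of \autoref{cl:mergingcost} (every zero-length edge of $\M_r/\clusr{t}{r}$ is zero-length in $\M/\clus{t}$). Summing and chaining the inequalities above gives $\sum_t\length(P_t)\le\sum_t\delta_t\le 2\sum_r 48\cost(\Ts_r)=96\cost(\Fs)=96\,\OPT$, which together with the cost bound on the output forest proves \autoref{lem:mod-glut}.

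The main obstacle is precisely this last verification: unlike the basic gluttonous algorithm, the path-contraction variant operates in $\graph{t}$, which is of the unpredictable $\M/E'$ type rather than the well-behaved punctured metric $\M/\clus{t}$, and \autoref{thm:one-tree} fails here because contracting a path can pull terminals from several trees of $\Fs$ into one supernode (this is exactly why we route the argument through the projected-supernode analysis of \autoref{sec:projected} instead of through faithfulness). What rescues the argument is that \autoref{sec:projected} never needs the merging cost to be charged against $d_{\M/\clus{t}}$ itself—it only needs $\length(P_t)\le\delta_t$—and the monotone inequality $d_{\graph{t}}\le d_{\M/\clus{t}}$ (contracting more edges only shrinks distances) supplies exactly that. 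Everything else in \autoref{sec:projected}, being indifferent to the merge rule, then carries over verbatim.
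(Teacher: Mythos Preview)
Your proposal is correct and follows the same approach as the paper: route the analysis through the projected-supernode framework of \autoref{sec:projected}, observe that only the two active endpoints $S_0,S_k$ matter for updating $\clusr{t}{r}$ and the charge, and verify the key bound $\length(P_t)\le\delta_t$. You are more explicit than the paper on two points it glosses over---namely, that the discarded charges of the absorbed inactive internal supernodes are nonpositive so the telescoping in~\eqref{eq:bounddel} still goes through, and the intermediate inequality $d_{\graph{t}}\le d_{\M/\clus{t}}$ used to justify the merging-cost bound---but the structure is the same.
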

\begin{proof}
  Let $\B$ denote the version of the gluttonous algorithm described
  above. It is no longer true that an inactive supernode does not merge
  with any other supernode. In each iteration, the algorithm may choose
  to merge two active supernodes along with several inactive
  supernodes. However, the clusterings $\clusr{t}{r}$ will respect the
  former property.

  The clusterings $\clusr{t}{r}$ will respect the invariants (i) and
  (ii) described above. When the algorithm $\B$ merges active supernodes
  $S'$ and $S''$ (along with possibly other inactive supernodes), and
  the index $r \in \alive(S') \cap \alive(S'')$, we merge $\act_r(S')$
  and $\act_r(S'')$ in the clustering $\clusr{t}{r}$ to get
  $\clusr{t+1}{r}$. It is easy to check that these two invariants are
  satisfied. The rest of the proof proceeds without any change. The key
  property that holds for this algorithm as well is that the merging
  cost (i.e., length of the shortest path) in each iteration is at most
  $\delta_t$ (defined in the proof above).
\end{proof}

%%% Local Variables:
%%% mode: latex
%%% TeX-master: "gluttonous"
%%% End:
}

%\stocoption{
%\input{app-proofs}
%}{}

\end{document}

%%% Local Variables:
%%% mode: latex
%%% TeX-master: t
%%% End: